\newtheorem{theorem}{Theorem}[section]
\newtheorem{lemma}[theorem]{Lemma}
\newtheorem{corollary}[theorem]{Corollary}
\newtheorem{definition}[theorem]{Definition}
\newtheorem{assumption}[theorem]{Assumption}
\newtheorem{example}[theorem]{Example}
\newcommand{\sq}{\hbox{\rlap{$\sqcap$}$\sqcup$}}
\newcommand{\qed}{\hspace*{\fill}\sq}
\newenvironment{proof}{\noindent {\bf Proof.}\ }{\qed\par\vskip 4mm\par}
\begin{document}
\setcitestyle{aysep={},authoryear}
\title{Optimal commitments in auctions with incomplete information}
\author{Zihe Wang \\
   IIIS, Tsinghua University \\
   wangzihe11@mails.tsinghua.edu.cn
   \and
   Pingzhong Tang \\
   IIIS, Tsinghua University \\
   kenshin@tsinghua.edu.cn \\
   }
\maketitle 

\begin{abstract}
We are interested in the problem of optimal commitments in rank-and-bid based auctions, a general class of auctions that include first price and all-pay auctions as special cases. Our main contribution is a novel approach to solve for optimal commitment in this class of auctions, for any continuous type distributions. Applying our approach, we are able to solve optimal commitments for first-price and all-pay auctions in {\em closed-form} for fairly general distribution settings.

The optimal commitments functions in these auctions reveal two surprisingly opposite insights: in the optimal commitment, the leader bids passively when he has a low type. We interpret this as a credible way to alleviate competition and to collude. In sharp contrast, when his type is high enough, the leader sometimes would go so far as to bid above his own value. We interpret this as a credible way to threat. Combing both insights, we show via concrete examples that the leader is indeed willing to do so to secure more utility when his type is in the middle.

Our main approach consists of a series of nontrivial innovations. In particular we put forward a concept called equal-bid function that connects both players' strategies, as well as a concept called equal-utility curve that smooths any leader strategy into a continuous and differentiable strategy.
We believe these techniques and insights are general and can be applied to similar problems.
\end{abstract}







\section{Introduction}

First price auction is one of the most well-known single-item auction formats. In first price auction, bidders simultaneously submit sealed bids to the seller, who sells the item to the highest bidder at his/her bid. The auction enjoys many desirable properties: simple, intuitive and easy to implement. Furthermore, in symmetrical settings, the auction has a unique efficient Bayes Nash equilibrium (BNE)~\cite{chawla2013auctions}. This is in contrast to second price auction that may have many inefficient equilibria (e.g., for bidders types drawn from interval [0,1], one bidder bids 1 and the others bid 0 is a BNE in second price auction).

In the meanwhile, first price auction suffers from several criticisms. For one, in complete information settings, the auction (and its extension {\em generalized first price auction}) sometimes does not have a pure Nash equilibrium and is practically observed to be unstable \cite{edelman2007strategic,edelman2007internet,borgers2013equilibrium}. Perhaps more surprisingly, in incomplete information settings where bidders have asymmetrical type distributions, its Bayes Nash equilibrium is extremely difficult to solve or characterize. In fact, this has been one of the most elusive open problems in the literature of auction analysis \cite{Vic61,lebrun1999first,fibich2011numerical,hartline2014price}. To date, the problem has only been known to have closed-form solutions in very specific settings such as two-bidder asymmetric uniform distributions
\cite{kaplan2012asymmetric,fibich2012asymmetric}.

\subsection{Commitment and related work}
In this paper, we relax the assumption of BNE by considering an alternative solution concept: commitment (aka. Stackelberg equilibrium)~\cite{Stackelberg,leader,commitEC06,extensiveEC10}. In a Stackelberg equilibrium, a leader finds an optimal strategy to commit to, given that a follower knows the leader's committed strategy in advance and best responds to it. Stackelberg equilibrium is particularly useful when one player has credibility to commit. It is well known that commitment weakly increases the leader's utility compared to Nash equilibrium. Furthermore, there are efficient algorithms to compute it in basic settings~\cite{commitEC06,extensiveEC10}. In fact, commitment, together with its use in security domains, has been regarded as one of most appealing applications of game theory over the past decade (see ~\cite{Tambe2011}).

The concept of commitment has been observed in the domain of auction design, even though sometimes implicitly. Note that early bidding and sniping in online auctions (e.g., eBay auctions) can be regarded as two forms of commitment~\cite{Roth2002,Gray2013}. An advertiser that has a ''passive'' image (i.e., rarely changes its bid, or always submits low bids) in sponsored search auctions can be seen as another form of commitment\footnote{Observations based on communication with researcher Yicen Liu from Baidu Inc.}.
Abraham et. al. study the case where there exists a super bidder that has access to more information than others and study how this will affect seller's revenue in an alternative solution concept called tremble robust equilibrium. Their setting is similar to the commitment setting but not the same~\cite{abraham2013peaches}.
Skreta considers another type of commitment where the auctioneer is lack of credibility to reserve the item and studies how this lack of commitment affects revenue~\cite{Skreta}. Our work draws close connections between commitments and collusions in certain auctions.

A closely-related parallel work is the one by Xu and Ligett~\cite{xuyunjian2014}. They characterize optimal commitment for first price auction with complete information. For the incomplete information case, they assume that the bidders' types are drawn from discrete distributions and proof a partial property that the commitment function can be divided into pieces. In comparison, we consider continuous distribution settings and obtain strong/closed-form characterizations for a more general class of auctions. Our technical approach is also different and yields new insights.

\subsection{Our contributions}
In this paper, we study the optimal commitment in {\em rank-and-bid based auctions} with incomplete information (Bayesian setting).
Rank-and-bid based auction~\cite{chawla2013auctions} is a general class of auctions in which bidder's payment is decided by its own rank and bid, including first price and all-pay auctions as special cases. Our main contribution is a general approach to solve and characterize optimal commitment in this class of auctions, for any continuous type distributions. In particular, applying our approach, we are able to solve optimal commitments for first-price and all-pay auctions in {\em closed-form} for fairly general distributions settings. Our approach and results on these auctions in a sense mitigate the difficulties of deriving a game-theoretical prediction in first price and all-pay auctions with asymmetric type distributions. Our approach consists of several nontrivial techniques. We dedicate Section~\ref{sec:app} to introduce the technical contribution (our main contribution). Here, we focus on the economic interpretations of our results.

Our results on first price and all-pay auctions reveal certain striking findings: the leader bids very passively when his type is low. Even worse, he bids 0 when his type is below a threshold (depending on type distributions). This is against the common sense in first price and all pay auctions that bidding 0 has no chance of winning at all. However, a close scrutiny enlightens us otherwise: by committing to a passive image, the leader (credibly) ensures the follower that he has no intention to compete when he has a low type, thus effectively brings down the follower's bid, since the follower does not know the leader's actual type and views the leader as a mixed strategy. As a result, the leader eventually wins the auction with less competition when he has a high type. We also note that such passive bidding behaviors had been observed in major search engines such as Yahoo\footnote{\url{http://webscope.sandbox.yahoo.com/catalog.php?datatype=a}} and Baidu (before they switch to GSP).

Furthermore, the commitment solutions are largely consistent with the collusive behaviors studied in first-price auction~\cite{mcafee1992bidding,marshall2007bidder,aryal2013testing,lopomo2011bidder,pesendorfer2000study}: players coordinates to bring down the prices. Our results further suggest that such collusive behaviors are stable: the trust between the players is built on the rationality of the follower, as well as the leader's credibility to commit.

However, it is important to realize that the leader is not always passive and collusive. We also observe (Corollary~\ref{corollary1}) that the leader sometimes overbids his own valuation! Our insight here is that by placing aggressive bids on high values, the leader creates threat that effectively drives the follower's bid to zero, which guarantees sufficiently high utility for the leader when his type is in the middle.

Finally, note that, even though the leader throws the game with relatively large probability ($78\%$ in one of our examples) when his type is small, his valuation in this case is small as well, as a result, the utility loss is insignificant. In the meanwhile, when his type is high, even the leader suffers from a deficit by overbidding, he does so with small probability ($4\%$ in the same example), the loss in this case is insignificant as well. The losses on both ends are compensated by the gain in the middle.

\section{The settings}

We consider a single item auction with two bidders, one called the {\em leader} $A$ (male) and the other called the {\em follower} $B$ (female). Bidder $A$ has a private valuation $x$ drawn from distribution $F_1$ with support $[a_1, a_2]$, while bidder $B$'s private valuation $y$ is drawn from distribution $F_2$ with support on $[b_1, b_2]$. We use $f_1$ and $f_2$ to denote the probability density of $F_1$ and $F_2$ respectively. We also sometimes write $A=x$ to denote the case where $A$'s type is $x$. Similar for $B=y$.

Leader $A$ commits to a Bayesian strategy $s_A: [a_1,a_2]\rightarrow \triangle R$, where $\triangle R$ denotes the set of bid distributions on $R$. He announces this strategy and the follower $B$ best responds to the leader's committed strategy via a single bid\footnote{It is easy to see that it is never beneficial to use a mixed strategy for $B$.}. We follow the standard definition~\cite{commitEC06} of Bayesian commitment that the leader only announces his strategy, i.e., the function $s_A(\cdot)$, without revealing his actual type. Being able to commit increases utility for the leader compared to the utility in BNE. To make this statement concrete in our setting, consider the following example\footnote{We will use first price auction as a running example throughout the paper}.

\begin{example}
Both players' types are drawn uniformly from $[0,1]$. Let $s_A(v)=v^2/2$. Clearly, the follower $B$ must never bid more than $0.5$ in this case. In fact, her utility when bidding $t\leq 0.5$ is $(y-t)\sqrt{2t}$. $B$'s best strategy is $s_B(y)=y/3$. The expected utility of $A$ is:
$$\int_{0}^{\sqrt{2/3}}(x-\frac12x^2)\cdot \frac32x^2dx + \int_{\sqrt{2/3}}^1(x-\frac12x^2)dx=0.2029$$
The first term consider the case where $x$ is in $[0,\sqrt{2/3}]$, while the second term considers the case where $x$ is in $[\sqrt{2/3},1]$. Compared to the unique symmetric BNE where each bidder bids half of the value, $A$'s expected revenue is $1/6=0.167$. In other words, committing to $s_A$ increases A's utility by $21\%$.
Theorem~\ref{theorem1st} says committing to the optimal strategy generates A's utility $0.22$, increasing by $32\%$.
\end{example}

As mentioned, our goal is to solve for the optimal $s_A$ in a general class of auctions called the {\em rank-and-bid based auctions}. It is worth mentioning that computing optimal commitment in general Bayesian game is NP-hard~\cite{commitEC06}.

\begin{definition}
{\em Rank-and-bid based auctions}
\begin{itemize}
\item Allocation rule: the item is always allocated to the highest bidder.
\item Payment rule: A bidder's payment depends only on its own bid and whether it wins or not.
The payment is $p(b)=p^p(b)$ if the bidder loses with bid $b$.
The payment is $p(b)=p^p(b)+p^w(b)$ if bidder wins with bid $b$.
Here $p^p$ and $p^w$ are differentiable functions representing the agent's payments for participation and winning respectively.
\end{itemize}
\end{definition}

Clearly, both first price and all-pay auctions belong to this class. For first price auction, $p^p(t)=0$ and $p^w(t)=t$; while for all-pay auction, $p^p(t)=t$ and $p^w(t)=0$.

We make a mild assumption that $p^p(b)+p^w(b)$ is strictly increasing, $p^p(0)=p^w(0)=0$, and $p^p,\ p^w$ are differentiable functions. Again, the two auctions mentioned above satisfy the assumption. In this paper we define  $p^w,p^p:R\rightarrow R$. In fact, it does not matter when $p^p$ and $p^w$ take inputs on $R^-$ because $A$ never submits any negative bid. We do so only for ease of exposition.

%


\begin{assumption}
\label{ass1}
When $B$ has multiple best responses, she will choose the one that maximizes $A$'s winning probability, i.e., she will submit the lowest best response.
\end{assumption}

\begin{assumption}
\label{ass2}
When there is a tie, the good will be assigned to the $B$.
\end{assumption}


In fact, our main results do not depend on the assumptions above, as we show formally in the appendix that neither of the assumptions is necessary.




\begin{definition}$P_A[t,s_A]$ denotes $A$'s winning probability when he is at type $t$, bidding $s_A(t)$, while $P_B[t,s_A]$ is $B$'s winning probability when bidding $t$ against $A$'s strategy $s_A$.
In circumstances where there is no ambiguity, we use $P_A[t],P_B[t]$ instead. \label{def2}
\end{definition}

\begin{definition}\label{def3} We use $u_A$ and $u_B$ to denote $A$'s and $B$'s expected utility respectively.
\begin{displaymath}
u_A(s_A) = E_{x\sim F_1} (x\cdot P_A[x,s_A]-p^p(s_A(t))-p^w(s_A(t))\cdot P_A[x,s_A])
\end{displaymath}
while $u_B$ can be conveniently expressed as,
\begin{displaymath}
u_B(y)=\sup_{t\geq 0}(y\cdot P_B[t]-p^p(t)-p^w(t)\cdot P_B[t])
\end{displaymath}
i.e. $u_B(y)$ is the largest utility $B$ can achieve with value $y$.
Though $B$'s valuation is on $[b_1,b_2]$, we extend its definition to $[0, b_2]$, and $F_2[x]=0, x< b_1$.
\end{definition}

Note that we use $\sup$ rather than $\max$ here because we still need to show that $\max$ is always attainable. We do so in Lemma~\ref{lemma0}. We also note that bidding $0$ yields a nonnegative utility for $B$, so $u_B\geq 0$ and $u_B(0)=0$.

\begin{definition}
\label{def4}
Let $S_B(y,s_A)$ denote $B=y$'s best responses against $A$'s strategy $s_A$. In circumstance where there is no ambiguity, we use
 $S_B(y)$ instead.
\end{definition}

We will prove this definition is well-defined in Lemma~\ref{lemma0}, i.e., $S_B(y)$ is nonempty.

\section{Approach Sketch}
\label{sec:app}

As mentioned, the main contribution of this paper is to put forward a general approach for optimal commitments in rank-and-bid based auctions. The approach can be sketched by a few insights and steps. It is useful to understand these insights and steps before we proceed to the details.

\subsection{Difficulties of the problem}

The main difficulty of the problem is that the follower's best response cannot be represented as a closed-form function of the leader's Bayesian strategy (also a function). This difficulty further prevents us to obtain a closed-form representation of the leader's winning probability and utility. As a result, standard functional optimization techniques do not apply. To appreciate the difficulty, it might be helpful to compare to the problem of finding Bayes Nash equilibrium with two asymmetric bidders in first-price auction --- one of the most elusive open problems in the analysis of auctions \cite{kaplan2012asymmetric,hartline2014price}.
The main barrier in that literature is exactly the difficulty to represent one's best response as a concise function of the other's strategy.

Furthermore, the leader's optimal strategy is not necessarily continuous and differentiable, making it difficult to optimize and analyze. In contrast, consider again the literature of Bayes Nash equilibrium in first price auction: standard techniques exist to guarantee the equilibrium strategy to be monotone and differentiable.

Previous work on this problem focuses on different cases of finding optimal commitment in first price auction with complete information
\cite{xuyunjian2014}. They also discuss the Bayesian case where the follower's type is drawn from a discrete distribution. They obtain a partial property of the optimal commitment in the Bayesian case. Their approach and result do not extend to our continuous case.

\subsection{Key innovation: representation via equal-bid function}

A key insight here is to represent everything (the strategies, utilities and winning probabilities of both players) as functions of $g$, coined the {\em equal-bid function}, that maps a leader's type to a follower's type. Intuitively, $g(t)$ is the follower's type at which she submits {\em the same} bid as the leader does when the leader is at type $t$. In other words, $g(t)$, later proved to be monotone, can be seen as a cutoff type between win and lose for the follower. As one can imagine, together with the cumulative distribution function of the follower's type, we can represent the leader's winning probability (hence utility) as a function of $g$. A similar but different idea has appeared in \cite{lebrun1999first} where they use {\em inverse bid function} to represent the best response of one another.

\subsection{Step one: to sort the leader's strategy}

Even with the help of the equal-bid function, we still cannot compute the optimal leader's strategy. The next obstacle is the optimal leader strategy might not be monotone or differentiable. So, our first effort is to sort the leader's strategy. We prove that, for any leader's strategy (optimal or not), one can sort it into a monotone function that preserves the follower's best response without hurting the leader's utility. In other words, the leader always prefers the strategy after sorting.

\subsection{Step two: to smooth the leader's strategy and the equal-utility curve}

The more difficult part is to smooth the leader's strategy, i.e., to turn it into a continuous and differentiable function. To achieve this goal, we introduce the second innovation called {\em equal-utility curve}. Roughly, fix the follower's type, the equal-utility curve represents a leader's strategy such that the follower gets the same utility no matter what she bids. We first show that such curve always exists. Furthermore, the supremum (over all follower's types) of all such curves defines a new leader strategy that enjoys the following important properties: it is continuous, left and right differentiable, preserves the follower's best response, and weakly improves the leader's utility. Till now, one can truly focus on monotone, continuously differentiable leader strategies.

\subsection{Step three: bijection between the leader strategies and equal-bid functions}

It turns out that we can find a bijection between the set of monotone, continuously differentiable leader strategies and the set of continuous and monotone equal-bid functions. As a result, we can restrict attentions to optimize over equal-bid functions and the standard Lagrangian method applies.

 We conclude with a characterization of the optimal commitment for general follower type distributions and compute the closed-form optimal commitments for both all-pay and first-price auctions when the follower has distribution uniform type distributions.

\section{Sort and Smooth the leader's strategy}

We first transform an arbitrary leader strategy into a continuous weakly increasing strategy and
show that this transformation does not hurt the leader's expected utility.

There are some additional properties (e.g., differentiability) for the transformed strategy, which we prove in the next section


We first show that the notion of ``best response'' is well defined for the follower $B$.

\begin{lemma}
\label{lemma0}
For any B's valuation $y$, $B$ has a best response, i.e. $u_B(y)$ can be attained by some bid and the smallest bid exists among the best responses.
\end{lemma}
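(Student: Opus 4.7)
The plan is to recast $B$'s bid-selection problem as maximizing an upper semi-continuous function on a compact bid interval, then invoke the extreme value theorem.

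\emph{Structure of $P_B$.} I would first let $T_A$ denote the random bid induced by $x\sim F_1$ together with the internal randomization of $s_A$. Assumption~\ref{ass2} (ties favor $B$) gives $P_B[t]=\Pr[T_A\le t]$, so $P_B$ is a CDF --- non-decreasing, right-continuous, and bounded in $[0,1]$.

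\emph{Compactification.} Write
\[
U(t):=y\cdot P_B[t]-p^p(t)-p^w(t)\cdot P_B[t]=(y-p^w(t))\,P_B[t]-p^p(t).
\]
Since $p^p(0)=p^w(0)=0$, we have $U(0)=y\,P_B[0]\ge 0$. Because $p^p+p^w$ is strictly increasing from $0$ and grows without bound on the bid range of interest, there exists a finite $M$ with $U(t)<U(0)$ for all $t>M$, so that $u_B(y)=\sup_{t\in[0,M]}U(t)$.

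\emph{Upper semi-continuity on $[0,M]$.} Continuity of $p^p,p^w$ and right-continuity of $P_B$ give right-continuity of $U$. At a left-jump $t_0$ of $P_B$,
\[
U(t_0)-U(t_0^-)=(y-p^w(t_0))\bigl(P_B[t_0]-P_B[t_0^-]\bigr),
\]
which is non-negative whenever $y\ge p^w(t_0)$, so USC holds there. In the opposite regime $y<p^w(t_0)$ one has $U(t_0^-)\le -p^p(t_0)\le 0\le U(0)$, so bidding $0$ weakly dominates the candidate $t_0^-$: such jumps cannot host the supremum. Hence on the subset of bids where the sup could plausibly lie, $U$ is USC.

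\emph{Existence and smallest best response.} The extreme value theorem for USC functions on the compact interval $[0,M]$ yields a maximiser, so $u_B(y)$ is attained. The set $S_B(y)=\{t\in[0,M]:U(t)=u_B(y)\}$ is the pre-image of $\{u_B(y)\}$ under the USC function $U$, hence closed in $[0,M]$; as a non-empty closed subset of a compact interval, it admits a minimum --- the smallest best response.

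\emph{Main obstacle.} The delicate step is verifying USC at a left-jump of $P_B$ with $y<p^w(t_0)$: here the right value of $U$ may strictly drop below its left limit. The escape sketched above, dominating the candidate $t_0^-$ by the bid $0$ via $U(0)\ge 0$, tacitly uses the standard non-negativity $p^p,p^w\ge 0$ on non-negative bids, which should be noted explicitly when writing the proof.
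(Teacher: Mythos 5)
Your proposal is correct and takes essentially the same route as the paper's proof: both arguments rest on the tie-breaking rule making $P_B$ (and hence the objective $U$) upper semicontinuous together with $U(0)\ge 0$ to neutralize bids where $y<p^w(t)$, the paper phrasing this as a convergent-sequence/limit argument and you as the extreme value theorem for USC functions, with the smallest best response obtained in both cases from closedness of the maximizer set. The one caveat is your compactification step, which invokes unboundedness of $p^p+p^w$ --- a property not assumed in the paper (which simply treats the bid domain as bounded), though it does hold for first-price and all-pay auctions.
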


By Assumption~\ref{ass2}, $B$ always chooses the smallest bid among all best responses.

\subsection{Sort $s_A$}

For arbitrary strategy $s_A$, the support of $s_A(v)$ on value $v$ may not be a single bid.
Function $s_A$ could also be nonmonotone.
The following lemma says, to find the optimal commitment, it suffices to consider the strategies with the desirable properties below.
\begin{lemma}For any strategy $s_A$ for A, we can sort it into a new strategy function $\breve{s}_A$ such that (1) $\breve{s}_A(v)$ is a deterministic bid for any $v$ and is weakly increasing in $v$
(2)the best response of $\breve{s}_A$ remains the same as $s_A$
(3) $\breve{s}_A$ yields at least the same utility for the leader as $s_A$.
\label{lemma1}
\end{lemma}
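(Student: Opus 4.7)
The plan is to realize $\breve{s}_A$ as the \emph{monotone rearrangement} of $s_A$ against $F_1$. Concretely, view $s_A$ as inducing a joint distribution on (type, bid) pairs via $v\sim F_1$ and then bid $\sim s_A(v)$, and let $G$ be the resulting marginal CDF on bids. Define
\[
\breve{s}_A(v) \;=\; G^{-1}\bigl(F_1(v)\bigr) \;=\; \inf\{b : G(b)\geq F_1(v)\}.
\]
By construction $\breve{s}_A$ is a deterministic, weakly increasing function of $v$, giving property (1). Moreover, when $v\sim F_1$ the random variable $\breve{s}_A(v)$ has CDF $G$ (standard inverse-CDF transform, with minor care at atoms of $G$), so the marginal distribution of the leader's bid is unchanged.

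For property (2), recall from Definition~\ref{def2} that the follower's winning probability $P_B[t]$ against any leader strategy depends only on the marginal distribution of the leader's bid (it is essentially $\Pr[\text{A's bid}<t]$, using Assumption~\ref{ass2}). Since $s_A$ and $\breve{s}_A$ induce the same marginal $G$, the function $P_B[\cdot]$ is identical under both, so the follower's utility $y\cdot P_B[t] - p^p(t)-p^w(t)P_B[t]$ is identical for every $(y,t)$, and therefore $S_B(y)$ (and the specific element picked by Assumption~\ref{ass1}) is unchanged.

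Property (3) is the substantive step and follows from a rearrangement argument. Because the follower's chosen response $s_B$ is the same under both strategies, define $W(b)=\Pr_{y\sim F_2}[s_B(y)<b]$, the probability that the leader wins when he bids $b$. Since $W$ is weakly increasing in $b$, the leader's expected utility can be split as
\[
u_A \;=\; \mathbb{E}_{x,b}\bigl[x\cdot W(b)\bigr] \;-\; \mathbb{E}_b\bigl[p^p(b)+W(b)\,p^w(b)\bigr],
\]
where the expectation is over the joint of (type, bid) induced by the strategy. The second term depends only on the marginal of $b$, which is preserved. For the first term, both $x\mapsto x$ and $b\mapsto W(b)$ are weakly increasing, so among all couplings of $F_1$ and $G$, the comonotone coupling given by $b=G^{-1}(F_1(x))=\breve{s}_A(x)$ maximizes $\mathbb{E}[x\cdot W(b)]$ by the Hardy--Littlewood--P\'olya rearrangement inequality (or equivalently, by noting that for $x_1<x_2$ and $b_1>b_2$ swapping to $(x_1,b_2),(x_2,b_1)$ changes the sum by $(x_2-x_1)(W(b_1)-W(b_2))\geq 0$). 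Hence $u_A(\breve{s}_A)\geq u_A(s_A)$.

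The main obstacle I anticipate is technical bookkeeping when $G$ has atoms or flats: the identity ``$\breve{s}_A(v)\sim G$ when $v\sim F_1$'' and the rearrangement inequality both need the generalized inverse $G^{-1}$ to be handled carefully, and one must verify that ties in $\breve{s}_A$ do not inadvertently change the pointwise winning probability used in computing $u_A$. A clean way to handle this is to couple $s_A$ and $\breve{s}_A$ on the same probability space through $U=F_1(v)$ and to show the rearrangement inequality at the level of this coupling; everything else then reduces to Fubini and monotonicity of $W$.
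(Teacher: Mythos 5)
Your proposal is correct and follows essentially the same route as the paper's proof: fix the marginal bid distribution, rematch types to bids comonotonically (the paper phrases your $G^{-1}(F_1(v))$ construction as matching the top-$q$ quantile of bids to the top-$q$ quantile of types), observe that the follower's problem and the leader's expected payment depend only on that marginal, and conclude via a rearrangement argument that welfare weakly increases. Your write-up is if anything more careful than the paper's about the generalized inverse and the coupling, but the ideas coincide.
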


From now on, for ease to presentation, we use $s$ to denote $\breve{s}_A$. So $s$ is an weakly increasing, nonnegative strategy function.




The following example shows how to calculate $u_B$ in first price auction.
\begin{example}Both bidders' value distribution are uniform on [0,1].
\begin{displaymath}
s_A(x) = \left\{ \begin{array}{ll}
x/4 & x\leq 0.4\\
x-0.3 & x>0.4
\end{array} \right.
\end{displaymath}
By definition $u_B(y) = \max \{\max_{t\leq 0.1}(y-t)\cdot4t, \max_{t>0.1}(y-t)(t+0.3)\}$, we have
\begin{displaymath}
u_B(y) = \left\{ \begin{array}{ll}
y^2 & y\in[0, 0.2)\\
0.4y-0.04 & y\in[0.2,0.5]\\
(y+0.3)^2/4 & y\in(0.5,1]
\end{array} \right.
\end{displaymath}
 \label{eg1}
\end{example}

The following theorem says the set of best responses of $B$ at type $y_1$ generally does not intersect with the set of best responses at type $y_2$ and these sets are well sorted by the value of $y$, except for the following special case.
\begin{theorem}For any $B$'s valuations
$y_1<y_2$, if $\exists a\in S_B(y_1), b\in S_B(y_2)$ but $a> b$, then
$S_B(y_1)\subseteq S_B(y_2)$, $u_B(y_1)=u_B(y_2)=0$ and $P_B[b]=P_B[a]=0$.
\label{lemma2}
\end{theorem}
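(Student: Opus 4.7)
The plan is to play the two best-response conditions -- one at $y_1$, the other at $y_2$ -- against each other, squeeze them into equalities using the monotonicity properties established by the sorting lemma, and then extract the three conclusions.

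First I would write down that since $a\in S_B(y_1)$, the bid $a$ is at least as good as the bid $b$ at value $y_1$, and since $b\in S_B(y_2)$, the reverse preference holds at value $y_2$:
\begin{align*}
y_1 P_B[a] - p^p(a) - p^w(a) P_B[a] &\geq y_1 P_B[b] - p^p(b) - p^w(b) P_B[b],\\
y_2 P_B[b] - p^p(b) - p^w(b) P_B[b] &\geq y_2 P_B[a] - p^p(a) - p^w(a) P_B[a].
\end{align*}
Adding the two, the payment terms cancel and I obtain $(y_1-y_2)(P_B[a]-P_B[b])\geq 0$, so $y_1<y_2$ forces $P_B[a]\leq P_B[b]$. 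On the other hand, $s=\breve{s}_A$ is weakly increasing by Lemma~\ref{lemma1}, so $P_B[t]$ is weakly increasing in $t$; combined with $a>b$ this yields $P_B[a]\geq P_B[b]$. Hence $P_B[a]=P_B[b]=:P$. Substituting this equality back into either of the original inequalities tightens it into $p^p(a)+P\cdot p^w(a)=p^p(b)+P\cdot p^w(b)$, so the two bids deliver identical utility at every value $y$.

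The delicate step, which I expect to be the main obstacle, is ruling out $P>0$. I would rewrite the payment equality as $(p^p(a)-p^p(b))+P(p^w(a)-p^w(b))=0$ and invoke strict monotonicity of $p^p+p^w$, which yields $(p^p(a)-p^p(b))+(p^w(a)-p^w(b))>0$; subtracting gives $(1-P)(p^w(a)-p^w(b))>0$ whenever $P<1$. Combined with the monotonicity of the individual payment pieces that the rank-and-bid model enforces -- first-price has $p^p\equiv 0$ so $P\cdot a=P\cdot b$ forces $P=0$, and all-pay has $p^w\equiv 0$ so the hypothesis $a>b$ is already untenable unless $P=0$ -- one concludes $P=0$. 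Once $P=0$ is in hand, bidding $a$ yields utility $-p^p(a)$; since the always-available bid $0$ gives utility $0$ and $a$ is a best response, $p^p(a)=0$, whence $u_B(y_1)=0$, and symmetrically $u_B(y_2)=0$.

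For the inclusion $S_B(y_1)\subseteq S_B(y_2)$, take any $c\in S_B(y_1)$: its utility at $y_1$ is $u_B(y_1)=0$, so its utility at $y_2$ equals $(y_2-y_1)P_B[c]\geq 0$. But this must be bounded above by $u_B(y_2)=0$, forcing $P_B[c]=0$; hence $c$'s utility at $y_2$ is $0=u_B(y_2)$, placing $c$ in $S_B(y_2)$ and completing the argument.
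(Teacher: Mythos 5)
Your proposal follows essentially the same route as the paper: the same pair of best-response inequalities, added to give $(y_2-y_1)(P_B[b]-P_B[a])\geq 0$, combined with monotonicity of $P_B$ to force $P_B[a]=P_B[b]=:P$, then the payment identity to force $P=0$, and finally $u_B=0$ from nonnegativity of $u_B$. The one soft spot is your justification that $P=0$ in the \emph{general} rank-and-bid case: after deriving $(1-P)(p^w(a)-p^w(b))>0$ you only fully instantiate the conclusion for first-price and all-pay. The general case needs one more line that you have all the ingredients for: since $p^p$ and $p^w$ are individually weakly increasing, both terms of $(p^p(a)-p^p(b))+P(p^w(a)-p^w(b))=0$ are nonnegative and hence zero; your strict inequality gives $p^w(a)>p^w(b)$, so $P=0$. (The paper is equally terse here, writing only ``Because $p^p(a)+p^w(a)>p^p(b)+p^w(b)$, it must be $P_B[b]=P_B[a]=0$.'') Your argument for $S_B(y_1)\subseteq S_B(y_2)$ --- writing the utility of any $c\in S_B(y_1)$ at value $y_2$ as $u_B(y_1)+(y_2-y_1)P_B[c]$ and squeezing against $u_B(y_2)=0$ --- is a slightly more direct variant of the paper's, which instead propagates $P_B[c]=p^p(c)=0$ to all of $S_B(y_1)$ via monotonicity of $P_B$ and $p^p$; both are fine.
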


We can now prove that the follower's utility is continuous an monotone.

\begin{lemma}
$u_B(y)$ is continuous and weakly increasing.\label{lemma3}
\end{lemma}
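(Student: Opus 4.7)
My plan is to recognize $u_B(y)$ as the pointwise supremum of a family of affine functions of $y$ with slopes in $[0,1]$, which gives both monotonicity and continuity essentially for free. Concretely, for each fixed bid $t \geq 0$, the map $y \mapsto y\cdot P_B[t] - p^p(t) - p^w(t)\cdot P_B[t]$ is affine in $y$ with slope $P_B[t]$, and by definition a probability we have $P_B[t] \in [0,1]$. A pointwise supremum of such affine functions is automatically nondecreasing (since all slopes are nonnegative) and $1$-Lipschitz (since all slopes are bounded by $1$), and a $1$-Lipschitz function is continuous.

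To make this rigorous, I would fix $y_1 < y_2$ in $[0, b_2]$ and use Lemma~\ref{lemma0} to select $t_i \in S_B(y_i)$ attaining $u_B(y_i)$ for $i = 1, 2$. Plugging the suboptimal bid $t_1$ into the definition of $u_B(y_2)$ gives
\begin{displaymath}
u_B(y_2) \;\geq\; y_2 P_B[t_1] - p^p(t_1) - p^w(t_1) P_B[t_1] \;=\; u_B(y_1) + (y_2 - y_1) P_B[t_1] \;\geq\; u_B(y_1),
\end{displaymath}
which establishes that $u_B$ is weakly increasing. Symmetrically, plugging $t_2$ into the definition of $u_B(y_1)$ yields $u_B(y_1) \geq u_B(y_2) - (y_2 - y_1) P_B[t_2]$, so
\begin{displaymath}
0 \;\leq\; u_B(y_2) - u_B(y_1) \;\leq\; (y_2 - y_1) P_B[t_2] \;\leq\; y_2 - y_1,
\end{displaymath}
which is the $1$-Lipschitz bound and hence continuity on $[0,b_2]$.

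I do not expect substantial obstacles here: the only nontrivial ingredient is the attainability of the supremum defining $u_B$, which is exactly what Lemma~\ref{lemma0} provides. Without it one could still run the argument with $\varepsilon$-optimal bids and pass to the limit, but attainability lets us avoid that detour entirely. Continuity at the boundary points $y=0$ and $y=b_2$ is handled uniformly by the global $1$-Lipschitz estimate, so no separate endpoint argument is needed.
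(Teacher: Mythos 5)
Your proof is correct and takes essentially the same approach as the paper: use Lemma~\ref{lemma0} to pick optimal bids $t_1 \in S_B(y_1)$ and $t_2 \in S_B(y_2)$, then plug the ``wrong'' bid into the other type's utility to get the two one-sided bounds. The only difference is cosmetic --- the paper's monotonicity step does an unnecessary case split on $u_B(y_1)=u_B(y_2)=0$ and invokes Theorem~\ref{lemma2} even though the ordering $b_1 \le b_2$ is never used in the chain of inequalities, whereas your affine-supremum framing sidesteps that cleanly.
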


\subsection{Smooth $s_A$}

To smooth $s_A$ into a continuous and differentiable function, we now introduce an important innovation of our approach: the equal-utility curve.

\begin{definition}
\label{def5}Define equal-utility curve $eu_B(\cdot,\cdot)$: $[0,b_2] \times (a_1,a_2] \rightarrow \mathbb{R}$,
such that $eu_B(y,x)$ is the only solution (solve for $t$) of
\begin{eqnarray}
u_B(y)=F_1[x]y-p^w(t)F_1[x]-p^p(t) \ x\in(a_1,a_2]\label{eqa5}
\end{eqnarray}
\end{definition}
The interpretation of $eu_B(y,\cdot)$ is that, any value of $eu_B(y,\cdot)$ (as a function of $x$) is a best response of the follower at type $y$.

Consider again Example~\ref{eg1}, where in first price auction, $F_1[x]=F_2[x]=x \ \forall x\in[0,1]$. In first price auction, the definition of $eu_B$ above is simplified as
$$eu_B(y,x)=y-\frac{u_B(y)}{x}$$
When $B$'s value $y=0.5$, her best response is to bid $0.1$ with utility $0.16$.
So $u_B(0.5)=0.16$, and EU curve is $eu_B(0.5,x)=0.5-\frac{0.16}{x}$, shown in Fig~\ref{fig:1}.
If $A$ uses strategy
\begin{displaymath}
\max\{eu_B(0.5,x),0\} = \left\{ \begin{array}{ll}
0 & x\in[0, 0.32)\\
0.5-\frac{0.16}{x} & x\in[0.32,1]
\end{array} \right.
\end{displaymath}
then the utility of $B$ when $y=0.5$ is the same for any bid in $[0,0.34]$.

\begin{figure}[ht]
  \centering
  \includegraphics[width=4cm]{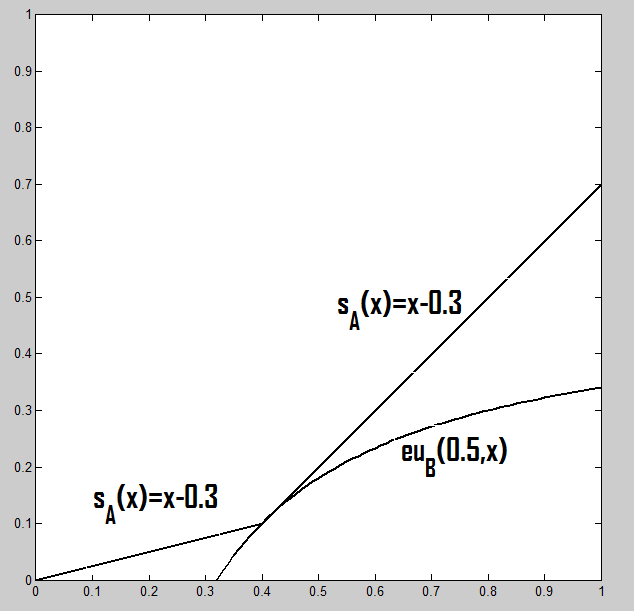}\\
  \caption{Equal-Utility(EU) curve. When $y=0.5$, the utility of bidding $0$ or $0.3$ are always $0.16$.}
  \label{fig:1}
\end{figure}

First of all, function $eu$ is well defined. Given $x,y$, when $t$ increase, the right hand side of Equation~(\ref{eqa5}) strictly decreases( $F_1[x]$ is positive). So there exists a unique solution (maybe negative). Function $eu$ represents the leader's strategy against which the follower will achieve the same largest utility no matter what the follower bids.
It's easy to check that $eu_B(0,\cdot)=0$.

The following lemma is a technical gadget,
which prepares us to prove that $eu$ function is weakly increasing and differentiable.
It's also used in the proof of Theorem~\ref{theorem8}.
\begin{lemma}
Given $a>0$, $(p^w)'$ and $(p^p)'$ are nonnegative, $(p^w)'+(p^p)'>0$. If $t(a,b)$ satisfies
\begin{eqnarray}
b+ap^w(t)+p^p(t)=0\nonumber
\end{eqnarray}
\label{lemma3.25}
we can prove $t(a,b)$ exists, $\frac{\partial t}{\partial a}(a,b)$ and $\frac{\partial t}{\partial b}(a,b)$ are continuous. In particular
\begin{displaymath}
\frac{\partial t}{\partial a}(a,b)=\frac{-p^w(t)}{a(p^w)'(t)+(p^p)'(t)} \qquad \frac{\partial t}{\partial b}(a,b)=\frac{-1}{a(p^w)'(t)+(p^p)'(t)}
\end{displaymath}
\end{lemma}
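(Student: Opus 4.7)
The plan is to recast the defining equation as the zero set of a $C^1$ function and apply the implicit function theorem. Set
\[
G(t,a,b) \;:=\; b + a\,p^w(t) + p^p(t),
\]
which is $C^1$ on $\mathbb{R}\times(0,\infty)\times\mathbb{R}$ since $p^w$ and $p^p$ are differentiable by hypothesis. Then $t(a,b)$ is exactly the function implicitly defined by $G(t,a,b)=0$.

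First I would handle existence and uniqueness of $t(a,b)$. For fixed $a>0$,
\[
\frac{\partial G}{\partial t}(t,a,b) \;=\; a\,(p^w)'(t)+(p^p)'(t) \;>\; 0,
\]
because at every $t$ at least one of $(p^w)'(t),(p^p)'(t)$ is strictly positive (their sum is), while both are nonnegative and $a>0$; so the weighted combination is strictly positive. Thus $G(\cdot,a,b)$ is strictly increasing in $t$, which already gives uniqueness of any zero. Existence follows from the global assumption (stated earlier, in Section~2) that $p^w+p^p$ is strictly increasing with $p^w(0)=p^p(0)=0$: the continuous image of $t\mapsto a\,p^w(t)+p^p(t)$ covers the relevant range of $-b$ whenever the lemma is actually invoked elsewhere in the paper.

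For the differentiability claim, since $G$ is $C^1$ and $\partial_t G>0$ everywhere on the domain, the implicit function theorem produces a $C^1$ function $t(a,b)$ satisfying $G(t(a,b),a,b)=0$, with partial derivatives
\[
\frac{\partial t}{\partial a} \;=\; -\frac{\partial_a G}{\partial_t G} \;=\; \frac{-p^w(t)}{a(p^w)'(t)+(p^p)'(t)}, \qquad \frac{\partial t}{\partial b} \;=\; -\frac{\partial_b G}{\partial_t G} \;=\; \frac{-1}{a(p^w)'(t)+(p^p)'(t)}.
\]
Continuity of these partials is then immediate: numerator and denominator are continuous functions of $(a,b)$ (using continuity of $t(a,b)$ itself), and the denominator never vanishes for $a>0$. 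A reader who prefers to avoid invoking the implicit function theorem can instead differentiate the identity $b+a\,p^w(t(a,b))+p^p(t(a,b))=0$ directly with respect to $a$ and $b$ and solve the resulting one-variable linear equations to obtain the same formulas.

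The hard part, such as it is, is purely bookkeeping: carefully upgrading the weak-plus-strict hypotheses $(p^w)',(p^p)'\ge 0$ and $(p^w)'+(p^p)'>0$ to the strict inequality $\partial_t G>0$ needed for the implicit function theorem, and checking that the range of $b$ relevant elsewhere in the paper lies in the image of $-(ap^w+p^p)$. Once these points are in hand, existence, uniqueness, differentiability, continuity of the partials, and the explicit formulas all follow from a single application of the implicit function theorem.
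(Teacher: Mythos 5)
Your proof is correct in substance but takes a genuinely different route from the paper. The paper proceeds by hand: it shows $t(a,b)$ is continuous via a jump/contradiction argument, then proves differentiability directly from the definition with a two-sequence argument (this step uses only that $p^w,p^p$ are differentiable, not that their derivatives are continuous), and only at the very end invokes continuity of $(p^w)'$ and $(p^p)'$ to conclude continuity of the partials. You instead define $G(t,a,b)=b+a\,p^w(t)+p^p(t)$ and apply the implicit function theorem once, getting existence (local), uniqueness, $C^1$ regularity, and the formulas all in one shot. Your approach is cleaner and more standard; the paper's approach is more elementary and gets differentiability under weaker hypotheses before upgrading at the end.

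One imprecision worth flagging: you assert that $G$ is $C^1$ ``since $p^w$ and $p^p$ are differentiable by hypothesis.'' Mere differentiability does not give $C^1$; you need $(p^w)'$ and $(p^p)'$ to be \emph{continuous} for the implicit function theorem to apply. The paper does implicitly assume this (its last line reads ``Since $t$, $(p^w)'$ and $(p^p)'$ is continuous, $\ldots$,'' and Theorem~\ref{lemma9} makes the assumption explicit), and the lemma's conclusion about continuity of the partials requires it anyway, so the assumption is warranted — but you should state it rather than deriving it incorrectly. Your treatment of existence (appealing to the range of $a\,p^w+p^p$ covering $-b$ in all invocations) is about as informal as the paper's own one-line ``strictly increases, so solution $t$ exists,'' so that is acceptable.
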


\begin{lemma}
(1)$eu_B(y,x)$ is weakly increasing and differentiable in $x$.
$eu_B(y,x)$ is continuous in $y$.
(2)$eu_B(y,x)\leq s_A(x)$
\label{lemma3.5}
\end{lemma}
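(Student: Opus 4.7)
The plan is to use Lemma~\ref{lemma3.25} as a black box by rewriting the defining equation as
$$\bigl(u_B(y)-F_1[x]y\bigr)+F_1[x]\,p^w(t)+p^p(t)=0$$
and setting $a=F_1[x]$, $b=u_B(y)-F_1[x]y$ so that $t(a,b)=eu_B(y,x)$. Because $x\in(a_1,a_2]$ we have $a=F_1[x]>0$, so Lemma~\ref{lemma3.25} applies. Differentiability of $eu_B(y,\cdot)$ in $x$ then follows from the chain rule, since $F_1$ is differentiable with density $f_1$ and $(a,b)\mapsto t(a,b)$ is $C^1$. Continuity of $eu_B(\cdot,x)$ in $y$ follows because $u_B$ is continuous by Lemma~\ref{lemma3}, hence $b$ is continuous in $y$, while $t(a,b)$ is continuous in its arguments.

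For weak monotonicity in $x$, I would apply the explicit partial derivatives from Lemma~\ref{lemma3.25} and the chain rule to obtain
$$\frac{\partial eu_B}{\partial x}(y,x)=\frac{f_1[x]\bigl(y-p^w(eu_B(y,x))\bigr)}{F_1[x]\,(p^w)'(eu_B)+(p^p)'(eu_B)}.$$
The denominator is strictly positive: $F_1[x]>0$, $(p^w)',(p^p)'\geq 0$, and $(p^w)'+(p^p)'>0$, so whichever of $(p^w)',(p^p)'$ is positive at $eu_B$ makes the sum positive. For the sign of the numerator I would split on $eu_B$: if $eu_B\geq 0$, then $p^p(eu_B)\geq 0$ and rearranging the defining equation gives $F_1[x](y-p^w(eu_B))=u_B(y)+p^p(eu_B)\geq 0$; if $eu_B<0$, weak monotonicity of $p^w$ gives $p^w(eu_B)\leq p^w(0)=0\leq y$. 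Either way $y\geq p^w(eu_B)$, so $\partial eu_B/\partial x\geq 0$.

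For part (2), the idea is that at type $y$ the follower can always deviate to bidding $s_A(x)$; since $s_A$ is weakly increasing (by Lemma~\ref{lemma1}) and ties favor $B$ under Assumption~\ref{ass2}, the winning probability satisfies $P_B[s_A(x)]\geq F_1[x]$. I would then show
$$u_B(y)\geq F_1[x]y-F_1[x]\,p^w(s_A(x))-p^p(s_A(x))$$
by splitting on the sign of $y-p^w(s_A(x))$: when nonnegative, this inequality is a direct consequence of $P_B[s_A(x)]\geq F_1[x]$; when negative, it follows from $u_B(y)\geq 0$ combined with $p^p(s_A(x))\geq 0$. Equating with the defining identity $u_B(y)=F_1[x]y-F_1[x]p^w(eu_B)-p^p(eu_B)$ produces
$$F_1[x]\,p^w(s_A(x))+p^p(s_A(x))\geq F_1[x]\,p^w(eu_B(y,x))+p^p(eu_B(y,x)),$$
and strict monotonicity of $t\mapsto F_1[x]p^w(t)+p^p(t)$ (same derivative-sign argument as above) forces $eu_B(y,x)\leq s_A(x)$.

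The main obstacle I expect is the sign-of-$eu_B$ case split in the monotonicity argument: because $p^w$ need not be strictly increasing (it vanishes identically in all-pay auctions) and $eu_B$ is not obviously nonnegative, one cannot derive $y\geq p^w(eu_B)$ in a single stroke. Part (2) requires an analogous but easier case split driven by whether the follower's bid $s_A(x)$ would be profitable at valuation $y$. Once these case analyses are in place, the rest reduces to mechanical manipulation of Lemma~\ref{lemma3.25} and the defining equation.
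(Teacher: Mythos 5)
Your proposal is correct and follows essentially the same route as the paper: Lemma~\ref{lemma3.25} plus the chain rule yields the explicit formula for $\partial eu_B/\partial x$ and continuity in $y$ via continuity of $u_B$, and part (2) is the same deviation-to-$s_A(x)$ argument. If anything you are more careful than the paper, which writes down the derivative but never explicitly verifies that the numerator $f_1(x)\bigl(y-p^w(eu_B)\bigr)$ is nonnegative, and which asserts $u_B(y)\geq (y-p^w(s_A(x)))F_1[x]-p^p(s_A(x))$ without the sign case split you supply.
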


We are now ready to introduce the smooth method:create the $s^*_A$ using $eu_B$.
\begin{definition}$s^*_A(x)=\sup_{y\in[0,b_2]}eu_B(y,x), \forall x\in(a_1,a_2]$\label{def6}
\end{definition}

We will prove that strategy $s^*_A$ yields at least the same revenue for $A$ as strategy $s$. The outline is following.
First, we prove that, though the leader's bid distribution changes, we keep the utility of the follower remain the same.
Second, we prove the best response of $B$ is still the best response after smoothing, for any follower's value (Lemma~\ref{lemma6}).
Third, we prove the leader's winning probability does not change.
At last, combined the fact that leader's bid is weakly decreasing, we prove that leader's utility weakly increases after smoothing.(Theorem~\ref{lemma7})

An mathematical view of the motivation of $s^*_A(x)$ can be found in Remark 2.
The idea is to suppress the bids of the leader while maintain his winning probability.

Consider again Example~\ref{eg1},
\begin{displaymath}
s^*_A(x) = \left\{ \begin{array}{ll}
x/4 & y\in[0, 0.4)\\
x-0.3 & y\in[0.4,0.65]\\
1-\frac{0.65^2}{x} & x\in(0.65,1]
\end{array} \right.
\end{displaymath}
 \label{eg2}

We now prove some basic properties of $s_A^*$ that will be used later.
\begin{lemma}
\label{lemma4}
(1) For any $x\in(a_1,a_2]$, $(x,s_A^*(x))$ must lie on some EU curve.

(2) When $s^*_A(x)>\lim_{t\rightarrow a_1}s^*_A(t)$, $s_A^*(x)$ strictly increases.

(3) When $s^*_A(x)=\lim_{t\rightarrow a_1}s^*_A(t)$, $(x,s^*_A(x))$ lies on $eu_B(p^w(s^*_A(x)),\cdot)$, and $u_B(p^w(s^*_A(x)))=0$, $p^p(p^w(s^*_A(x)))=0$.

(4) For any $x$, we have $s^*_A(x)\leq s_A(x)$.

(5) $s^*_A(x)$ is continuous.\footnote{The limitation of continuous may be non continuous, so this argument is not trivial. Consider $y_k(x)=kx, x\in[0,1/k]$ and constant zero for $x\leq 0$ and constant one for $x\geq 1/k$. $y_k$ is continuous but $sup_ky_k$ is not.}
\end{lemma}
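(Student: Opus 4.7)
The plan is to represent $s^*_A$ as the sup $\sup_{y\in[0,b_2]}eu_B(y,x)$ and exploit both the continuity statements of Lemma~\ref{lemma3.5} and the explicit derivative formula of Lemma~\ref{lemma3.25}. I would dispose of parts (1), (4), and (5) first---they are soft arguments---and then prove (2) and (3) together via a single structural dichotomy about when an EU curve is flat.

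Part (4) is immediate from Lemma~\ref{lemma3.5}(2): $eu_B(y,x)\le s_A(x)$ for every $y$, so the $\sup$ inherits the bound. For part (1), $eu_B(\cdot,x)$ is continuous on the compact set $[0,b_2]$ by Lemma~\ref{lemma3.5}(1), hence the sup is attained at some $y^*(x)$. For part (5), I would first observe that $eu_B$ is jointly continuous in $(y,x)$: it is the composition of the $C^1$ map $t(a,b)$ from Lemma~\ref{lemma3.25} with the continuous maps $x\mapsto F_1[x]$ and $(y,x)\mapsto u_B(y)-F_1[x]y$ (using continuity of $u_B$ from Lemma~\ref{lemma3}). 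The standard ``sup over a compact fiber'' argument then yields continuity of $s^*_A$: lower semicontinuity is free, and upper semicontinuity follows by extracting a convergent subsequence of sup-attainers $y_n$ as $x_n\to x$ and invoking joint continuity. I would also record at this stage that $s^*_A$ is weakly increasing (each $eu_B(y,\cdot)$ is, by Lemma~\ref{lemma3.5}(1), and the sup inherits this), so with $c:=\lim_{t\to a_1^+}s^*_A(t)$ the flat region $\{x\in(a_1,a_2]:s^*_A(x)=c\}$ is an interval of the form $(a_1,x_0]$, possibly empty.

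The heart of the proof is the following dichotomy for any sup-attainer $y^*$ at $x$, writing $t:=s^*_A(x)=eu_B(y^*,x)$. Implicit differentiation of (\ref{eqa5}) via Lemma~\ref{lemma3.25} yields $\partial_x eu_B(y^*,x)=f_1(x)(y^*-p^w(t))/[F_1[x](p^w)'(t)+(p^p)'(t)]$, which has the same sign as $y^*-p^w(t)$; and $y^*\ge p^w(t)$ always, since $F_1[x](y^*-p^w(t))=u_B(y^*)+p^p(t)\ge 0$. Hence either \textbf{(i)} $y^*>p^w(t)$ and the EU curve strictly increases at $x$, or \textbf{(ii)} $y^*=p^w(t)$; in the latter case plugging back into (\ref{eqa5}) gives $u_B(y^*)=-p^p(t)$, and $u_B\ge 0$ together with $p^p\ge 0$ forces $u_B(y^*)=p^p(t)=0$, after which the constant function $t$ solves (\ref{eqa5}) for every $x'\in(a_1,a_2]$, so globally $eu_B(y^*,\cdot)\equiv t$.

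For part (2), if $x_1<x_2$ both satisfy $s^*_A>c$ but $s^*_A(x_1)=s^*_A(x_2)=t$, then weak monotonicity makes $eu_B(y^*,\cdot)$ (for $y^*$ the sup-attainer at $x_1$) constant on $[x_1,x_2]$, ruling out branch (i); branch (ii) then forces $eu_B(y^*,\cdot)\equiv t$ globally, whence $c\ge t$, contradicting $t>c$. For part (3), at $x$ in the interior $(a_1,x_0)$ of the flat region, picking $x'\in(x,x_0)$ rules out (i) by the same argument, so every sup-attainer must satisfy $y^*=p^w(c)$, $u_B(p^w(c))=0$, $p^p(c)=0$, and $(x,s^*_A(x))$ lies on $eu_B(p^w(c),\cdot)$; at the right endpoint $x=x_0$, the global identity $eu_B(p^w(c),\cdot)\equiv c$ obtained from branch (ii) still applies, so $p^w(c)$ remains a sup-attainer there. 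The step I expect to be most delicate is establishing the ``global rigidity'' in branch (ii)---that the EU curve is constant on the full domain $(a_1,a_2]$, not merely locally---since this global conclusion is precisely what powers both the contradiction in (2) and the boundary case in (3).
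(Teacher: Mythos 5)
Your proposal is correct and, for parts (1)--(4), follows essentially the same route as the paper: the sup is attained by continuity/compactness, and the two-point (equivalently, zero-derivative) evaluation of Equation~(\ref{eqa5}) on a flat stretch forces $y^*=p^w(t)$, $u_B(y^*)=0$, $p^p(t)=0$, after which the EU curve is globally constant --- exactly the ``rigidity'' the paper uses to conclude both (2) and (3). The only divergence is part (5): the paper argues directly by positing a jump of size $d$ and deriving a contradiction from Equation~(\ref{eqa5}) as $\epsilon\to 0$, whereas you invoke joint continuity of $eu_B$ on $[0,b_2]\times(a_1,a_2]$ plus compactness of the $y$-fiber (a Berge-maximum-theorem argument); both rest on the same underlying facts and your version cleanly addresses the pitfall flagged in the paper's footnote.
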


For simplicity, we use $P^*_B[t]$ and $S^*_B(y)$ instead of $P_B[t,s^*_A]$ and $S_B(y,s^*_A)$.
Next, we study how the follower's utility and best response would change in $s$ and $s^*_A$.
\begin{lemma}
\label{lemma6}
(1) When $A$'s strategy $s$ is changed to $s^*_A$, $u_B(y)$ remains the same.

(2) If $t\in S_B(y)$ then $t\in S^*_B(y)$,
$S_B(y)\subseteq S^*_B(y)$.
If $P^*_B[t]\neq P_B[t]$ then $u_B(y)= 0$ and $t=\lim_{x\rightarrow a_1}s^*_A(x)$
\end{lemma}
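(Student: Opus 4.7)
The plan rests on a single algebraic identity. For any bid $t$,
\begin{align*}
\mathrm{util}(t,s^*_A) - \mathrm{util}(t,s_A) &= P^*_B[t](y-p^w(t)) - p^p(t) - \bigl[P_B[t](y-p^w(t)) - p^p(t)\bigr] \\
&= (P^*_B[t] - P_B[t])\,(y - p^w(t)),
\end{align*}
and by Lemma~\ref{lemma4}(4) we have $s^*_A(x) \leq s_A(x)$ for every $x$, so $P^*_B[t] \geq P_B[t]$ unconditionally. This identity drives both parts.

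For Part~(1), I prove $u^*_B(y) = u_B(y)$ by two inequalities. For $u^*_B(y) \geq u_B(y)$, pick any $t_0 \in S_B(y)$. If $u_B(y) > 0$, then $u_B(y) = P_B[t_0](y - p^w(t_0)) - p^p(t_0) > 0$ forces $y > p^w(t_0)$; evaluating $t_0$ against $s^*_A$ and using $P^*_B[t_0] \geq P_B[t_0]$ together with $y - p^w(t_0) > 0$ gives utility at least $u_B(y)$. If $u_B(y) = 0$, observe that bidding $0$ against $s^*_A$ earns $y\,P^*_B[0] \geq 0$. For the reverse inequality, fix any bid $t$ and set $x^* := \sup\{x : s^*_A(x) \leq t\}$; continuity of $s^*_A$ (Lemma~\ref{lemma4}(5)) gives $s^*_A(x^*) \leq t$ and $P^*_B[t] = F_1[x^*]$, and weak monotonicity of $p^w,p^p$ with $t \geq s^*_A(x^*)$ bounds the utility of $t$ against $s^*_A$ above by $F_1[x^*](y - p^w(s^*_A(x^*))) - p^p(s^*_A(x^*))$. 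Now use $s^*_A(x^*) \geq eu_B(y,x^*)$ (immediate from Definition~\ref{def6}) and monotonicity once more to bound this by $F_1[x^*](y - p^w(eu_B(y,x^*))) - p^p(eu_B(y,x^*))$, which equals $u_B(y)$ by the defining equation of $eu_B$ in Definition~\ref{def5}. Taking the supremum over $t$ gives $u^*_B(y) \leq u_B(y)$.

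For Part~(2), fix $t \in S_B(y)$. If $P^*_B[t] = P_B[t]$, the identity yields $\mathrm{util}(t,s^*_A) = \mathrm{util}(t,s_A) = u_B(y) = u^*_B(y)$, so $t \in S^*_B(y)$ and the second conclusion is vacuous. Suppose instead $P^*_B[t] > P_B[t]$. Since $\mathrm{util}(t,s^*_A) \leq u^*_B(y) = u_B(y)$, the identity forces $y \leq p^w(t)$; combining with $t \in S_B(y)$,
$$u_B(y) = P_B[t](y - p^w(t)) - p^p(t) \leq 0,$$
so $u_B(y) = 0$, $p^p(t) = 0$, and if $y < p^w(t)$ strictly then also $P_B[t] = 0$. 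To prove $t = t^* := \lim_{x \to a_1} s^*_A(x)$, exclude both strict cases. Using $s^*_A(x) \geq t^*$ for all $x$ (from continuity and Lemma~\ref{lemma4}(2)), a bid $t < t^*$ gives $P^*_B[t] = 0 = P_B[t]$, contradicting strictness. For $t > t^*$, since $p^p(t) = 0$ and $p^p$ is nonnegative and weakly increasing, $p^p \equiv 0$ on $[0,t]$; the hypothesis that $p^w + p^p$ is strictly increasing then forces $p^w$ strictly increasing on $[0,t]$. Choose $t_y \in [0,t]$ with $p^w(t_y) = y$; for every $t' \in (t_y,t)$, $\mathrm{util}(t',s^*_A) = P^*_B[t'](y - p^w(t')) \leq 0$, and since $u^*_B(y) = 0$ by Part~(1) this utility must actually be $0$, forcing $P^*_B[t'] = 0$. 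This contradicts continuity of $P^*_B$ at $t$ (which holds for $t > t^*$ because $s^*_A$ is continuous and $F_1$ has a density) together with $P^*_B[t] > 0$. Hence $t = t^*$; a direct check using Lemma~\ref{lemma4}(3) (which gives $p^p(t^*) = 0$ and $u_B(p^w(t^*)) = 0$, and pins $y$ down as $y = p^w(t^*)$ in the remaining regime) confirms $\mathrm{util}(t^*, s^*_A) = 0 = u^*_B(y)$, so $t \in S^*_B(y)$.

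The main obstacle is the final step showing $t = t^*$: everything else is either the algebraic identity or a monotonicity-plus-definition unwinding. The $t = t^*$ argument is the only place where the specific structural hypotheses of rank-and-bid auctions genuinely enter, via the strict monotonicity of $p^w + p^p$ and the continuity of $P^*_B$; rolling these together with Lemma~\ref{lemma4}(3) is what allows the two conclusions in Part~(2) to be extracted simultaneously.
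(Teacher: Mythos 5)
Parts (1) and the first half of Part (2) are correct and essentially reproduce the paper's argument: the sandwich $eu_B(y,\cdot)\leq s^*_A\leq s_A$ gives $u^*_B=u_B$, and the sign of $(P^*_B[t]-P_B[t])(y-p^w(t))$ handles membership in $S^*_B(y)$ in the case $P^*_B[t]=P_B[t]$.

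The gap is in your exclusion of $t>t^*$. From ``$\mathrm{util}(t',s^*_A)=P^*_B[t'](y-p^w(t'))\leq 0$ and $u^*_B(y)=0$'' you conclude that this utility ``must actually be $0$, forcing $P^*_B[t']=0$.'' That is a non-sequitur: an arbitrary bid $t'$ need not be a best response, so nothing pushes its utility up to the supremum; a strictly negative value is perfectly consistent with $u^*_B(y)=0$. Worse, the statement you derive is false: for every $t'>t^*$, continuity of $s^*_A$ with $\lim_{x\to a_1}s^*_A(x)=t^*<t'$ places positive $F_1$-mass of the leader's bids below $t'$, so $P^*_B[t']>0$. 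The ``contradiction'' you then reach is with your own invalid intermediate claim, not with the hypothesis $t>t^*$. The paper's argument runs in the opposite direction: it first extracts the exact equality $y=p^w(t)$ (by asserting that both inequalities in $u_B(y)=\mathrm{util}(t,s_A)\leq\mathrm{util}(t,s^*_A)\leq u_B(y)$ are equalities), and then, assuming $t>t^*$, exhibits a \emph{profitable} deviation: a bid $\tilde t\in(t^*,t)$ with $p^p(\tilde t)=0$, $p^w(\tilde t)<y$ and $P^*_B[\tilde t]>0$, hence strictly positive utility, contradicting $u_B(y)=0$. Your derivation only yields $y\leq p^w(t)$, with equality guaranteed only when $P_B[t]>0$; in the leftover sub-case $P_B[t]=0$ and $y<p^w(t)$ you have no control on $y$ relative to $p^w(t^*)$ and cannot manufacture such a positive-utility bid (and indeed, for sure-losing best responses --- e.g.\ $s_A\equiv c$ with $y<c$, where every $t<c$ is a best response --- the conclusion $t=t^*$ can genuinely fail for non-minimal elements of $S_B(y)$). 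To close the gap you must first recover $y=p^w(t)$, or otherwise dispose of the sub-case $P_B[t]=0$, and then argue via a profitable deviation just above $t^*$ rather than via unprofitable bids just below $t$.
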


The lemma below draws connections between equal-utility curve and $s^*_A$.
\begin{lemma}
\label{lemma5}
(1) If $(x_0,s^*_A(x_0))$ lies in eu line $eu_B(y_0,\cdot)$, then $s^*_A(x_0)\in S^*_B(y_0)$.
(2) If $s^*_A(x_0)\in S^*_B(y_0)$ and $s^*_A(x_0)\neq \lim_{x\rightarrow a_1}s^*_A(x)$ then $(x_0,s^*_A(x_0))$ lies in eu line $eu_B(y_0,\cdot)$.
\end{lemma}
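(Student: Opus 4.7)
The plan is to prove both directions by explicitly computing the follower's utility when she bids $t_0 := s^*_A(x_0)$ against $s^*_A$ and matching it with $u_B(y_0)$. The key unifying observation is that, because $s^*_A$ is weakly increasing and ties go to the follower (Assumption~\ref{ass2}), we have $P^*_B[t_0] = F_1[x_0]$ at every point where $s^*_A$ is strictly increasing. By Lemma~\ref{lemma4}(2), strict monotonicity holds throughout the region $\{x : s^*_A(x) > \lim_{t \to a_1} s^*_A(t)\}$, so the only delicate case is the constant-infimum region, which is exactly what Part~(2) excludes and what Lemma~\ref{lemma4}(3) handles in Part~(1). I will also repeatedly invoke Lemma~\ref{lemma6}(1), which ensures $u_B(y_0)$ remains the follower's supremum utility under $s^*_A$.

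For Part~(1), suppose $t_0 = eu_B(y_0, x_0)$, so by Definition~\ref{def5},
$$u_B(y_0) = (y_0 - p^w(t_0))\,F_1[x_0] - p^p(t_0).$$
The follower's utility at bid $t_0$ under $s^*_A$ equals $(y_0 - p^w(t_0))\,P^*_B[t_0] - p^p(t_0)$. In the strictly increasing regime we have $P^*_B[t_0] = F_1[x_0]$, the two expressions coincide, and $t_0$ attains $u_B(y_0)$, so $t_0 \in S^*_B(y_0)$. In the constant-infimum regime, Lemma~\ref{lemma4}(3) pins down $y_0 = p^w(t_0)$ with $u_B(y_0) = 0$, and a direct arithmetic check using the payment conditions supplied by the same lemma shows that bidding $t_0$ also yields utility $0$, so $t_0$ remains a best response.

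For Part~(2), the hypothesis $t_0 \neq \lim_{x \to a_1} s^*_A(x)$ places $x_0$ in the strictly increasing region by Lemma~\ref{lemma4}(2), so $P^*_B[t_0] = F_1[x_0]$ holds unambiguously. Since $t_0 \in S^*_B(y_0)$ attains the follower's supremum utility, we obtain
$$u_B(y_0) = y_0\,F_1[x_0] - p^w(t_0)\,F_1[x_0] - p^p(t_0),$$
which is exactly the equation defining $eu_B(y_0, x_0)$. The right-hand side is strictly decreasing in $t_0$ (using $F_1[x_0] > 0$ for $x_0 > a_1$ together with the assumption that $p^w + p^p$ is strictly increasing), so the solution is unique and $t_0 = eu_B(y_0, x_0)$, i.e., $(x_0, t_0)$ lies on $eu_B(y_0, \cdot)$. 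I expect the main obstacle to be the constant-region subcase of Part~(1), since there the identity $P^*_B[t_0] = F_1[x_0]$ need not hold due to an atom at the infimum bid; the resolution is that Lemma~\ref{lemma4}(3) simultaneously fixes which $y_0$ can produce an eu-curve through $(x_0, t_0)$ and forces $u_B(y_0) = 0$, collapsing the verification to a short explicit computation.
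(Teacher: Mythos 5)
Your proposal is correct and takes essentially the same route as the paper: both directions are verified by computing the follower's utility at the bid $s^*_A(x_0)$, identifying $P^*_B[s^*_A(x_0)]$ with $F_1[x_0]$ (the paper only needs $P^*_B[s^*_A(x_0)]\geq F_1[x_0]$ for part (1)), and matching against the defining equation of $eu_B(y_0,\cdot)$, whose solution is unique. Your extra care with the constant-infimum regime in part (1) is a harmless elaboration of what the paper leaves implicit.
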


The best response set $S^*_B$ of the follower in $s^*_A$ is an a superset of the $S_B$.
Remind that the best response of the follower is sorted.
So, $S^*_B(y)$ is bounded by any element in $S^*_B(y-\epsilon)$, and $S^*_B(y+\epsilon)$.
Thus, it's bounded by $S_B(y-\epsilon)$, and $S_B(y+\epsilon)$.
We then prove that for most of the types, the winning probability of the leader will not change.

If $\exists x$ such that $s^*_A(x)=\lim_{t\rightarrow a_1}s^*_A(t)$, define
\begin{definition}
$\hat{x}=\sup\{x|\lim_{t\rightarrow a_1}s^*_A(t)=s^*_A(x)\}$.
\end{definition}
Since $s^*_A$ is continuous, $s^*_A(\hat{x})=\lim_{t\rightarrow a_1}s^*_A(t)$.
The following lemma tells what $s$ is if $s^*_A$ has a constant interval in the beginning.
\begin{lemma}
If $\exists x$ such that $s^*_A(x)=\lim_{t\rightarrow a_1}s^*_A(t)$, we have
\label{sconstant}
$s^*_A(x)=s_A(\hat{x})\ \forall x<\hat{x}$.
\end{lemma}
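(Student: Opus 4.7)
The claim reduces to the single equation $s_A(\hat{x}) = c$, where $c := \lim_{t \to a_1} s^*_A(t)$: continuity of $s^*_A$ (Lemma~\ref{lemma4}(5)) together with the definition of $\hat{x}$ gives $s^*_A(x) = c$ for every $x \in [a_1, \hat{x}]$, so establishing $s_A(\hat{x}) = c$ yields $s^*_A(x) = c = s_A(\hat{x})$ for every $x < \hat{x}$.

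My plan is to argue by contradiction, using Lemma~\ref{lemma6}(2): at every bid $t \neq c$, $P^*_B[t] = P_B[t]$. Suppose for contradiction that $s_A(\hat{x}) =: c' > c$. Since the sorted $s_A$ is weakly increasing (Lemma~\ref{lemma1}), for every $t \in (c, c')$ the set $\{x : s_A(x) \leq t\}$ is contained in $[a_1, \hat{x})$ (because $s_A(\hat{x}) = c' > t$), so continuity of $F_1$ yields $P_B[t] \leq F_1[\hat{x}]$. On the smoothed side, assuming the typical case $\hat{x} < a_2$, Lemma~\ref{lemma4}(2) gives strict monotonicity of $s^*_A$ on $(\hat{x}, a_2]$, so for any $t \in (c, s^*_A(a_2))$ the continuity of $s^*_A$ (Lemma~\ref{lemma4}(5)) together with this strict monotonicity forces $(s^*_A)^{-1}(t) > \hat{x}$, and strict monotonicity of the continuous $F_1$ then gives $P^*_B[t] > F_1[\hat{x}]$ strictly.

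Picking any $t \in (c, \min\{c', s^*_A(a_2)\})$ yields $P_B[t] \leq F_1[\hat{x}] < P^*_B[t]$, so $P_B[t] \neq P^*_B[t]$ at a bid $t \neq c$, contradicting Lemma~\ref{lemma6}(2); hence $s_A(\hat{x}) = c$. The main subtlety is that one does not need the strict inequality $P_B[t] < F_1[\hat{x}]$ (which could fail if $s_A$ jumps precisely at $\hat{x}$): the strict inequality on the $P^*_B$ side alone, coming from the smoothed region $(\hat{x}, a_2]$, is what secures the contradiction. The degenerate case $\hat{x} = a_2$ (the constant region is the entire support) can be handled separately by invoking Lemma~\ref{lemma6}(1) to match $u_B$ on both sides and deduce $s_A \equiv c$ almost everywhere.
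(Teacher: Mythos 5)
Your reduction of the claim to the single equality $s_A(\hat{x})=c$ (with $c:=\lim_{t\rightarrow a_1}s^*_A(t)$) is fine, and the two probability bounds $P_B[t]\leq F_1[\hat{x}]$ and $P^*_B[t]>F_1[\hat{x}]$ for $t$ in the gap are correctly derived. The gap is in the final step: you invoke Lemma~\ref{lemma6}(2) as saying ``at every bid $t\neq c$, $P^*_B[t]=P_B[t]$,'' but that is not what the lemma states. Its second sentence lives inside the hypothesis $t\in S_B(y)$: it only controls bids that are \emph{best responses of some follower type under $s$}. The blanket version you use is in fact false. Since $s^*_A\leq s_A$ pointwise one always has $P^*_B[t]\geq P_B[t]$, and the inequality is routinely strict at non-best-response bids; e.g.\ in Example~\ref{eg1}/\ref{eg2}, at $t=0.6$ one has $P_B[0.6]=0.9$ while $P^*_B[0.6]=1$, with no contradiction to anything. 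The bid $t\in(c,\min\{c',s^*_A(a_2)\})$ you select is precisely of this kind: it sits in a region where the follower may strictly prefer to bid $c$ (same or better winning probability at lower payment), so nothing forces $t\in S_B(y)$ for any $y$, and Lemma~\ref{lemma6}(2) gives you no purchase on it. Showing that the smoothed and unsmoothed winning probabilities differ at such a $t$ therefore does not produce a contradiction.

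The paper closes exactly this hole by arguing at the level of follower \emph{utilities} rather than winning probabilities at an arbitrary bid. Setting $y_0=p^w(c)$ (so $u_B(y_0)=0$ by Lemma~\ref{lemma4}(3)), it takes a type $y_0+\epsilon$ with $u_B(y_0+\epsilon)>0$, picks an actual best response $t\in S_B(y_0+\epsilon)$, and shows via the bound $t\leq eu_B(y_0+\epsilon,a_2)<s_A(x_1)$ together with $t\geq c$ (Theorem~\ref{lemma2}) that this $t$ wins with probability at most $F_1[x_1]<F_1[\hat{x}]$; this yields strictly less utility than bidding $c$ against $s^*_A$, contradicting the invariance of $u_B$ under smoothing (Lemma~\ref{lemma6}(1)). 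If you want to keep your probability-comparison idea, you must first establish that the bid you compare at is a best response under $s$ for some type --- which is essentially what the paper's $\epsilon$-argument accomplishes --- or switch to the utility contradiction directly.
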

Combining with the fact that the bids decrease in $s^*_A$ (thus lower payment)  and still the same winning probability, we prove that the expected utility does not decrease.
\begin{theorem}
\label{lemma7}
By using $s^*_A$ instead of $s$, the expected utility of $A$ does not decrease.
\end{theorem}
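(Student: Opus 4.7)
The plan is to decompose the change in expected utility into two pointwise-in-$x$ effects: (i) the leader bids weakly less under $s^*_A$, so both his participation payment and his winning payment weakly decrease; and (ii) the follower's bid-as-a-function-of-type, and therefore the leader's winning probability $P_A[x,\cdot]$, is essentially preserved. Combined, the integrand
\[
(x - p^w(s(\cdot)))\,P_A[x,\cdot] \;-\; p^p(s(\cdot))
\]
is pointwise at least as large under $s^*_A$ as under $s$; a direct expansion shows the difference equals $P_A\bigl(p^w(s(x))-p^w(s^*_A(x))\bigr) + \bigl(p^p(s(x))-p^p(s^*_A(x))\bigr) \geq 0$. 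Integrating against $f_1(x)\,dx$ then gives the theorem.

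Part (i) is immediate from Lemma~\ref{lemma4}(4), $s^*_A(x)\le s(x)$, together with the nonnegativity of $(p^w)'$ and $(p^p)'$ (used in Lemma~\ref{lemma3.25}).

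For part (ii), write $b(y)=\min S_B(y)$ and $b^*(y)=\min S^*_B(y)$. I would first establish $b\equiv b^*$ (outside an $F_2$-null set of followers). For $y$ with $u_B(y)>0$, Lemma~\ref{lemma6}(2) forces $S_B(y)=S^*_B(y)$, hence $b(y)=b^*(y)$. For $y>0$ with $u_B(y)=0$, the assumption $p^p(0)=p^w(0)=0$ combined with $P_B[0]=0$ (either $A$ never bids $0$ under $s$, or the mass-at-zero region would force $u_B(y)\ge y\,P_B[0]>0$ and so there is no such $y$) makes $0$ a best response under both strategies, so by Assumption~\ref{ass1}, $b(y)=b^*(y)=0$. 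Next, I would show $P_A[x,s^*_A]=P_A[x,s]$ for every $x$ by proving the difference set $\{y:s^*_A(x)\le b(y)<s(x)\}$ is $F_2$-null. Whenever $s^*_A(x)<s(x)$, Definition~\ref{def6} together with Lemma~\ref{lemma3.5}(2) says no EU curve $eu_B(y,\cdot)$ passes through $(x,s(x))$, equivalently no follower type has $s(x)$ as a best response; combined with Lemma~\ref{lemma5} and the monotonicity of $b$ from Theorem~\ref{lemma2}, this excludes $b(y)\in(s^*_A(x),s(x))$, and the boundary value $\{y:b(y)=s^*_A(x)\}$ is $F_2$-null under any non-atomic $F_2$.

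The main obstacle is this last step — rigorously ruling out $b(y)\in[s^*_A(x),s(x))$ when $s^*_A(x)<s(x)$. It relies on the envelope-of-EU-curves characterization of $s^*_A$: if no follower type finds $s(x)$ optimal against $s$, then the monotone correspondence $b$ must jump over the open interval $(s^*_A(x),s(x))$, and the single boundary value $s^*_A(x)$ is attained at most on a single-point fiber. The edge case of an initial flat region of $s^*_A$ is handled by Lemma~\ref{sconstant}, which forces $s=s^*_A$ on $[a_1,\hat x]$ so that both the bid and the winning probability are literally unchanged there; on $(\hat x,a_2]$, Lemma~\ref{lemma4}(2) gives the strict monotonicity needed to invoke the EU-curve argument above.
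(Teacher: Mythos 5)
Your high-level decomposition is the same as the paper's: payments weakly decrease because $s^*_A\le s$ (Lemma~\ref{lemma4}(4) plus monotonicity of $p^w,p^p$), and the leader's winning probability is essentially preserved, so the integrand weakly increases almost everywhere. Part (i) is fine. The problems are all in part (ii), and they are substantive.

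First, Lemma~\ref{lemma6}(2) does \emph{not} force $S_B(y)=S^*_B(y)$ when $u_B(y)>0$; it gives only the inclusion $S_B(y)\subseteq S^*_B(y)$. The whole point of the smoothing is to pull $s_A$ down onto an envelope of equal-utility curves, which makes the follower of type $y$ indifferent along the entire stretch where $s^*_A$ coincides with $eu_B(y,\cdot)$; so $S^*_B(y)$ is typically a nondegenerate interval even when $S_B(y)$ is a singleton, and under Assumption~\ref{ass1} the follower takes the minimum of the \emph{enlarged} set. Your claim $b\equiv b^*$ therefore does not follow from the cited lemma, and without it your pointwise comparison of winning probabilities collapses: both the leader's bid and the follower's bid change, so writing the exceptional event as $\{y:\, s^*_A(x)\le b(y)<s(x)\}$ with the \emph{same} $b$ on both sides does not describe the set of follower types whose win/lose status against $A=x$ actually flips. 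The paper avoids this by never comparing bids across the two strategies directly; instead it fixes the cutoff \emph{type} $y_0$ with $(x_0,s^*_A(x_0))$ on $eu_B(y_0,\cdot)$ and shows, using the ordering of best responses from Theorem~\ref{lemma2} together with Lemmas~\ref{lemma5} and~\ref{lemma6}, that every $y<y_0$ loses to $A=x_0$ and every $y>y_0$ beats $A=x_0$ under \emph{both} $s$ and $s^*_A$.

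Second, your null-set bookkeeping is in the wrong variable. For a leader type $x_0$ whose point $(x_0,s^*_A(x_0))$ lies on several EU curves $eu_B(y,\cdot)$, $y\in[y_1,y_2]$, the winning probability can genuinely change by a positive amount of follower mass (the whole interval $(y_1,y_2)$ of follower types has $s^*_A(x_0)$ as its unique best response under $s$), so the per-$x$ set you want to be $F_2$-null is not null there. What saves the theorem is that the set of such exceptional \emph{leader} types is countable --- the paper injects it into the rationals via those disjoint intervals $(y_1,y_2)$ --- hence $F_1$-null, so the possible utility loss there is negligible. You correctly flag the envelope argument as the main obstacle, but the route you propose to it does not close, and the two fixes above are exactly the content of the paper's proof.
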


\section{Bijective mapping between $s^*_A$ and $g$}

The final step is to show that every $s^*_A$ can be represented by a function $g$ and we shall restrict attention to optimize such $g$ instead.
\begin{figure}[h]
\label{fig:2}
  \centering
  \includegraphics[width=8cm]{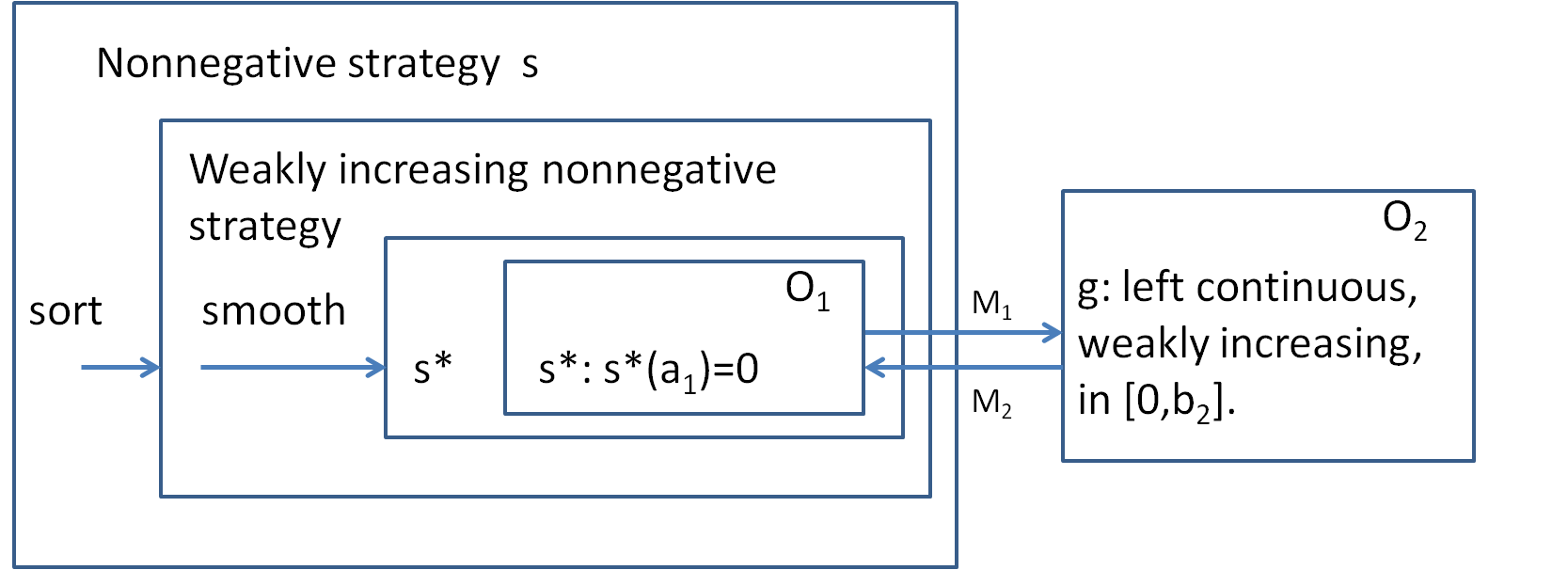}\\
  \caption{$M_1$ is bijective between $O_1$ and $O_2$}
\end{figure}
\begin{definition}
$\hat{y}=\sup\{y|u_B(y)=0\}$
\end{definition}
\begin{definition}
$\forall x>a_1,\ Y(x)=\{y| eu_B(y,\cdot) \textrm{passes through point } (x,s^*_A(x))\}$.
\end{definition}
\begin{lemma}(1) $Y(x)$ is closed.
(2)$Y(x)\geq \hat{y}$.
(3)For all $x_1<x_2$, $Y(x_1)\leq Y(x_2)$.
(4)$(\hat{y},b_2]\subseteq \cup_x Y(x)$
(5)$Y(x)$ is an interval or is a unique number. $Y(x)$ contains only one element for almost all $x$.
\label{lemma8}
\end{lemma}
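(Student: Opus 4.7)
The plan is to unify all five parts through a single algebraic reformulation of Definition~\ref{def5}. Setting $\mu_x(y) := F_1[x]\, y - u_B(y)$ and $\nu_x(t) := F_1[x]\, p^w(t) + p^p(t)$, the EU equation becomes $\mu_x(y) = \nu_x(eu_B(y, x))$. Since $\nu_x$ is strictly increasing in $t$ (by the standing assumption that $p^w + p^p$ is strictly increasing), $eu_B(y, x) = \nu_x^{-1}(\mu_x(y))$ is an increasing function of $\mu_x(y)$. The crucial observation is that $u_B(y)$, being the supremum of the affine-in-$y$ expressions appearing in Definition~\ref{def3}, is convex in $y$, so $\mu_x(y)$ is concave. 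Consequently $s^*_A(x) = \nu_x^{-1}(\max_y \mu_x(y))$, and $Y(x) = \arg\max_y \mu_x(y)$ is precisely the set of maximizers of a concave function on $[0, b_2]$. With this picture in hand the lemma largely falls out.

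I would dispatch (1), (2), (3), and (5) quickly. (1): $Y(x)$ is the preimage of $s^*_A(x)$ under the continuous map $y \mapsto eu_B(y, x)$ (continuity via Lemma~\ref{lemma3.5}), so it is closed. (2): if $y < \hat{y}$ then $u_B(y) = 0$, so the EU equation reduces to $\nu_x(t) = F_1[x]\, y$, and strict monotonicity of $\nu_x$ makes $eu_B(y, x)$ strictly increasing in $y$ on $[0, \hat{y}]$; thus $eu_B(y, x) < eu_B(\hat{y}, x) \le s^*_A(x)$ and $y \notin Y(x)$. (3): the differences $\mu_x(y) - \mu_x(y')$ are strictly increasing in $F_1[x]$ whenever $y > y'$, so for $y_1 \in Y(x_1), y_2 \in Y(x_2)$ with $x_1 < x_2$, adding the two optimality conditions $\mu_{x_1}(y_1) \ge \mu_{x_1}(y_2)$ and $\mu_{x_2}(y_2) \ge \mu_{x_2}(y_1)$ collapses to $(F_1[x_2] - F_1[x_1])(y_1 - y_2) \le 0$, forcing $y_1 \le y_2$. (5): the argmax of a concave function is an interval (or single point); the ``almost all $x$'' clause then follows from (3), because the sorted, pairwise-(almost-)disjoint structure of the $Y(x)$'s leaves room for only countably many nondegenerate intervals inside $[0, b_2]$.

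The most substantive step, and the one I expect will require the most care, is (4). Fix $y^* \in (\hat{y}, b_2]$. Observe that $u_B$ is $1$-Lipschitz (immediate from the sup definition, since winning probabilities lie in $[0, 1]$) and convex, so the left derivative $\alpha := \lim_{y \uparrow y^*}\tfrac{u_B(y^*) - u_B(y)}{y^* - y}$ exists and lies in $[0, 1]$. The delicate point is strict positivity of $\alpha$: convexity together with $u_B(\hat{y}) = 0$ and $u_B(y^*) > 0$ gives $\alpha \ge (u_B(y^*) - u_B(\hat{y}))/(y^* - \hat{y}) = u_B(y^*)/(y^* - \hat{y}) > 0$. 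By continuity of $F_1$ with $F_1[a_1] = 0$ and $F_1[a_2] = 1$, I can pick $x^* \in (a_1, a_2]$ with $F_1[x^*] = \alpha$. The subgradient inequality for convex $u_B$, namely $u_B(y) \ge u_B(y^*) + \alpha(y - y^*)$ for all $y$, then rearranges immediately to $\mu_{x^*}(y) \le \mu_{x^*}(y^*)$, so $y^* \in \arg\max_y \mu_{x^*} = Y(x^*)$, completing the proof.
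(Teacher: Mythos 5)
Your reformulation $Y(x)=\arg\max_{y\in[0,b_2]}\mu_x(y)$ with $\mu_x(y)=F_1[x]\,y-u_B(y)$ is correct (since $\nu_x$ is strictly increasing and continuous, $eu_B(y,x)=s^*_A(x)$ iff $\mu_x(y)=\max_y\mu_x(y)$, and the max is attained by continuity of $u_B$ on the compact domain), and the observation that $u_B$ is convex — as the pointwise supremum of functions affine in $y$ — is a genuine structural insight that the paper never exploits. The paper instead proves each part by hand: (1) by passing to the limit in the defining equation of $eu_B$; (2) by contradiction with $u_B(\hat y)=0$; (3) by appealing to its Theorem~\ref{lemma2} on sorted best responses and deducing $P^*_B=0$; (4) by constructing a witness $x_0$ from a best response $t\in S^*_B(y)$ and splitting into the cases $t>\lim_{x\to a_1}s^*_A(x)$ and $t=\lim_{x\to a_1}s^*_A(x)$; (5) by combining (3) and (4). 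Your part (4) is the most striking departure: rather than reasoning through the best-response structure and the continuity of $s^*_A$, you run a Legendre-duality argument — the left derivative $\alpha=\partial_-u_B(y^*)$ is a subgradient, so choosing $x^*$ with $F_1[x^*]=\alpha$ (possible by the intermediate value theorem since $\alpha\in(0,1]$, strict positivity coming from $u_B(\hat y)=0<u_B(y^*)$ and convexity, and $\alpha\le 1$ from 1-Lipschitzness of $u_B$) makes $y^*$ a maximizer of the concave $\mu_{x^*}$. This is shorter, unifies the cases, and reuses no machinery beyond Lemmas~\ref{lemma3} and \ref{lemma3.5}. Two small points worth tightening: in (3), when $F_1[x_1]=F_1[x_2]$ the inequality $(F_1[x_2]-F_1[x_1])(y_1-y_2)\le 0$ is vacuous, but then $\mu_{x_1}=\mu_{x_2}$ and so $Y(x_1)=Y(x_2)$ directly; and in (5), the countability argument needs the observation that distinct $x$'s cannot share the same nondegenerate $Y(x)$ (by (3), equal nondegenerate intervals $Y(x_1)=Y(x_2)=[c,d]$ with $d>c$ would force $d\le c$), which you leave implicit.
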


\begin{definition}
\label{def7} Equal-bid function $g(x)=\min Y(x), \forall x\in(a_1,a_2]$
\end{definition}
The intuitive explanation of $g$ is: when the follower's type $y=g(x)$, one of her best response is to equal the bid $s^*_A(x)$ by the leader.
In most of the time except for countable types, if follower gives the same bid as the leader,
then the lowest type of the follower must be $g(x)$.
In other words, the follower with value $g(x)$ gives the same bid as the leader with value $x$.

{\em With equal-bid function, the winning probability of the leader has a surprisingly concise form, as shown in Lemma~\ref{lemma8.5}.}

Consider again Example~\ref{eg1}, we have
\begin{displaymath}
g(x) = \left\{ \begin{array}{ll}
x/2 & y\in[0, 0.4)\\
2x-0.3 & x\in[0.4,0.65]\\
1 & y\in(0.65,1]
\end{array} \right.
\end{displaymath}
It's easy to check that $s_A(0.45)=0.15$, $g(0.45)=0.6$, $S_B(0.6)=\{0.15\}$.
The leader with value $0.45$ bids $0.15$, same as the follower with value $g(0.45)=0.6$.

By lemma~\ref{lemma8}, $\{y| eu_B(y,x)=s^*_A(x)\}$ is closed, so $g(x)$ is well defined.
When $y<g(x)$, the leader $A=x$ beats the follower $y$ by Theorem~\ref{lemma2}.
When $y\geq g(x)$, the follower $y$ beats the leader $A=x$ by tie-breaking rule.
So we can calculate the winning probability of the leader $A=x$ using $g(x)$.

\begin{lemma}Using strategy $s^*_A$, the winning probability of $A$ with type $x$ is $F_2[g(x)]$.
\label{lemma8.5}
\end{lemma}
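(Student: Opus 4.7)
The plan is to prove that, up to a set of $y$-values of $F_2$-measure zero, the event ``leader $A=x$ beats follower $B=y$'' coincides with the event $y<g(x)$, whence integrating the indicator against $F_2$ gives $P_A[x,s^*_A]=F_2[g(x)]$. The key observation driving the argument is that $s^*_A(x)$ is itself one of the best responses of a specific ``pivot'' follower type, namely $g(x)$, and the monotonicity of best responses in $y$ (Theorem~\ref{lemma2}) then sorts all other types above or below this pivot.

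First I set up the pivot. Because $g(x)=\min Y(x)\in Y(x)$, the equal-utility curve $eu_B(g(x),\cdot)$ passes through $(x,s^*_A(x))$, so Lemma~\ref{lemma5}(1) gives $s^*_A(x)\in S^*_B(g(x))$. Write $t(y):=\min S^*_B(y)$ for the bid follower $y$ actually submits under Assumption~\ref{ass1}. I also record one useful positivity fact: since $s^*_A$ is weakly increasing (Lemma~\ref{lemma4}(2)) and ties go to $B$, every leader type below $x$ bids at most $s^*_A(x)$, so $P^*_B[s^*_A(x)]\ge F_1[x]>0$ for any $x>a_1$.

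I then split into two cases. Suppose $y>g(x)$ and, for contradiction, some $t'\in S^*_B(y)$ satisfies $t'<s^*_A(x)$. Applying Theorem~\ref{lemma2} to the pair $g(x)<y$ with $a=s^*_A(x)\in S^*_B(g(x))$ and $b=t'\in S^*_B(y)$ would force $P^*_B[s^*_A(x)]=0$, contradicting the positivity above; so every element of $S^*_B(y)$ is at least $s^*_A(x)$, and by Assumption~\ref{ass2} the leader loses. Conversely, suppose $y<g(x)$. The same use of Theorem~\ref{lemma2} rules out $t(y)>s^*_A(x)$, while the case $t(y)=s^*_A(x)$ would, by Lemma~\ref{lemma5}(2), place $y$ in $Y(x)$ and hence force $y\ge g(x)=\min Y(x)$, a contradiction, \emph{provided} we are not in the boundary case $s^*_A(x)=\lim_{x'\to a_1}s^*_A(x')$. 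Hence $t(y)<s^*_A(x)$ and the leader wins.

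Integrating the indicator over $y\sim F_2$ then gives $\Pr[y<g(x)]=F_2[g(x)]$, since the single point $y=g(x)$ has $F_2$-measure zero under continuity of $F_2$. The main obstacle I anticipate is the degenerate flat initial segment where Lemma~\ref{lemma5}(2) is silent: there I would invoke Lemma~\ref{lemma4}(3), which identifies the unique $y_0=p^w(s^*_A(x))$ lying on the relevant eu-curve and pins down $u_B(y_0)=0$, together with Lemma~\ref{sconstant} describing $s^*_A$ on this segment, to verify the formula $F_2[g(x)]$ on the initial interval $x\in(a_1,\hat{x}]$ by a direct computation of the leader's winning probability against the corresponding $u_B=0$ follower types.
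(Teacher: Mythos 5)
Your argument is correct and follows essentially the same route as the paper's: it hinges on the same three ingredients (Lemma~\ref{lemma5} connecting the eu-curve through $(x,s^*_A(x))$ to membership in $S^*_B(g(x))$, the best-response ordering of Theorem~\ref{lemma2} combined with the positivity $P^*_B[s^*_A(x)]\ge F_1[x]>0$, and the flat-segment analysis via Lemma~\ref{lemma4}(3) and Lemma~\ref{sconstant}), differing only in that you case-split on $y$ against the pivot $g(x)$ while the paper cases first on whether $x$ lies in the initial flat segment and then on $g(x)$ versus $\hat{y}$. Your sketch of the boundary case is exactly what the paper's case (1) carries out, so the proposal matches the paper's proof in substance.
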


\begin{lemma}
(1)$g(x)$ weakly increases.
(2)$g$ is left continuous.
\label{lemma12}
\end{lemma}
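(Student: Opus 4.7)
\textbf{Proof proposal for Lemma~\ref{lemma12}.}

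For part (1), the monotonicity of $g$ will be essentially immediate from Lemma~\ref{lemma8}(3). That lemma asserts $Y(x_1) \leq Y(x_2)$ whenever $x_1 < x_2$, meaning every element of $Y(x_1)$ is dominated by every element of $Y(x_2)$. Taking the minimum of each side then yields $g(x_1) = \min Y(x_1) \leq \min Y(x_2) = g(x_2)$, so $g$ is weakly increasing on $(a_1,a_2]$.

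For part (2), my plan is to pass to the limit inside the defining equation of $eu_B$. Fix $x \in (a_1,a_2]$ and take an increasing sequence $x_n \nearrow x$. By part (1), the left limit $L := \lim_n g(x_n)$ exists and satisfies $L \leq g(x)$. For each $n$, since $g(x_n) \in Y(x_n)$ we have $eu_B(g(x_n), x_n) = s^*_A(x_n)$. I will argue that both sides converge to $eu_B(L, x)$ and $s^*_A(x)$ respectively: the right-hand side converges by continuity of $s^*_A$ (Lemma~\ref{lemma4}(5)); the left-hand side needs joint continuity of $eu_B$ in $(y,x)$. This is not stated verbatim in Lemma~\ref{lemma3.5}, which only gives separate continuity, so one must extract it from Lemma~\ref{lemma3.25}: writing $a = F_1[x]$, $b = u_B(y) - F_1[x]\,y$, equation~(\ref{eqa5}) becomes $ap^w(t) + p^p(t) + b = 0$, and Lemma~\ref{lemma3.25} ensures $t(a,b)$ has continuous partial derivatives in $(a,b)$, hence is jointly continuous. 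Composing with the continuous maps $x \mapsto F_1[x]$ and $(y,x) \mapsto u_B(y) - F_1[x]\,y$ (continuity of $u_B$ by Lemma~\ref{lemma3}) yields joint continuity of $eu_B$.

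Once joint continuity is in hand, passing to the limit gives $eu_B(L, x) = s^*_A(x)$, i.e.\ $L \in Y(x)$. Therefore $L \geq \min Y(x) = g(x)$, and combined with $L \leq g(x)$ we conclude $L = g(x)$, which is exactly left continuity at $x$.

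The one delicate point, which I anticipate to be the main obstacle, is exactly the joint-continuity issue for $eu_B$: the paper's statement in Lemma~\ref{lemma3.5} only guarantees continuity in each variable separately, which in general is insufficient to pass to a limit along a diagonal sequence $(g(x_n), x_n)$. Resolving this cleanly via Lemma~\ref{lemma3.25}, rather than leaving the gap or appealing to separate continuity alone, is the step that requires care. Everything else — existence of the left limit, the inequality $L \leq g(x)$, and the concluding comparison with $\min Y(x)$ — is a direct consequence of monotonicity and the definition of $g$.
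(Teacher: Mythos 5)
Your part (1) is identical to the paper's (a one-line appeal to Lemma~\ref{lemma8}(3)). Your part (2) is correct but takes a genuinely different route. The paper argues by contradiction using the covering property of Lemma~\ref{lemma8}(4): if $g$ jumped from the left at $x_0$ by more than $d$, any $y$ in the gap $(g(x_0)-d,\,g(x_0))$ would have to lie in some $Y(x_1)$ with $x_1<x_0$ (by the ordering of the sets $Y$), whence $g\bigl(\tfrac{x_1+x_0}{2}\bigr)\geq \sup Y(x_1)\geq y>g(x_0)-d$, contradicting the assumed gap. That argument is short and never needs any continuity of $eu_B$ or $s^*_A$; it buys left continuity purely from the order structure of the correspondence $Y$. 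Your argument instead establishes the stronger-looking fact that the left limit $L$ lies in $Y(x)$ by a diagonal limit $(g(x_n),x_n)\to(L,x)$, and you correctly identify that this needs more than the separate continuity stated in Lemma~\ref{lemma3.5}. Your fix via Lemma~\ref{lemma3.25} is valid: $t(a,b)$ has continuous partials on $\{a>0\}$, hence is jointly continuous there, and $F_1[x_n]\to F_1[x]>0$ keeps the sequence in that region. A slightly cleaner way to close the same gap, more in the spirit of the paper's proof of Lemma~\ref{lemma8}(1), is to bypass $eu_B$ altogether and pass to the limit directly in the defining identity $u_B(g(x_n))=\bigl(g(x_n)-p^w(s^*_A(x_n))\bigr)F_1[x_n]-p^p(s^*_A(x_n))$, using only the continuity of $u_B$, $s^*_A$, $F_1$, $p^w$, $p^p$; the limit identity is exactly the statement $L\in Y(x)$. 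Either way, your conclusion $L=g(x)$ follows, so the proposal stands as a correct alternative proof.
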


Note, $s^*_A(x)$ is not yet defined on $a_1$.
In fact, bid with zero probability does not affect overall utility. We can define $s^*_A(a_1)=\lim_{t\rightarrow a_1}s^*_A(t)$ for convenience.

Now we prove the last a few desirable properties of $s^*_A$: $s^*_A$ is differentiable on both sides.
Based on the derivatives, we determine the connection between $g$ and $s^*_A$.
\begin{theorem}\label{lemma9}Assume both $(p^w)'$ and $(p^p)'$ are continuous, then
\begin{enumerate}
\item $s^*_A(x)$ is left-hand differentiable and right-hand differentiable.
\item $s^*_A$ can be solved from $\int^{x}_{a_1}f_1(t)g(t)dt=p^w(s^*_A(x))F_1[x]+p^p(s^*_A(x))-p^p(s^*_A(a_1))$.
\end{enumerate}
\end{theorem}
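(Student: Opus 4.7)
My plan is to prove (2) first via an envelope identity combined with Lebesgue--Stieltjes integration by parts, and then deduce (1) by applying Lemma~\ref{lemma3.25} to the implicit equation for $s^*_A$ that (2) yields.

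By Lemmas~\ref{lemma4} and~\ref{lemma5}, the point $(x,s^*_A(x))$ lies on the equal-utility curve $eu_B(g(x),\cdot)$ for every $x\in(a_1,a_2]$, so the defining equation~(\ref{eqa5}) gives $u_B(g(x))=F_1[x]g(x)-p^w(s^*_A(x))F_1[x]-p^p(s^*_A(x))$. Writing $H(x):=p^w(s^*_A(x))F_1[x]+p^p(s^*_A(x))$, this rearranges to $H(x)=F_1[x]g(x)-u_B(g(x))$; part~(2) is thus equivalent to $\int_{a_1}^{x}f_1(t)g(t)\,dt=H(x)-p^p(s^*_A(a_1))$. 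Observe that $u_B(y)$ is a pointwise supremum of affine functions $y\mapsto(y-p^w(t))P_B[t]-p^p(t)$ of slope $P_B[t]$, hence convex; at $y=g(x)$ the supremum is attained by $t=s^*_A(x)$, and since $s^*_A$ is weakly increasing with the tie-breaking rule, $P_B[s^*_A(x)]=F_1[x]$ outside the initial flat portion of $s^*_A$. So $F_1[x]$ lies in the subgradient of $u_B$ at $g(x)$.

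Combining these ingredients, Lebesgue--Stieltjes integration by parts plus the change of variable $y=g(t)$ yields
$$\int_{a_1}^{x}\!f_1(t)g(t)\,dt=F_1[x]g(x)-\!\int_{a_1}^{x}\!F_1(t)\,dg(t)=F_1[x]g(x)-\bigl(u_B(g(x))-u_B(g(a_1^+))\bigr)=H(x)+u_B(g(a_1^+)),$$
using that on strict-monotonicity intervals of $g$ the subgradient $u_B'$ equals $F_1\circ g^{-1}$, while on flat intervals of $g$ both sides contribute zero. Since $F_1[a_1]=0$, the identity $H(a_1^+)=-u_B(g(a_1^+))$ together with $u_B\geq 0$ and $p^p\geq 0$ forces $u_B(g(a_1^+))=0$ and $H(a_1^+)=p^p(s^*_A(a_1))=0$, giving part~(2). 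Part~(1) then follows quickly: since $g$ is weakly increasing and left-continuous (Lemma~\ref{lemma12}), the one-sided limits $g(x)$ and $g(x^+)$ exist everywhere, so by the fundamental theorem of calculus $H$ admits left-derivative $f_1(x)g(x)$ and right-derivative $f_1(x)g(x^+)$. But $s^*_A(x)$ is the unique solution $t$ of $F_1[x]p^w(t)+p^p(t)=H(x)$, i.e.\ $s^*_A(x)=t(F_1[x],-H(x))$ in the notation of Lemma~\ref{lemma3.25}; for $x>a_1$ we have $F_1[x]>0$, so that lemma supplies continuous partial derivatives of $t$, and the chain rule (applied to the differentiable $F_1$ and the one-sided differentiable $H$) delivers one-sided differentiability of $s^*_A$.

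The hardest part will be the change-of-variable step in the Stieltjes integral: one must verify that a jump of $g$ at some $t_0$, which forces $u_B$ to be affine of slope $F_1(t_0)$ on $[g(t_0^-),g(t_0^+)]$ because all $y$ in that interval share the same maximizer $t=s^*_A(t_0)$, contributes exactly $F_1(t_0)\bigl(g(t_0^+)-g(t_0^-)\bigr)$ to both $\int F_1\,dg$ and the $u_B$-increment, and conversely that flat intervals of $g$ (corresponding to the initial plateau of $s^*_A$ handled in Lemma~\ref{lemma4}(3) and Lemma~\ref{sconstant}) contribute zero on both sides. Once this measure-theoretic bookkeeping is nailed down, the remaining calculations are routine.
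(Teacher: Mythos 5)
Your proposal is correct in outline but runs in the opposite direction from the paper, and the two routes are genuinely different. The paper proves part (1) \emph{first}, by a direct two-sided squeeze on the difference quotient $\frac{s^*_A(x_2)-s^*_A(x_0)}{x_2-x_0}$: it writes the four equal-utility relations (\ref{aeq1})--(\ref{aeq4}) in terms of the implicit solution $Q(a,b)$ of Lemma~\ref{lemma3.25}, uses the continuity of $\partial Q/\partial a$ and $\partial Q/\partial b$ to trap the quotient between matching upper and lower bounds up to a factor $(1\pm\epsilon)$, and obtains the explicit one-sided derivatives $\partial_+[p^w(s^*_A)F_1+p^p(s^*_A)]=f_1\overline{Y}$ and $\partial_-[\cdot]=f_1 g$; part (2) then follows by integrating these and noting $\overline{Y}(x)=g(x)$ a.e. You instead establish the integral identity (2) directly as an envelope/convex-duality statement --- $H(x)=F_1[x]g(x)-u_B(g(x))$ plus the subgradient identification $F_1[x]\in\partial u_B(g(x))$ and Stieltjes integration by parts --- and then recover (1) from (2) via the fundamental theorem of calculus and the chain rule through $t(F_1[x],-H(x))$. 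Your route is conceptually cleaner and explains \emph{why} (2) holds (it is a Legendre-type identity), and it correctly identifies the one delicate point, namely that jumps of $g$ at $t_0$ contribute $F_1(t_0)\bigl(g(t_0^+)-g(t_0^-)\bigr)$ to both sides because $u_B$ is affine of slope $F_1[t_0]$ on $Y(t_0)$ (this is exactly what Lemma~\ref{lemma8}(4)--(5) supplies); for completeness you should also note that a possible singular-continuous part of $dg$ is covered by the pushforward identity $g_*(dg)=$ Lebesgue measure on the range of $g$, so no absolute continuity of $g$ is needed. The paper's approach buys explicit formulas for $\partial_\pm s^*_A$ with no measure-theoretic bookkeeping; yours avoids the $\epsilon$-squeeze entirely and, as a by-product, shows $p^p(s^*_A(a_1))=0$ and $u_B(g(a_1^+))=0$, which is consistent with (and slightly sharpens) the constant term in the paper's statement.
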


By (2), we obtain that in first price auction $s^*_A(x)=\frac{1}{F_1[x]}\int_{a_1}^xf_1(t)g(t)dt+s^*_A(a_1)$.

In all-pay auction $s^*_A(x)=\int_{a_1}^xf_1(t)g(t)dt+s^*_A(a_1)$.

\begin{proof}
(1)
For any $x_0$, we prove $s^*_A(x)$ is right-hand differentiable at point $x_0$.
The proof for the left-hand case is similar.

If $x_2>x_0$ is close enough to $x_0$, by Lemma~\ref{lemma8},
we know $\underline{Y}(x_2)$ is close enough to $\overline{Y}(x_0)$.
Denote $y_0=\overline{Y}(x_0)=g(x_0)$ and $y_2=\underline{Y}(x_2)$
($\underline{Y}=\inf Y,\overline{Y}=\sup Y$).

Let $Q(a,b)$ denote the solution $t$ of
\begin{eqnarray}
b+ap^w(t)+p^p(t)=0\nonumber
\end{eqnarray}
By Lemma~\ref{lemma3.25}, $Q(a,b)$ exists and
\begin{eqnarray}
\frac{\partial Q}{\partial a}(a,b)=\frac{-p^w(Q)}{a(p^w)'(Q)+(p^p)'(Q)}&&\frac{\partial Q}{\partial b}(a,b)=\frac{-1}{a(p^w)'(Q)+(p^p)'(Q)}\nonumber
\end{eqnarray}
We have $\frac{\partial Q}{\partial a}$ and $\frac{\partial Q}{\partial b}$ are continuous.
We should also keep in mind that $\frac{\partial Q}{\partial a}$ and $\frac{\partial Q}{\partial b}$ are negative.

Since $(x_0,s^*_A(x_0))$ lies on $eu_B(y_0,\cdot)$
and $(x_2,s^*_A(x_2))$ lies on $eu_B(y_2,\cdot)$,
we have
\begin{eqnarray}
s^*_A(x_0)=eu_B(y_0,x_0)&& s^*_A(x_0)\geq eu_B(y_2,x_0)\nonumber\\
s^*_A(x_2)=eu_B(y_2,x_2)&& s^*_A(x_2)\geq eu_B(y_0,x_2)\nonumber
\end{eqnarray}
Rewrite these equations in terms of $Q$ we have
\begin{eqnarray}
s^*_A(x_0)&=&Q(F_1[x_0],u_B(y_0)-y_0F_1[x_0])\label{aeq1}\\
s^*_A(x_0)&\geq&Q(F_1[x_0],u_B(y_2)-y_2F_1[x_0])\label{aeq2}\\
s^*_A(x_2)&=&Q(F_1[x_2],u_B(y_2)-y_2F_1[x_2])\label{aeq3}\\
s^*_A(x_2)&\geq&Q(F_1[x_2],u_B(y_0)-y_0F_1[x_x])\label{aeq4}
\end{eqnarray}
Equation (\ref{aeq3})-(\ref{aeq2}) and divide it by $x_2-x_0$, we get
\begin{eqnarray}
\frac{s^*_A(x_2)-s^*_A(x_0)}{x_2-x_0}&\leq&\frac{Q(F_1[x_2],u_B(y_2)-y_2F_1[x_2])-Q(F_1[x_0],u_B(y_2)-y_2F_1[x_0])}{x_2-x_0}\label{aeq5}
\end{eqnarray}
Since $u$ and $F_1$ are continuous, we have
$\lim_{x_2\rightarrow x_0}u_B(y_2)-y_2F_1[x_2] = \lim_{x_2\rightarrow x_0}u_B(y_2)-y_2F_1[x_0]=u_B(y_0)-y_0F_1[x_0]$.

Since $\frac{\partial Q}{\partial a}$ and $\frac{\partial Q}{\partial b}$ are continuous,
for any $\epsilon>0$,
we can pick $x_2$ small enough such that
when $(a,b)\in [F_1[x_0],F_1[x_2]]\times[u_B(y_2)-y_2F_1[x_2],u_B(y_2)-y_2F_1[x_0]]$
we have
\begin{eqnarray}
(\frac{\partial Q}{\partial a}(a,b),\frac{\partial Q}{\partial b}(a,b))&\in &
[\frac{\partial Q}{\partial a}(F_1[x_0],u_B(y_0)-y_0F_1[x_0])\cdot(1\pm \epsilon)]\nonumber\\
&&\times
[\frac{\partial Q}{\partial b}(F_1[x_0],u_B(y_0)-y_0F_1[x_0])\cdot(1\pm \epsilon)]\nonumber
\end{eqnarray}

We use integration of the derivatives to represent the numerator of Equation (\ref{aeq5}).
Since the derivatives are very close, we can have a good upper bound.
\begin{eqnarray}
&&Q(F_1[x_2],u_B(y_2)-y_2F_1[x_2])-Q(F_1[x_0],u_B(y_2)-y_2F_1[x_0])\nonumber\\
&=&\int^{F_1[x_2]}_{F_1[x_0]}\frac{\partial Q}{\partial a}(t,u_B(y_2)-y_2F_1[x_2])dt
-\int^{u_B(y_2)-y_2F_1[x_0]}_{u_B(y_2)-y_2F_1[x_2]]}\frac{\partial Q}{\partial b}(F_1[x_0],t)dt\nonumber\\
&\leq&(F_1[x_2]-F_1[x_0])(1-\epsilon)\frac{\partial Q}{\partial a}(F_1[x_0],u_B(y_0)-y_0F_1[x_0])\nonumber\\
&&-y_2(F_1[x_2]-F_1[x_0])(1+\epsilon)\frac{\partial Q}{\partial b}(F_1[x_0],u_B(y_0)-y_0F_1[x_0])\nonumber
\end{eqnarray}
The inequality is because $\frac{\partial Q}{\partial a}$ and $\frac{\partial Q}{\partial b}$ are negative.
Then the limitation of Equation (\ref{aeq5}) would be
\begin{eqnarray}
\lim\frac{s^*_A(x_2)-s^*_A(x_0)}{x_2-x_0}&\leq& f_1(x_0)(1-\epsilon)\frac{\partial Q}{\partial a}(F_1[x_0],u_B(y_0)-y_0F_1[x_0])\nonumber\\
&&-y_2f_1(x_0)(1+\epsilon)\frac{\partial Q}{\partial b}(F_1[x_0],u_B(y_0)-y_0F_1[x_0])\nonumber
\end{eqnarray}
Since it works for any $\epsilon$, we have
\begin{eqnarray}
\lim\frac{s^*_A(x_2)-s^*_A(x_0)}{x_2-x_0}
&\leq& f_1(x_0)\frac{\partial Q}{\partial a}(F_1[x_0],u_B(y_0)-y_0F_1[x_0])\nonumber\\
&&-y_2f_1(x_0)\frac{\partial Q}{\partial b}(F_1[x_0],u_B(y_0)-y_0F_1[x_0])\nonumber\\
&=&\frac{-f_1(x_0)p^w(s^*_A(x_0))+f_1(x_0)g(x_0)}{F_1[x_0](p^w)'(s^*_A(x_0))+(p^p)'(s^*_A(x_0))}\nonumber\\
&=&\frac{\partial eu}{\partial x}(\overline{Y}(x),x)\nonumber
\end{eqnarray}
Similarly, if we use Equation (\ref{aeq4})-(\ref{aeq1}), we can get a lower bound same as the upper bound.
So $s^*_A(x)$ is right-hand differentiable at point $x_0$, and
\begin{eqnarray}
(s^*_A)'(x)=\frac{\partial eu}{\partial x}(\overline{Y}(x),x)\nonumber
\end{eqnarray}

(2) Following above, we have
\begin{eqnarray}
\partial_+s^*_A(x)&=&\frac{f_1(x)\overline{Y}(x)-p^w(s^*_A(x))f_1(x)}{(p^w)'(s^*_A(x))F_1[x]+(p^p)'(s^*_A(x))}\nonumber\\
f_1(x)\overline{Y}(x)&=&\partial_+s^*_A(x)\cdot (p^w)'(s^*_A(x))F_1[x]+p^w(s^*_A(x))f_1(x)\nonumber\\
&&+\partial_+s^*_A(x)\cdot(p^p)'(s^*_A(x))\nonumber\\
f_1(x)\overline{Y}(x)&=&\partial_+[p^w(s^*_A(x))F_1[x]+p^p(s^*_A(x))]\nonumber
\end{eqnarray}
Similarly, we have
\begin{displaymath}
f_1(x)g(x)=\partial_-[p^w(s^*_A(x))F_1[x]+p^p(s^*_A(x))]
\end{displaymath}
Because $\overline{Y}(x)=g(x)$ for almost all $x$,
then when we do integration on both right and left derivatives, the difference vanishes, i.e.
\begin{eqnarray}
\int^{x}_{a_1}f_1(t)g(t)dt=p^w(s^*_A(x))F_1[x]+p^p(s^*_A(x))-p^p(s^*_A(a_1))\nonumber
\end{eqnarray}
\end{proof}
\textit{{\bf Remark 1.}
Up to now, we have developed a new strategy $s^*_A$ for $A$ based on $s$, with
at least 2 desirable properties:
it yields at least as much utility as $s$ and is left-hand differentiable and right-hand differentiable. In the following, we will calculate the winning probability and find out the $s^*_A$ with the optimal utility.
}

\noindent \textit{{\bf Remark 2.}
From $s$ (we only need the weakly increasing condition), we could define $g$ directly,
but $s$ cannot be calculated by $g$.
To see this,
the follower bid $s_A(t)$ and when $t=x$, the follower achieves the
largest utility.
Take first price auction for example,
we have $(g(x)-s_A(x))F_1[x]\geq(g(x)-s_A(t))F_1[t]\forall t$,
then $s_A(t)\geq g(x)-\frac{(g(x)-s_A(x))F_1[x]}{F_1[t]}$, equality can be achieved by setting $t$ to be $x$.
Moreover $s_A(t)\geq\sup_x(g(x)-\frac{u_B(x)}{F_1[t]})=\sup_xeu_B(g(x),t)$, equality may not be achieved,
because if we fix $t$ first, there maybe no corresponding $x$.
Thus we do not have the exact formula of $s_A(t)$.
If $s$ is optimal, for the leader's every type, his bid should be as small as possible without changing follower's behavior.
To do this, when $s_A(t)>\sup_xeu_B(g(x),t)$, we can decrease his bid to $\epsilon+\sup_xeu_B(g(x),t)$, without letting the follower match his bid.
This is the nature the smooth method.
So in the optimal strategy, we should have $s_A(t)=\sup_xeu_B(g(x),t)$, which is exact the smooth method, and we create $s^*_A$ to be it.
}

We can check the correctness of relationship between $g$ and $s^*_A$ in Example~\ref{eg2}.

When $x_0\in[0,0.4]$,
$$\frac1{F_1[x_0]}\int^{x_0}_{a_1} g(x)f_1(x)dx=\frac1{x_0}[\int^{x_0}_{0}x/2dx]=x_0/4=s^*_A(x_0)$$

When $x_0\in[0.4,0.65]$,
$$\frac1{F_1[x_0]}\int^{x_0}_{a_1} g(x)f_1(x)dx=\frac1{x_0}[\int^{0.4}_{0}x/2dx+\int^{x_0}_{0.4}(2x-0.3)dx]=x_0-0.3=s^*_A(x_0)$$

When $x_0\in[0.65,1]$,
\begin{eqnarray}
&&\frac1{F_1[x_0]}\int^{x_0}_{a_1} g(x)f_1(x)dx=\frac1{x_0}[\int^{0.4}_{0}x/2dx+\int^{0.65}_{0.4}(2x-0.3)dx+\int^{x_0}_{0.65}1dx]\nonumber\\
&&=\frac1{x_0}[0.04+0.65*0.35-0.04+x_0-0.65]=1-\frac{0.65^2}{x}=s^*_A(x_0)\nonumber
\end{eqnarray}

\begin{definition}
\begin{eqnarray}
O_1&=&\{ s |\textrm{~strategies resulted from any nonnegative strategy after smoothing},s_A(a_1)=0\}\nonumber\\
O_2&=&\{(g,s_A(a_1))|g \textrm{~is weakly increasing and left continuous and in} [0,b_2]\}\nonumber
\end{eqnarray}
\end{definition}
By now, definition~\ref{def7} gives a mapping $M_1: O_1 \rightarrow O_2$.
In fact, we will prove that there is a bijective mapping between the two sets.
The idea is that we construct a mapping $M_2: O_2 \rightarrow O_1$ and
proves that $M_1\circ M_2=I$.

\begin{theorem}
There is a bijective mapping between $O_1$ and $O_2$.
\label{theorem8}
\end{theorem}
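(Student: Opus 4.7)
The plan is to construct the inverse $M_2:O_2\to O_1$ using the integral identity from Theorem~\ref{lemma9}(2), and then verify that $M_1$ and $M_2$ are mutual inverses. For $(g,c)\in O_2$, I would define $\tilde s=M_2(g,c)$ on $(a_1,a_2]$ as the unique solution $t=\tilde s(x)$ of
$$p^w(t)\,F_1[x]+p^p(t)=\int_{a_1}^{x}f_1(\tau)g(\tau)\,d\tau+p^p(c),$$
and set $\tilde s(a_1)=c$. Well-definedness for $x$ with $F_1[x]>0$ follows from strict monotonicity of the left-hand side in $t$ (since $p^w+p^p$ is strictly increasing and $(p^w)',(p^p)'\ge 0$). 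Because the right-hand side is continuous and weakly increasing in $x$, and the implicit $t$-inverse is continuous, $\tilde s$ is continuous, nonnegative, and weakly increasing; left continuity of $g$ transfers to that of $\tilde s$.

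Next I would verify $\tilde s\in O_1$ by showing that smoothing $\tilde s$ (Definition~\ref{def6}) returns $\tilde s$ itself. Let $\tilde s^{\star}$ denote this smoothing; Lemma~\ref{lemma4}(4) gives $\tilde s^{\star}\le\tilde s$. Conversely, the defining equation for $\tilde s$ can be rewritten as $u_B(g(x))=F_1[x]g(x)-p^w(\tilde s(x))F_1[x]-p^p(\tilde s(x))$, where $u_B$ is the follower's utility function induced by the base strategy $\tilde s$; hence $(x,\tilde s(x))$ lies on the equal-utility curve $eu_B(g(x),\cdot)$, giving $\tilde s(x)\le\sup_y eu_B(y,x)=\tilde s^{\star}(x)$. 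Combining the two bounds yields $\tilde s^{\star}=\tilde s$, so $\tilde s\in O_1$.

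Third, I would prove $M_1\circ M_2=I$. Let $\tilde g$ be the equal-bid function of $\tilde s=M_2(g,c)$; by Theorem~\ref{lemma9}(2) applied to $\tilde s$,
$$\int_{a_1}^{x}f_1(\tau)\tilde g(\tau)\,d\tau=p^w(\tilde s(x))F_1[x]+p^p(\tilde s(x))-p^p(c),$$
which, matched with the defining equation for $M_2$, yields $\int_{a_1}^{x}f_1(\tau)(\tilde g(\tau)-g(\tau))\,d\tau=0$ for every $x$. Thus $f_1\tilde g=f_1 g$ almost everywhere, and since $g$ and $\tilde g$ are both left continuous (Lemma~\ref{lemma12}(2)), $\tilde g=g$ pointwise on $(a_1,a_2]$; combined with $\tilde s(a_1)=c$, this gives $M_1(M_2(g,c))=(g,c)$. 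In the other direction, $M_2\circ M_1=I$ is immediate from Theorem~\ref{lemma9}(2), which shows that any $s\in O_1$ is uniquely determined from $(g,s(a_1))$ by the very same integral equation used to define $M_2$. Hence $M_1$ is a bijection with inverse $M_2$.

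The main obstacle is the middle step: verifying that the abstractly defined $\tilde s$ really sits in $O_1$, i.e., that $\tilde s=\tilde s^{\star}$. This is delicate because the smoothing procedure uses the follower's utility $u_B$, which itself depends on $\tilde s$, creating a fixed-point circularity. Resolving it requires carefully translating the defining integral equation of $M_2$ into a statement about the equal-utility curves associated with $\tilde s$'s own induced $u_B$, and then squeezing $\tilde s^{\star}$ between $\tilde s$ and itself via Lemma~\ref{lemma4}(4) and Definition~\ref{def6}.
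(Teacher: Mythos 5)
Your overall plan mirrors the paper's: define $M_2$ by the same integral identity, show $M_2(g)$ is a fixed point of the sort-and-smooth transform, then use Theorem~\ref{lemma9}(2) plus left-continuity of $g,\tilde g$ to get $M_1\circ M_2=I$ and injectivity of $M_1$. Steps one, three and four of your outline are essentially the paper's argument.

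The problem is the step you yourself flag as the main obstacle, and your write-up does not actually close it. You write that the defining equation for $\tilde s$ ``can be rewritten as'' $u_B(g(x))=F_1[x]g(x)-p^w(\tilde s(x))F_1[x]-p^p(\tilde s(x))$. This is not a rewriting; it is the substantive claim that the follower's optimal utility at type $g(x)$, against the strategy $\tilde s$ that was just constructed, equals $F_1[x]g(x)-\int_{a_1}^xf_1g\,d\tau - p^p(c)$. Equivalently, it is the claim that a follower of type $g(x)$ has $\tilde s(x)$ as a best response — which presupposes exactly the fact you are trying to establish, namely that $(x,\tilde s(x))$ lies on $eu_B(g(x),\cdot)$. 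You name the circularity but never break it.

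The paper breaks it by an explicit computation: it introduces the auxiliary function $\tilde u(y,t)=[y-p^w(\tilde s(t))]F_1[t]-p^p(\tilde s(t))$, argues that $u_B(y)=\max_t\tilde u(y,t)$ (using $\tilde s(a_1)=0$ so the follower always bids inside the leader's bid range, and the monotone rearrangement making $P_B$ and $F_1$ interchangeable in the sup), then differentiates to get $\partial_t\tilde u(g(x),t)=(g(x)-g(t))f_1(t)$, whose sign shows the maximum over $t$ is attained at $t=x$. That calculation is what actually produces $u_B(g(x))=\tilde u(g(x),x)$ and hence membership on the equal-utility curve; the sandwich $\tilde s(x)=eu_B(g(x),x)\le \tilde s^\star(x)\le\tilde s(x)$ then finishes. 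Without that derivative argument (or something equivalent), your claim that ``$(x,\tilde s(x))$ lies on $eu_B(g(x),\cdot)$'' is unsupported, and $\tilde s\in O_1$ is not established. A secondary, minor point: your appeal to Lemma~\ref{lemma4}(4) for $\tilde s^\star\le\tilde s$ is also only licensed once you know $\tilde s$ is a valid (monotone, well-defined) base strategy with the required boundary behavior — the paper checks $\tilde s'\ge 0$ via Lemma~\ref{lemma3.25}'s formula and $g\ge p^w(\tilde s)$, a check you assert but do not carry out.
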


As we will see in Lemma~\ref{lemmasa1=0}, the optimal strategy is in set $O_1$,
then finding optimal strategy is equivalently to finding the optimal function $g$ in set $O_2$.
\section{optimize equal-bid function g}
In this section, we solve for the optimal $g$ to derive the final form of $s^*_A$.
\subsection{General bid-based payment}
\begin{lemma}\label{lemmasa1=0}
In the optimal strategy, $s^*_A(a_1)=0$.
\end{lemma}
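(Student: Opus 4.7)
The plan is to argue by contradiction: suppose an optimal smoothed strategy $s^*_A$ has $s^*_A(a_1)=c>0$, and construct a competitor $\tilde{s}$ with $\tilde{s}(a_1)=0$ that yields at least the same expected utility. Let $g$ be the equal-bid function of $s^*_A$ (Definition~\ref{def7}), and let $\tilde{s}$ be the smoothed strategy obtained from $(g,0)\in O_2$ via the bijection in Theorem~\ref{theorem8}. By construction $\tilde{s}$ has the same equal-bid function $g$, so Lemma~\ref{lemma8.5} gives both strategies the identical leader winning probability $F_2[g(x)]$ at each type $x$. The entire difficulty is therefore reduced to comparing payments.

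Applying the integral equation of Theorem~\ref{lemma9}(2) to both strategies:
\begin{align*}
p^w(s^*_A(x))F_1[x]+p^p(s^*_A(x)) &= \int^{x}_{a_1}f_1(t)g(t)\,dt+p^p(c),\\
p^w(\tilde{s}(x))F_1[x]+p^p(\tilde{s}(x)) &= \int^{x}_{a_1}f_1(t)g(t)\,dt.
\end{align*}
The right-hand sides differ by $p^p(c)\ge 0$, and the map $t\mapsto p^w(t)F_1[x]+p^p(t)$ is strictly increasing for $x>a_1$ (since $p^w+p^p$ is strictly increasing and $F_1[x]>0$). Hence $\tilde{s}(x)\le s^*_A(x)$ pointwise on $(a_1,a_2]$, with strict inequality wherever $p^p(c)>0$. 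Combining equal winning probabilities with weakly lower bids,
\[
u_A(\tilde{s})-u_A(s^*_A)=\int^{a_2}_{a_1}\left[(p^w(s^*_A)-p^w(\tilde{s}))\,F_2[g(x)]+p^p(s^*_A)-p^p(\tilde{s})\right]f_1(x)\,dx\ge 0,
\]
with strict inequality whenever $p^p(c)>0$, using the strict monotonicity of $p^w+p^p$. This contradicts the optimality of $s^*_A$ in the non-degenerate case.

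The main obstacle is the degenerate case $p^p(c)=0$ (in particular the first-price auction, where $p^p\equiv 0$). Here the two integral equations coincide, forcing $\tilde{s}(x)=s^*_A(x)$ on $(a_1,a_2]$, and continuity of $\tilde{s}$ then forces $\tilde{s}(a_1)=\lim_{x\to a_1^+}s^*_A(x)=s^*_A(a_1)=c$, so no smoothed strategy with equal-bid function $g$ and value $0$ at $a_1$ exists unless $c=0$ already. To handle this case one must perturb $g$ itself: zeroing $g$ on a small interval $[a_1,a_1+\epsilon]$ both removes the low-value leader types from the pool of winners and drives $s^*_A$ to zero there, which is a strict improvement because with $s^*_A(a_1)=c>0$ the leader at types $x\in[a_1,c)$ is winning with positive probability at a bid exceeding his own value and hence contributing negative surplus. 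The resulting improved strategy satisfies $g(a_1)=0$, which by continuity forces $s^*_A(a_1)=0$.
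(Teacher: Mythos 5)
Your first argument is, in substance, the paper's own proof made precise: the paper likewise fixes $g$, observes that by Theorem~\ref{lemma9}(2) lowering $s^*_A(a_1)$ lowers the constant $p^p(s^*_A(a_1))$ and hence lowers $s^*_A(x)$ pointwise, and concludes from the unchanged winning probability $F_2[g(x)]$ that the leader's utility weakly increases. Your routing through the bijection of Theorem~\ref{theorem8} and the explicit payment comparison is a correct formalization of that two-line argument. More importantly, your identification of the degenerate case is a genuine observation that the paper's proof silently skips: when $p^p$ vanishes on $[0,c]$ --- in particular in the first-price auction, the headline application --- the constant $p^p(s^*_A(a_1))$ is zero for every $c$, so ``lowering $s^*_A(a_1)$ with $g$ fixed'' is not an available move: $s^*_A$ is already pinned down by $g$ via $s^*_A(x)=\frac{1}{F_1[x]}\int_{a_1}^x f_1(t)g(t)\,dt$, and the lemma still has content there (e.g.\ $g\equiv b_2$ induces $s^*_A\equiv b_2$, hence $s^*_A(a_1)=b_2>0$).

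However, your patch for the degenerate case is a sketch with a hole. Truncating $g$ to zero on $(a_1,a_1+\epsilon]$ is the right move, but the justification offered --- that types in $[a_1,c)$ win with positive probability at a bid above value --- requires both $c>a_1$ (so that such types exist) and $c>b_1$ (so that their winning probability, roughly $F_2[c]$, is actually positive). When $0<c\le b_1$ those low types were never winning, and the improvement must come instead from the induced decrease $\frac{1}{F_1[x]}\int_{a_1}^{a_1+\epsilon}f_1g>0$ in the bids of all higher types; that still gives a weak improvement, so this sub-case is recoverable with a different sentence. The sub-case you do not handle is $b_1<c\le a_1$ (possible when $a_1>b_1$): there the types in $(a_1,a_1+\epsilon]$ bid below their values and win with positive probability, so truncation destroys a first-order amount of their surplus while gaining a first-order amount for the higher types, and the sign of the net effect requires an actual computation --- essentially the first-order condition $h(x)<0$ near $a_1$ from the later Lagrangian analysis, which you cannot invoke here without circularity since that analysis is set up assuming $s^*_A(a_1)=0$. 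You should also verify that the truncated $g$ remains weakly increasing and left-continuous (so it stays in $O_2$) and that Lemma~\ref{lemma8.5} applies to the resulting strategy, so the new winning probabilities really are $F_2$ of the truncated $g$.
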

Now we formalize the problem:
\begin{eqnarray}
Maximize: u_A(s^*_A)&&\int_{a_1}^{a_2}\{[x-p^w(s^*_A(x))]F_2[g(x)]-p^p(s^*_A(x))\}f_1(x)dx\nonumber\\
s.t.&& g\in[0,b_2] \textrm{ is left continuous, weakly increasing.}\nonumber\\
&&\int^{x}_{a_1}f_1(t)g(t)dt-p^w(s^*_A(x))F_1[x]-p^p(s^*_A(x))=0\nonumber
\end{eqnarray}
Using the method of Lagrange multipliers, the problem becomes
\begin{eqnarray}
Maximize: u_A(s^*_A)&=&\int_{a_1}^{a_2}\{[x-p^w(s^*_A(x))]F_2[g(x)]-p^p(s^*_A(x))\}f_1(x)dx\nonumber\\
&&+\int_{a_1}^{a_2}(\int^{x}_{a_1}f_1(t)g(t)dt-p^w(s^*_A(x))F_1[x]-p^p(s^*_A(x)))R(x)dx\nonumber\\
s.t.&& g\in[0,b_2] \textrm{ is left continuous, weakly increasing.}\nonumber
\end{eqnarray}
Rewrite the objective equation to reduce the double integral of g into single integral.
We denote by $L$ the integrand.

\begin{eqnarray}
Maximize: u_A(s^*_A)&=&\int_{a_1}^{a_2}\{[x-p^w(s^*_A(x))]F_2[g(x)]-p^p(s^*_A(x))\}f_1(x)dx\nonumber\\
&&+\int_{a_1}^{a_2}\{g(x)f_1(x)\int^{a_2}_xR(t)dt-[p^w(s^*_A(x))F_1[x]+p^p(s^*_A(x))]R(x)\}dx\nonumber\\
&=&\int_{a_1}^{a_2}L(s^*_A(x),g(x))dx\nonumber\\
s.t.&& g\in[0,b_2] \textrm{ is left continuous, weakly increasing.}\nonumber
\end{eqnarray}
We say $g(x)$ is free, if $g$ increases or decreases by a small amount on point $x$, $g$ still satisfy the weak monotone constraint and boundary constraint.
In the optimal solution, when $g(x)$ is free, we should have the following equations:
(When $g(x)$ is not free, only Equation (\ref{beq1}) fails to be zero)
\begin{eqnarray}
0=\frac{\partial L}{\partial g}&=&[x-p^w(s^*_A(x))]f_2(g(x))f_1(x)+f_1(x)\int^{a_2}_xR(t)dt\label{beq1}\\
0=\frac{\partial L}{\partial s^*_A}&=&-(p^w)'(s^*_A(x))F_2[g(x)]f_1(x)-(p^p)'(s^*_A(x))f_1(x)\label{beq2}\\
&&-(p^w)'(s^*_A(x))F_1[x]R(x)-(p^p)'(s^*_A(x))R(x)\nonumber\\
0&=&\int^{x}_{a_1}f_1(t)g(t)dt-p^w(s^*_A(x))F_1[x]-p^p(s^*_A(x))\label{beq3}
\end{eqnarray}
Note that the last equation is not $\frac{\partial L}{\partial R}$,
it is the partial derivative of the original objective function with respect to $R$.

In general, we substitute Equation (\ref{beq2}) and (\ref{beq3}) into (\ref{beq1}),
and let $h(x)$ denote $\frac{\partial L}{\partial g}$. Since $g$ is left-continuous,
it's easy to prove $h$ is left-continuous.
\begin{theorem}\label{theoremshg}
(1)For an interval $L$, if $h(x)>0 x\in L$, we have
\begin{displaymath}
g(x)=\lim_{t\rightarrow (\sup L)^+}g(t)\ x\in L
\end{displaymath}
Moreover, if $\sup L=a_2$, then $g(x)=b_2, x\in L$.

Similar, if $h(x)<0, x\in L$, we have
\begin{displaymath}
g(x)=\lim_{t\rightarrow \inf L}g(t), x\in L
\end{displaymath}
If $\inf L=a_1$, then $g(x)=0, x\in L$.

(2)There is an optimal $g$ such that $g(x)\in 0 \cup [b_1,b_2]$.
\end{theorem}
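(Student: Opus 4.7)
The plan is to derive both claims from first-order analysis of the Lagrangian, combined with feasible perturbations that respect weak monotonicity, left-continuity, and the box constraint $g \in [0, b_2]$.

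For Part~(1), consider the case $h(x) > 0$ on $L$ (the $h < 0$ case is symmetric). Let $U := \lim_{t \to (\sup L)^+} g(t)$ when $\sup L < a_2$, and $U := b_2$ when $\sup L = a_2$; by weak monotonicity of $g$, $U$ is the largest value $g$ may take on $L$ while remaining feasible. Suppose for contradiction that $g(x_0) < U$ for some $x_0 \in L$. I would construct a perturbation $\tilde g(x) = \min\{g(x) + \epsilon\,\phi(x),\, U\}$, where $\phi:[a_1,a_2]\to[0,1]$ is a continuous, weakly increasing ramp equal to $1$ on a subinterval $[x_0, x_0+\delta] \subset L$ and vanishing outside a slightly wider neighborhood. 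The function $\tilde g$ is weakly increasing (as a min of a weakly increasing function with the constant $U$), left-continuous (both ingredients are), and bounded by $U \leq b_2$, so $\tilde g \in O_2$; by Theorem~\ref{theorem8} it corresponds to some $\tilde s^*_A \in O_1$ and defines a feasible alternative strategy. Since Equations~(\ref{beq2}) and (\ref{beq3}) hold at the optimum, the standard Lagrangian variational identity gives that the first-order change in $u_A$ from $g$ to $\tilde g$ equals $\int h(x)(\tilde g(x) - g(x))\,dx$. This integral contains a strictly positive $\Theta(\epsilon)$ gain on $[x_0, x_0 + \delta]$ and an error of order $O(\epsilon \cdot \text{ramp width})$ from the leak outside $L$; choosing the ramp width small relative to $\delta$ makes the gain dominate, contradicting optimality of $g$.

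For Part~(2), I would modify any optimal $g^*$ into a $\tilde g$ taking values only in $\{0\}\cup[b_1,b_2]$. Let $c := \sup\{x : g^*(x) < b_1\}$ and define $\tilde g(x) := 0$ for $x \leq c$ and $\tilde g(x) := g^*(x)$ for $x > c$. By weak monotonicity of $g^*$, the set $\{g^* < b_1\}$ is a left-initial segment, so $\tilde g$ is weakly increasing (with a permissible right-jump at $c$) and left-continuous, hence $\tilde g \in O_2$. Since $\tilde g \leq g^*$ pointwise and $s \mapsto p^w(s)F_1[x] + p^p(s)$ is strictly increasing in $s$ (because $(p^w)'$ and $(p^p)'$ are nonnegative with positive sum), Theorem~\ref{theorem8} associates $\tilde g$ with a strategy $\tilde s^*_A \leq s^*_A$ pointwise. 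To compare objectives, note the integrand is $[x - p^w(s^*_A(x))]F_2[g(x)] - p^p(s^*_A(x))$: on $\{x \leq c\}$ both $F_2[g^*(x)]$ and $F_2[0]$ vanish, so only the participation term $-p^p(s^*_A)$ matters, and it weakly increases after the modification; on $\{x > c\}$, $\tilde g = g^*$ so $F_2[g]$ is unchanged, while the decrease in $s^*_A$ weakly raises both $[x - p^w]F_2$ (as $F_2 \geq 0$) and $-p^p$. Hence $\tilde g$ is at least as good as $g^*$.

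The main obstacle I foresee lies in Part~(1): designing a one-parameter family of perturbations that simultaneously respects monotonicity, left-continuity, and the box bound, while controlling the perturbation's ``leak'' outside $L$ well enough that the sign of $\int h\,\delta g$ is preserved. The ramp construction handles this, but making the error estimate rigorous requires local boundedness of $h$, which in turn rests on continuity of $f_1, f_2, F_1, F_2, p^w, p^p$ and their derivatives through Equations~(\ref{beq1})--(\ref{beq3}).
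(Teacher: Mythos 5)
Your proposal takes essentially the same route as the paper for both parts: part (1) in the paper is literally the one-line observation that one could otherwise ``increase $g$ a little on $L$'' and improve the objective, which your ramp perturbation makes precise, and part (2) is exactly the paper's truncation of $g$ to $0$ on the region where $g<b_1$, justified by $F_2[g]=0$ there together with the pointwise decrease of $s^*_A$. One small wrinkle: as literally stated your $\phi$ cannot be both weakly increasing and vanish on both sides of $[x_0,x_0+\delta]$, and the cap $\min\{\cdot,U\}$ would also lower $g$ to $U$ at points beyond $\sup L$ where $g>U$; the fix is to let $\phi$ ramp up only on the left, stay at $1$ to the right, and restrict the modification to $[x_0-\eta,\sup L]$ (monotonicity across $\sup L$ then holds because $g\geq U$ there), which preserves your error estimate.
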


In fact, $g$ can be derived explicitly in fair general settings, as we show below.

\subsection{First price auction}

In first price auction, $p^w(t)=t$ and $p^p(t)=0$.
Equation (\ref{beq1}) - (\ref{beq3}) and $h(x) $become
\begin{eqnarray}
\frac{\partial L}{\partial g}&=&[x-s^*_A(x)]f_2(g(x))f_1(x)+f_1(x)\int^{a_2}_xR(t)dt\nonumber\\
0&=&-F_2[g(x)]f_1(x)-F_1[x]R(x)\nonumber\\
s^*_A(x)&=&\frac{1}{F_1[x]}\int^{x}_{a_1}f_1(t)g(t)dt\nonumber\\
h(x)&=&f_1(x)(xf_2(g(x))-\frac{f_2(g(x))}{F_1[x]}\int^x_{a_1}f_1(t)g(t)dt-\int^{a_2}_x\frac{F_2[g(t)]f_1(t)}{F_1[t]}dt)\nonumber
\end{eqnarray}

\begin{theorem}\label{theorem7}When $F_2$ is uniformly distributed on $[b_1, b_2]$,

(1)if $\forall t, 2f^2_1(t)-F_1[t]f'_1(t)\geq0$, then optimal $g(x)$ consists of at most 3 values.

(2)
if $b_1=0$, then $g(x)$ consists of 2 pieces.
When $t_0=a_2\int^{a_2}_{t_0}\frac{f_1(t)}{F_1[t]}dt$ has a solution,
\begin{displaymath}
g(x)=\left\{
\begin{array}{ll}
0 & x\in (a_1,t_0)\\
b_2 & x\in (t_0,a_2)
\end{array}
\right.
\quad \textrm{,where } t_0=a_2\int^{a_2}_{t_0}\frac{f_1(t)}{F_1[t])}dt
\end{displaymath}
Otherwise, $g(x)=b_2,\forall x\in[a_1,a_2]$.
\end{theorem}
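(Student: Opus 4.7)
The plan is to combine the simplification of $h(x)$ under $F_2$ uniform with Theorem~\ref{theoremshg} to analyze the step structure of the optimal $g$. I substitute $f_2\equiv 1/(b_2-b_1)$ on $[b_1,b_2]$ in the formula for $h$, and recall from the first-price specialization that $s^*_A(x)=\frac{1}{F_1[x]}\int_{a_1}^x f_1(t)g(t)dt$. On any region where $g(x)\in[b_1,b_2]$, this gives
\begin{equation*}
\frac{h(x)}{f_1(x)}=\frac{x-s^*_A(x)}{b_2-b_1}-\int_x^{a_2}\frac{F_2[g(t)]f_1(t)}{F_1[t]}dt.
\end{equation*}
By Theorem~\ref{theoremshg}(2), I may restrict to $g$ taking values in $\{0\}\cup[b_1,b_2]$, weakly increasing and left-continuous; since $g$ is essentially piecewise constant, the claim reduces to bounding the number of distinct plateaus.

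For part (1), I argue by contradiction: suppose the optimal $g$ takes at least four distinct values, so there are two intermediate plateaus $g\equiv c_1$ on $L_1$ and $g\equiv c_2$ on $L_2$ with $b_1\leq c_1<c_2<b_2$. The first-order optimality condition for perturbing the interior value $c_i$ (a uniform up-or-down shift on $L_i$ preserves monotonicity) gives $\int_{L_i} h(x)\,dx=0$. Using $(s^*_A)'(x)=f_1(x)(g(x)-s^*_A(x))/F_1[x]$, direct differentiation on a constant plateau $g\equiv c\in[b_1,b_2]$ yields
\begin{equation*}
\frac{d}{dx}\left(\frac{h(x)}{f_1(x)}\right)=\frac{1}{b_2-b_1}\left(1+\frac{f_1(x)(s^*_A(x)-b_1)}{F_1[x]}\right).
\end{equation*}
The hypothesis $2f_1^2-F_1f_1'\geq 0$ is equivalent to $f_1/F_1^2$ being non-increasing, and I would use it to control how $s^*_A$ evolves both within a plateau and across the jump to the next plateau, ultimately showing that $h/f_1$ is "monotone enough" to admit at most two sign changes. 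Combined with the signs dictated by Theorem~\ref{theoremshg} ($h\leq 0$ where $g=0$, $h\geq 0$ where $g=b_2$) and the vanishing-integral condition on each interior plateau, two interior plateaus become incompatible, producing the contradiction.

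For part (2), with $b_1=0$, I would rule out any interior plateau $c\in(0,b_2)$ directly: on such a plateau the explicit form of $h/f_1$ together with the boundary conditions at the two jumps (from $0$ to $c$ and from $c$ to $b_2$) would be shown to force a strict-sign discrepancy incompatible with $\int_L h\,dx=0$. Hence $g$ takes only the values $\{0,b_2\}$, so $g(x)=0$ on $(a_1,t_0)$ and $g(x)=b_2$ on $(t_0,a_2)$, giving $s^*_A(x)=b_2(1-F_1[t_0]/F_1[x])$ for $x>t_0$. Writing
\begin{equation*}
u_A(t_0)=\int_{t_0}^{a_2}\left(x-b_2+\frac{b_2 F_1[t_0]}{F_1[x]}\right)f_1(x)\,dx,
\end{equation*}
the first-order condition $du_A/dt_0=0$ simplifies to the stated fixed-point equation for $t_0$; if no interior solution exists, the same computation shows $du_A/dt_0<0$ throughout, so the optimum collapses to $t_0=a_1$, i.e., $g\equiv b_2$.

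The main obstacle will be the sign/monotonicity analysis in part (1): showing that the hypothesis $2f_1^2-F_1 f_1'\geq 0$ is exactly what is needed to prevent more than one interior plateau. This requires carefully tracking how $h/f_1$ evolves across successive plateaus (both on constant segments and over the jumps where $g$ changes value) and leveraging the monotonicity of $f_1/F_1^2$ to bound its slope, so that the resulting sign pattern admits at most the three-level step structure $\{0,c,b_2\}$.
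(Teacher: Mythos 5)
Your setup is on the right track and your plateau derivative is actually correct: on any interval where $g\geq b_1$ one indeed gets
\begin{displaymath}
\Bigl(\frac{h(x)}{f_1(x)}\Bigr)'=f_2\Bigl(1+\frac{f_1(x)\,(s^*_A(x)-b_1)}{F_1[x]}\Bigr),
\end{displaymath}
which is exactly the paper's Equation~(\ref{equ5}) after clearing the factor $F_1^2/(f_1f_2)$. But the heart of part (1) is precisely the step you defer with ``I would use it to control\dots ultimately showing that $h/f_1$ is monotone enough,'' and that step is not a routine consequence of $f_1/F_1^2$ being non-increasing. The paper's trick is to multiply $(h/f_1)'$ by the positive weight $F_1^2(x)/(f_1(x)f_2)$, obtaining $\frac{F_1^2(x)}{f_1(x)}+\int_{t_0}^x g(t)f_1(t)\,dt-b_1F_1[x]$, and to differentiate \emph{once more}: the result is $\frac{2F_1[x]f_1^2(x)-F_1^2(x)f_1'(x)}{f_1^2(x)}+(g(x)-b_1)f_1(x)\geq 0$ under the stated hypothesis together with $g\geq b_1$ on the region where $g>0$. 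This shows $(h/f_1)'$ changes sign at most once, hence $h/f_1$ has at most two zeros, and Theorem~\ref{theoremshg} then forces $g$ to be constant on each sign-definite interval of $h$, giving at most three pieces. Without this second differentiation (or an equivalent), your contradiction argument with two interior plateaus and $\int_{L_i}h=0$ does not close: nothing yet bounds the number of sign changes of $h$.

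For part (2) you also leave the key exclusion (``would be shown to force a strict-sign discrepancy'') unproved, but here your own formula already does the work: with $b_1=0$ the bracket is $1+f_1s^*_A/F_1>0$, so $h/f_1$ is strictly increasing wherever $g>0$, and a one-line computation on the initial segment where $g=0$ shows it is strictly increasing there too; hence $h$ has at most one sign change and $g$ has at most two pieces, with no extra hypothesis needed. Your direct optimization of $u_A(t_0)$ is a legitimate alternative to the paper's condition $h(t_0)=0$ and yields the same fixed point $t_0=b_2\int_{t_0}^{a_2}\frac{f_1(t)}{F_1[t]}\,dt$ (the $a_2$ in front of the integral in the theorem statement is a typo for $b_2$). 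So the proposal is a correct skeleton with the right quantities in hand, but the decisive monotonicity argument for part (1) is missing.
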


\begin{proof}

Let $t_0=\sup\{ t|g(t)=0\}$.
If $t_0$ does not exist, let $t_0=a_1$.

When $F_2$ is uniformly distributed, $f_2(g(x))$ does not change on $(t_0,a_2]$, we denote it $f_2$.
It's easy to see that $h(x)$ is continuous on $(t_0,a_2]$.

(1)When $t>t_0$, then by Lemma~\ref{lemmasa1=0}, $g(t)\geq b_1$, $f_2(g(t))$ is constant.
\begin{eqnarray}
h(x)&=&f_1(x)f_2\cdot(x-\frac1{F_1[x]}\int^x_{t_0}g(t)f_1(t)dt+\int^{a_2}_x\frac{-f_1(t)}{F_1[t]}(g(t)-b_1)dt)\nonumber\\
(\frac{h(x)}{f_1(x)})'\frac{F^2_1(x)}{f_1(x)f_2}&=&\frac{F^2_1(x)}{f_1(x)}+\int^x_{t_0}g(t)f_1(t)dt-b_1F_1[x]\label{equ5}\\
((\frac{h(x)}{f_1(x)})'\frac{F^2_1(x)}{f_1(x)f_2})'&=&\frac{2F_1[x]f^2_1(x)-F^2_1(x)f'_1(x)}{f^2_1(x)}+g(x)f_1(x)-b_1f_1(x)\nonumber
\end{eqnarray}
Since $\forall t, 2f^2_1(t)-F_1[t]f'_1(t)\geq0$, then $((\frac{h(x)}{f_1(x)})'\frac{F^2_1(x)}{f_1(x)f_2})'>0$,
then there is at most one cross for $(\frac{h(x)}{f_1(x)})'=0$,
then there are at most two crosses for $\frac{h(x)}{f_1(x)}=0$, i.e. $h(x)=0$.
Assume the two crosses are $t_1$ and $t_2$. There are several cases that how the sign of $h$ changes,
for any case, we can prove function $g$ consists of at most three pieces.
Take an example, we consider the following case:
\begin{displaymath}
h(x)=\left\{
\begin{array}{ll}
>0 & x\in (t_0,t_1)\\
<0 & x\in (t_1,t_2)\\
>0 & x\in (t_2,a_2)
\end{array}
\right.
\end{displaymath}
By Theorem~\ref{theoremshg} $g(x)=g(t_1), x\in(t_0,t_2]$ is constant and $g(x)=b_2, x\in(t_2,a_2]$ is constant.
So there are at most two pieces in $g(x)$ in $(t_0,a_2]$. Counting the part $g$ is zero, there are at most three pieces.

(2)
Since $b_1=0$, $f_2(0)=f_2$, which means $h$ is continuous on the whole range.

When $x\leq t_0$,
$$
\frac{h(x)}{f_1(x)}=f_2x+\int^{a_2}_{t_0}\frac{-f_1(t)}{F_1[t]}g(t)f_2dt
$$
So $(\frac{h(x)}{f_1(x)})'>0 \forall x\in(a_1,t_0]$.
Moreover, $h(x)$ is continuous at point $t_0$.
Combined Equation~(\ref{equ5}), we have $\frac{h(x)}{f_1(x)}$ strictly increases on the whole range $[a_1, a_2]$.
Then equation $h(x)=0$ has at most one solution.
When there is no solution, since $\lim_{t\rightarrow a_2}h(t)>0$, it
must be $h(x)>0$ for whole range. Then it's a special subcase of the one solution case by setting $t_0=a_1$.
So we only need to consider the nontrivial case that there is one solution for $h(x)=0$. Assume
\begin{displaymath}
h(x)=\left\{
\begin{array}{ll}
<0 & x\in (a_1,t_0)\\
>0 & x\in (t_0,a_2)
\end{array}
\right.
\end{displaymath}
By Theorem~\ref{theoremshg}, we have
$g(x)=0, x\in[a_1,t_0)$ and $g(x)=b_2, x\in(t_0,b_2]$.
Since $g(x)$ is left continuous, it should be $g(t_0)=0$.

Then we compute the optimal breaking point $t_0$, from $h(t_0)=0$, we have
\begin{eqnarray}
t_0=b_2\int^{a_2}_{t_0}\frac{f_1(t)}{F_1[t]}dt\nonumber
\end{eqnarray}
When there is no solution we set $t_0=a_1$.
In conclusion, when $t_0=a_2\int^{a_2}_{t_0}\frac{f_1(t)}{F_1[t]}dt$ has a solution,
\begin{displaymath}
g(x)=\left\{
\begin{array}{ll}
0 & x\in (a_1,t_0)\\
b_2 & x\in (t_0,a_2)
\end{array}
\right.,
\quad \textrm{where } t_0=a_2\int^{a_2}_{t_0}\frac{f_1(t)}{F_1[t]}dt
\end{displaymath}
When there is no solution, $g(x)=b_2,\forall x\in[a_1,a_2]$.
\end{proof}

\begin{theorem}
\label{theorem1st}
In Example~\ref{eg1}, the optimal utility of the leader is 0.22. The closed-form representation of the optimal $g(x)$ and $s^*_A(x)$ are:
\begin{displaymath}
g(x)=\left\{
\begin{array}{ll}
0 & x\in [0,t_0] \\
1 & x\in (t_0,1]
\end{array}\right.
\qquad
s^*_A(x)=\left\{
\begin{array}{ll}
0 & x\in [0,t_0]\\
1-\frac{t_0}{x} & \in (t_0,1] \\
\end{array} \quad ,t_0\approx0.567\right.
\end{displaymath}
Here $x_0$ is actually the solution of $t_0=b_2\int^{a_2}_{t_0}\frac{f_1(x)}{F_1[x]}dx$.
\end{theorem}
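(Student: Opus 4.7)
The plan is to read off this result as a direct corollary of Theorem~\ref{theorem7}(2) specialized to the uniform setting of Example~\ref{eg1}, i.e.\ $a_1=b_1=0$, $a_2=b_2=1$, $f_1\equiv 1$, $F_1[x]=x$. First I would verify the hypotheses: $b_1=0$ holds and $2f_1^2-F_1 f_1'\equiv 2>0$, so part~(2) applies. The fixed-point equation $t_0 = b_2\int_{t_0}^{a_2}\frac{f_1(t)}{F_1[t]}dt$ then reduces to $t_0 = \int_{t_0}^{1} t^{-1}\,dt = -\ln t_0$. Since $\phi(t)=t+\ln t$ is strictly increasing on $(0,1]$ with $\phi(0^+)=-\infty$ and $\phi(1)=1$, there is a unique root, numerically $t_0\approx 0.567$ (the Omega constant, characterized by $t_0 e^{t_0}=1$). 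Hence Theorem~\ref{theorem7}(2) immediately delivers the claimed form of $g$.

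Next I would recover $s^*_A$ from Theorem~\ref{lemma9}(2). With $p^w(t)=t$ and $p^p\equiv 0$, and using $s^*_A(a_1)=s^*_A(0)=0$ granted by Lemma~\ref{lemmasa1=0}, the integral equation collapses to $s^*_A(x) = \frac{1}{F_1[x]}\int_{0}^{x} f_1(t)g(t)\,dt = \frac{1}{x}\int_{0}^{x} g(t)\,dt$. For $x\le t_0$ the integrand vanishes so $s^*_A(x)=0$; for $x>t_0$ I get $s^*_A(x) = \frac{1}{x}\int_{t_0}^{x} 1\,dt = 1 - t_0/x$, exactly the stated form.

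Finally I would plug $(g,s^*_A)$ into the leader's objective $u_A(s^*_A)=\int_{a_1}^{a_2}[x-p^w(s^*_A(x))]F_2[g(x)]f_1(x)\,dx$. The contribution from $[0,t_0]$ vanishes (since $F_2[g(x)]=0$ there), and on $(t_0,1]$ we have $F_2[g(x)]=1$ and $f_1=1$, so
\[
u_A \;=\; \int_{t_0}^{1}\!\bigl(x - 1 + t_0/x\bigr)\,dx \;=\; \Bigl[\tfrac{x^2}{2} - x + t_0\ln x\Bigr]_{t_0}^{1} \;=\; \tfrac{t_0^2}{2} + t_0 - \tfrac{1}{2},
\]
where the last equality uses $\ln t_0 = -t_0$. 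Substituting $t_0\approx 0.5671$ yields $u_A\approx 0.22$, matching the theorem.

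There is essentially no hard step here: Theorems~\ref{theorem7} and~\ref{lemma9} have already done the heavy lifting, so the only remaining work is a one-variable fixed-point solve plus two elementary integrals. The one place I would be a little careful is invoking $\ln t_0 = -t_0$ to simplify the closed-form utility, which must be applied after antidifferentiation rather than before; and in keeping track that the break-point convention for $g$ at $t_0$ (closed on one side, open on the other) does not affect any integral since it is a measure-zero discrepancy.
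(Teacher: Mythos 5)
Your proposal is correct and follows essentially the same route as the paper's own (much terser) proof: invoke Theorem~\ref{theorem7}(2) to get the two-piece $g$ with cutoff $t_0=-\ln t_0\approx 0.567$, recover $s^*_A(x)=1-t_0/x$ from the integral relation of Theorem~\ref{lemma9}(2) with $s^*_A(0)=0$, and evaluate $\int_{t_0}^1(x-1+t_0/x)\,dx\approx 0.228$. Your extra care with the closed form $u_A=t_0^2/2+t_0-1/2$ via $\ln t_0=-t_0$ is a nice touch the paper omits; note the paper's displayed value $0.22$ in the theorem versus $0.228$ in its proof is the paper's own rounding inconsistency, not an error in your argument.
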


We should notice that the leaders bids zero $56.7\%$ of the time.
We also observe the following interesting, even counterintuitive property.
\begin{corollary}
The leader sometimes bids above his valuation.
\label{corollary1}
\end{corollary}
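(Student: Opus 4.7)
Since Corollary~\ref{corollary1} is a qualitative existence claim, my plan is to prove it by exhibiting one explicit instance of the framework in which the closed-form optimal strategy $s^*_A$ supplied by Theorems~\ref{theorem7} and~\ref{lemma9} strictly exceeds the identity on a nonempty subinterval of the leader's type space.

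Concretely, I would specialize to a first-price auction with $F_1$ uniform on $[a_1,a_2]$, so that the regularity condition $2 f_1^2 - F_1 f_1' \geq 0$ of Theorem~\ref{theorem7}(1) is trivially satisfied, and $F_2$ uniform on $[b_1,b_2]$ with $b_1>0$. By Theorem~\ref{theorem7}(1) the optimal equal-bid function $g$ is then a step function taking at most three values in $\{0\}\cup[b_1,b_2]$, and the jump points are pinned down by the first-order conditions (\ref{beq1})--(\ref{beq3}). Substituting this $g$ into the first-price specialization $s^*_A(x)=\frac{1}{F_1[x]}\int_{a_1}^{x}f_1(t)g(t)\,dt$ of Theorem~\ref{lemma9}(2) produces $s^*_A$ in closed form on each piece, and evaluation at a point $x_0$ in the top piece yields the desired inequality $s^*_A(x_0)>x_0$ by direct algebra.

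The main obstacle is parameter tuning. A short calculation shows that in the simpler two-step regime of Theorem~\ref{theorem7}(2) (the case $b_1=0$), one has $s^*_A(a_2)=b_2\bigl(1-F_1[t_0]\bigr)$, and the defining equation $t_0=b_2\int_{t_0}^{a_2}\frac{f_1(t)}{F_1[t]}dt$ forces $s^*_A(a_2)\to a_2$ from below as $b_2\to\infty$; overbidding therefore does not arise in that regime. One is thus driven to the genuinely three-step regime of Theorem~\ref{theorem7}(1) with $b_1>0$, where the middle step $g\equiv b_1$ contributes extra mass to the integral and can push $s^*_A$ past the diagonal. The real work is then to verify that the Lagrangian system actually selects a genuine three-step $g$ for a concrete choice of $(a_1,a_2,b_1,b_2)$ and that the resulting $s^*_A(x_0)>x_0$; once such an instance is displayed, the corollary follows by inspection.
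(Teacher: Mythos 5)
There is a genuine gap: Corollary~\ref{corollary1} is a pure existence claim, so the proof \emph{is} the exhibited witness, and your proposal never produces one. You correctly set up the right template (pick parameters, read off the optimal $g$ from Theorem~\ref{theorem7}, compute $s^*_A$ from $s^*_A(x)=\frac{1}{F_1[x]}\int_{a_1}^x f_1(t)g(t)\,dt$, and check $s^*_A(x_0)>x_0$), and your calculation that the uniform--uniform, $b_1=0$ case can never overbid is correct (indeed $s^*_A(a_2)=b_2(1-F_1[t_0])$ with $t_0=b_2\ln(a_2/t_0)$ forces $s^*_A(a_2)<a_2$). But you then commit to the three-step regime with $b_1>0$ and explicitly defer ``the real work'' of showing that the Lagrangian system actually selects a genuine three-step $g$ for some concrete $(a_1,a_2,b_1,b_2)$ and that the resulting $s^*_A$ crosses the diagonal. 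That deferred step is the entire content of the corollary, and it is not obviously achievable: the middle step has $g\equiv c<b_2$, so relative to a two-piece solution with the same lower cut it \emph{decreases} $s^*_A(a_2)$; whether the shifted cut points compensate is exactly what you would have to compute, and the paper gives no closed form for the three-piece breakpoints to help you.

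The paper's witness avoids all of this by dropping an assumption you imposed unnecessarily: it stays in the two-piece, $b_1=0$ regime of Theorem~\ref{theorem7}(2) but takes $F_1$ \emph{non-uniform}. Your own formula $s^*_A(a_2)=b_2\bigl(1-F_1[t_0]\bigr)$ shows that overbidding at $a_2$ needs only $F_1[t_0]>1-a_2/b_2$, i.e.\ a leader distribution that piles mass below the threshold, together with $b_2\gg a_2$. The paper uses $a_1=0,\ a_2=2,\ b_1=0,\ b_2=10$ and the piecewise-constant density $f_1=2/3$ on $(0,1]$, $f_1=1/3$ on $(1,2]$; this still satisfies $2f_1^2-F_1 f_1'\geq 0$ (and part (2) of Theorem~\ref{theorem7} does not actually need that condition), the fixed-point equation for $t_0$ remains a one-dimensional root-finding problem, and $s^*_A(2)=\tfrac{10}{3}(2-t_0)>2$ follows by inspection. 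So the repair to your argument is not to escalate to the three-step regime but to relax the uniformity of $F_1$; you only imposed uniformity to make the regularity hypothesis ``trivially satisfied,'' which is not needed here. (If you do carry this out, verify the threshold equation numerically for your chosen $f_1$ --- note the paper itself is inconsistent about whether the prefactor in $t_0=\cdot\int_{t_0}^{a_2}\frac{f_1}{F_1}$ is $a_2$ or $b_2$; the Lagrangian derivation gives $b_2$.)
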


\begin{proof}
Consider the setting:
$b_1=0,b_2=10,a_1=0,a_2=2$ and $f_2(x)=0.1, x\in[0,10]$
\begin{displaymath}
f_1(x)=\left\{
\begin{array}{ll}
2/3 & x\in (0,1]\\
1/3 & x\in (1,2]
\end{array}
\right.
\end{displaymath}
We compute the optimal commitment.

By Theorem~\ref{theorem7}, check whether there is a $t_0$ such that $t_0=2\int^2_{t_0}\frac{f_1(t)}{F_1[t]}dt$.
After considering both cases $t_0\leq 1$ and $t_0>1$, we found $t_0\approx1.3386$.
Then,
\begin{displaymath}
g(x)=\left\{
\begin{array}{ll}
0 & x\in (0,t_0]\\
10 & x\in (t_0,2]
\end{array}
\right.
\end{displaymath}
Substitute in $s^*_A$, we get
$$s^*_A(2)=\int^2_{t_0}10/3dx=2.2048>2,$$
So in the optimal strategy, the leader overbids his value!
\end{proof}

The corollary is counterintuitive and deserves a close scrutiny.

Let $x_0$ denote the solution of $s^*_A(x)=x$, and $x_0\approx1.88$.
The leader's strategy $s^*_A$ and winning probability $P_A[t,s^*_A]$ is listed below,
\begin{displaymath}
(s^*_A(x),P_A[t,s^*_A])=\left\{
\begin{array}{ll}
(0,0) & x\in [0,1.338]\\
(\frac{10(x-1.338)}{x+1}\leq x, 1) & x\in(1.338,1.88]\\
(\frac{10(x-1.338)}{x+1}\geq x, 1) & x\in (1.88,2]
\end{array}
\right.
\end{displaymath}

The leader gives away positive utility when his value is low, i.e. $x \in[0,1.338]$.
Though the probability that the leader's value lies in $[0,1.338]$ is large, around $78\%$.
However, even if the leader wins, he only gains a small amount of utility, since his valuation is small.

The leader also sacrifices positive utility when $A$'s value is high, i.e. $x\in (1.88,2]$, he is supposed to have the high utility in this case.
However, the probability of the leader's value lies in $[1.88,2]$ is $4\%$.
He only loses a small amount of expected utility.

By placing the aggressive bids on $(1,338,2]$,
the leader threats the follower so that she bids zero.
We say ``aggressive'' because the leader's bid increases very fast in that interval.
When the leader has relatively high value, i.e. $x\in (1,338, 1.88]$, he wins deterministically with relative lower prices!  It turns out that the utility increment on this interval is high enough to compensate the decrements on the other two intervals.

\subsection{All-pay auction.}

In first price auction, $p^w(t)=t$ and $p^p(t)=0$.
Equation (\ref{beq1})(\ref{beq2})(\ref{beq3}) and $h(x) $become

\begin{eqnarray}
\frac{\partial L}{\partial g}&=&xf_2(g(x))f_1(x)+f_1(x)\int^{a_2}_xR(t)dt\nonumber\\
0&=&-f_1(x)-R(x)\nonumber\\
s^*_A(x)&=&\int^{x}_{a_1}f_1(t)g(t)dt\nonumber\\
h(x)&=&xf_2(g(x))f_1(x)+f_1(x)\int^{a_2}_x-f_1(t)dt\label{beq4}\\
&=&f_1(x)[xf_2(g(x))-1+F_1[x]]\nonumber
\end{eqnarray}
\begin{theorem}\label{theoremap7}
When $f_2$ is weakly increasing,
then optimal $g(x)$ is a step function consisting of at most 2 values,
0 and $b_2$, and the cut point $t_0$ of $g$ is the solution of
$b_2-t-b_2F_1[t]=0$. In particular, when $F_1$ is uniform distribution, $t_0=\frac{b_2a_2}{b_2+a_2-a_1}$.
\begin{displaymath}
g(x)=\left\{
\begin{array}{ll}
0 & x\in [a_1,t_0]\\
b_2 & x\in (t_0,a_2]
\end{array}
\right.\qquad
s^*_A(x)=\left\{
\begin{array}{ll}
0 & x\in [a_1,t_0]\\
b_2(F_1[x]-F_1[t_0]) & x\in (t_0,a_2]
\end{array}
\right.
\end{displaymath}
\end{theorem}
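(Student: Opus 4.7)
The plan is to exploit the very clean form of $h$ derived from Equation~(\ref{beq4}), namely
\begin{displaymath}
\frac{h(x)}{f_1(x)} = x f_2(g(x)) + F_1[x] - 1,
\end{displaymath}
to show that $h$ changes sign at most once, and then invoke Theorem~\ref{theoremshg} to pin down the step-function structure of $g$.

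First I would argue that $h(x)/f_1(x)$ is strictly increasing in $x$. The function $g$ is weakly increasing by Lemma~\ref{lemma12}, and by hypothesis $f_2$ is weakly increasing, so $f_2(g(x))$ is nonnegative and weakly increasing. Since $x > 0$ is strictly increasing and $F_1[x]$ is strictly increasing, while $x f_2(g(x))$ is weakly increasing, the sum is strictly increasing. Hence $h$ has at most one zero $t_0$ in $(a_1, a_2)$, with $h < 0$ on $(a_1, t_0)$ and $h > 0$ on $(t_0, a_2)$ (allowing the degenerate cases $t_0 \in \{a_1, a_2\}$ when $h$ is single-signed).

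Next I would apply Theorem~\ref{theoremshg}: on $(a_1, t_0)$ we have $h < 0$ and $\inf = a_1$, so $g(x) = 0$; on $(t_0, a_2)$ we have $h > 0$ and $\sup = a_2$, so $g(x) = b_2$. This yields exactly the claimed step shape. From Theorem~\ref{lemma9}(2) in the all-pay specialization ($p^w \equiv 0$, $p^p(t) = t$) the strategy reduces to $s^*_A(x) = \int_{a_1}^{x} f_1(t) g(t) dt$, giving $s^*_A(x) = 0$ for $x \le t_0$ and $s^*_A(x) = b_2 (F_1[x] - F_1[t_0])$ for $x > t_0$.

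Finally the optimal cut point $t_0$ is obtained by reducing to a one-variable optimization. Plugging the step $g$ and corresponding $s^*_A$ into the all-pay utility gives
\begin{displaymath}
u_A(t_0) = \int_{t_0}^{a_2} \bigl[x - b_2(F_1[x] - F_1[t_0])\bigr] f_1(x)\, dx,
\end{displaymath}
and differentiating in $t_0$ (Leibniz rule on the lower limit plus the $F_1[t_0]$ dependence inside) yields $f_1(t_0)\bigl[b_2(1 - F_1[t_0]) - t_0\bigr] = 0$, i.e.\ the stated equation $b_2 - t_0 - b_2 F_1[t_0] = 0$. For the uniform $F_1$ corollary this is a direct algebraic substitution, reducing to $t_0 = b_2 a_2 / (b_2 + a_2 - a_1)$.

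The main obstacle I anticipate is the first step, specifically handling the jump of $f_2(g(x))$ at a putative cut point. Since $g$ may jump from $0$ to a value in $[b_1, b_2]$ and $f_2(0)$ can be $0$ while $f_2(b_1)$ is not, care is needed to confirm that $x f_2(g(x))$ remains weakly increasing across the jump and that the strict increase of $F_1[x]$ suffices to rule out any second sign change of $h$. Once that monotonicity is secured, Theorem~\ref{theoremshg} applies cleanly on both sides of $t_0$ and the rest of the argument is a routine one-variable optimization.
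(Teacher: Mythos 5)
Your proof is correct and follows essentially the same approach as the paper's: derive $\frac{h(x)}{f_1(x)} = x f_2(g(x)) - 1 + F_1[x]$, use the weakly-increasing hypothesis on $f_2$ to show this is strictly increasing and hence changes sign at most once, invoke Theorem~\ref{theoremshg} to pin down the two-piece step $g$, plug the resulting $s^*_A(x) = b_2(F_1[x]-F_1[t_0])$ (via Theorem~\ref{lemma9}(2) with $p^w\equiv 0$, $p^p(t)=t$, and Lemma~\ref{lemmasa1=0}) into $u_A$, and set the $t_0$-derivative $f_1(t_0)[b_2 - t_0 - b_2 F_1[t_0]]$ to zero. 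The concern you flag about the jump of $f_2(g(x))$ at the cut point resolves in your favor: since $g$ is weakly increasing and $f_2$ is weakly increasing (extended by zero below $b_1$), the composition $f_2\circ g$ is weakly increasing, so any jump of $x f_2(g(x))$ is upward and monotonicity of $h/f_1$ survives; the paper sidesteps this by first setting $t_0 = \sup\{t : g(t)=0\}$ and establishing strict monotonicity only on $(t_0,a_2]$ where $g \geq b_1$, then handling $[a_1,t_0]$ separately, but either route closes the gap.
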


\begin{proof}
Let $t_0=\sup\{ t|g(t)=0\}$.
If $t_0$ does not exist, let $t_0=a_1$.

We first consider the case when $x\in([t_0,a_2]$.
When $f_2$ is weakly increasing,
\begin{displaymath}
\frac{h(x)}{f_1(x)}=xf_2(g(x))-1+F_1[x]
\end{displaymath}
strictly increases by Equation (\ref{beq4}).
Then there is at most one cross for $\frac{h(x)}{f_1(x)}=0$, i.e. $h(x)=0$.
If $h$ keeps positive, then $g(x)$ should be constant $b_2$.
If $h$ keeps negative, then $g(x)$ should be constant $0$.
Assume the cross is at $t_1$.

\begin{displaymath}
h(x)=\left\{
\begin{array}{ll}
<0 & x\in (t_0,t_1)\\
>0 & x\in (t_1,a_2)
\end{array}
\right.
\end{displaymath}

Then $g(x), x\in[t_0,t_1)$ is constant zero and $g(x), x\in[t_1,a_2]$ is constant $b_2$.
Otherwise we can decrease $g(x), x\in(t_0,t_1)$ and increase $g(x), x\in(t_1,a_2]$.
Contradicts to $g$ is optimal.

Consider the case when $x\in[a_1,t_0]$.
We have $t_0=t_1$.
So there are at most two pieces in $g(x)$ in $[a_1,a_2]$.

Now we compute optimal $t_0$.
Since the case for $g$ has only one value can be seen a sub-special case of the two values,
we only need to consider the two values case.
We first compute $s^*_A$,
\begin{displaymath}
s^*_A(x)=\int^x_{a_1}f_1(t)g(t)dt=\left\{
\begin{array}{ll}
0 & x\in [a_1,t_0)\\
(F_1[x]-F_1[t_0]) & x\in [t_0,a_2]
\end{array}
\right.
\end{displaymath}

Then the leader's expected utility is
\begin{eqnarray}
u_A(s^*_A)&=&\int^{a_2}_{a_1}f_1(x)[xF_2[g(x)]-s^*_A(x)]dx\nonumber\\
&=&\int^{a_2}_{t_0}f_1(x)[x-F_1[x]b_2+F_1[t_0]b_2]dx\nonumber\\
&=&\int^{a_2}_{t_0}f_1(x)[x-F_1[x]b_2]dx+F_1[t_0]b_2[1-F_1[t_0]]\nonumber
\end{eqnarray}

The derivative of the utility with respect to $t_0$ is
$$f_1(t_0)[b_2-t_0-b_2F_1[t_0]]$$
Then optimal $t_0$ must be the solution of $b_2-t-b_2F_1[t]=0$, and it's unique.
\end{proof}

%
%
%
%
%
%
%

\section{Acknowledgement}

We are grateful to the colleagues in Baidu for enlightening discussions on the passive bidding behaviors in sponsored search auctions. This work was supported in part by the National Basic Research Program of China Grant 2011CBA00300, 2011CBA00301, the National Natural Science Foundation of China Grant 61033001, 61361136003, 61303077, and a Tsinghua University Initiative Scientific Research Grant.


\bibliographystyle{plainnat}
\bibliography{team}

\newpage
\begin{appendix}
\section{Appendix: Omitted Proofs}

Proof of \textbf{Lemma~\ref{lemma0}}.

Suppose otherwise $u_B(y)$ cannot be attained. Since the domain is bounded, by the definition of $u_B(y)$,
there exists $t_0$, such that:
\begin{displaymath}
\exists \{t_n\}\rightarrow t_0,\ s.t.\ \lim [y\cdot P_B[t_n]-p^p(t_n)-p^w(t_n)\cdot P_B[t_n]]=u_B(y)
\end{displaymath}
By tie-breaking rule, we have $\lim P_B[t_n] \leq P_B[t_0]$. Since $u_B(y)\geq 0$, we have
$\lim (y-p^w(t_n))\geq 0$.

$$u_B(y)=\lim [(y-p^w(t_n)) P_B[t_n]-p^p(t_n)] \leq (y-p^w(t_0))\cdot P_B[t_0]-p^p(t_0)\leq u_B(y)$$
Hence, $u_B(y)=(y-p^w(t_0))\cdot P_B[t_0]-p^p(t_0)$, i.e. $u_B(y)$ can be attained by bid $t_0$.

Next we prove the smallest best response exists. Suppose otherwise there is no smallest best response, then among all the best responses, there exists $\underline{t}$, such that
$$\exists \{t_n\}\rightarrow \underline{t}\ s.t. \ u_B(y)=\lim y\cdot P_B[t_n]-p^p(t_n)-p^w(t_n)\cdot P_B[t_n] \quad\forall n$$
By same argument as above, we know
$$u_B(y)=y\cdot P_B[\underline{t}]-p^p(\underline{t})-p^w(\underline{t})\cdot P_B[\underline{t}]$$ i.e. $\underline{t}$ is the smallest best response, contradiction.
$\Box$

Proof of \textbf{Lemma~\ref{lemma1}}.

It is important to note that, from the follower's perspective, she only cares about the overall distribution of the leader's bids induced by his strategy: as long as the distribution of A's bids is unchanged, B's best response remain unchanged. Our idea is then to rearrange (sort) the leader's bids without changing the underlying distribution.

For any strategy $s_A$, we fix the bids distribution $D$, i.e. the distribution of $s_A(v),\ v\sim F_1$,
and rematch the leader's valuations to bids and create some new strategy $\breve{s}_A$.
So the distribution of $\breve{s}_A(v),\ v\sim F_1$, is same as distribution of $s_A(v),\ v\sim F_1$.
In this process, the follower's best response remain unchanged, hence (2).
In addition, her bid distribution is also fixed.
As a result, for any single bid of $A$, the probability of winning is also the same. As a result, the overall winning probability is also unchanged.

Now look at the rank-and-bid based payment function, the leader's expected payment is:
\begin{eqnarray}
&&\int_{t\sim D}Pr[t,s_A]p^w(t)+p^p(t)dt\nonumber\\
&=&\int_{t\sim D}Pr[t,s_A]p^w(t)+p^p(t)dt\nonumber
\end{eqnarray}
also unchanged after rematching.

To improve the leader's expected utility, which equals expected social welfare minus expected payment (fixed)
we only need to increase the expected social welfare, given that the overall winning probability is fixed.
Therefore, in rematching, we sort the strategy monotonically such that higher valuation with higher winning probability, i.e., higher bid. In this way, we guarantee the total amount of fixed winning probability is allocated to the highest types, thus yields the highest expected social welfare. From the reasoning above, we conclude that the leader's expected utility weakly increases, hence (3).

The above process can be thought of as rematching a bid that is in the top $q$ quantile of the bid distribution to a type that is in the top $q$ of the type distribution, for all $q$.
So there is no mixed strategy at any type, i.e., $\breve{s}_A(v)$ is deterministic $\forall v$, hence (1).
%
$\Box$

Proof of \textbf{Theorem~\ref{lemma2}}.

According to the definition of $u$:
\begin{eqnarray}
&&u_B(y_1)=y_1\cdot P_B[a]-p^p(a)-p^w(a)\cdot P_B[a]\geq y_1\cdot P_B[b]-p^p(b)-p^w(b)\cdot P_B[b]\label{eqa3}\\
&&u_B(y_2)=y_2\cdot P_B[b]-p^p(b)-p^w(b)\cdot P_B[b]\geq y_2\cdot P_B[a]-p^p(a)-p^w(a)\cdot P_B[a]\label{eqa4}\\
&\Rightarrow&(y_2-y_1)[P_B[b]-P_B[a]]\geq 0\textrm{\qquad (\ref{eqa3})+(\ref{eqa4})}\nonumber\\
&\Rightarrow& P_B[b]=P_B[a] \textrm{\qquad (\ref{eqa3}) and (\ref{eqa4}) become equalities.}\nonumber
\end{eqnarray}
Substitute the last equality into the Equality (\ref{eqa3}), we get $p^p(a)+p^w(a)P_B[a]=p^p(b)+p^w(b)P_B[b]$.
Because $p^p(a)+p^w(a)>p^p(b)+p^w(b)$, it must be $P_B[b]=P_B[a]=0$.
\begin{eqnarray}
a,b<s_A(x)\ \forall x>a_1 \label{eqa8}
\end{eqnarray}
Furthermore $u_B(y_1)=u_B(y_2)=0$ and $p^p(a)=p^p(b)$.

Since we can set $a$ to be any element in $S_B(y_1)$ as long as $a>b$, then we have
$\forall a\ s.t.\ b<a\in S_B(y_1)$, we have $P_B[a]=0$ and $p^p[a]=0$. Since $P_B$ and $p^p$ are weakly increasing, then they are true for all $a\in S_B(y_1)$.
So the follower can achieve the largest utility 0 by bidding $a$, i.e. $a\in S_B(y_2)$.
Hence $S_B(y_1)\subseteq S_B(y_2)$.
$\Box$

Proof of \textbf{Lemma~\ref{lemma3}}.

We first prove $u_B(y)$ is weakly increasing. For any $y_1<y_2$, if $u_B(y_1)=u_B(y_2)=0$, the lemma is correct. Otherwise,
pick $b_1\in S_B(y_1)$, $b_2\in S_B(y_2)$.
According to Theorem~\ref{lemma2}, we have $b_1\leq b_2$. By definition of $u$, we have
$$u_B(y_2)\geq(y_2-p^w(b_1))P_B[b_1]-p^p(b_1)\geq (y_1-p^w(b_1))P_B[b_1]-p^p(b_1)=u_B(y_1)$$
So, $u_B(y)$  is weakly increasing.
Next we prove the continuity. For any $y_1<y_2$, we have
\begin{eqnarray}
&&u_B(y_2)-u_B(y_1)\nonumber\\
&\leq& (y_2-p^w(b_2))P_B[b_2]-p^p(b_2)-(y_1-p^w(b_2))P_B[b_2]+p^p(b_2)\nonumber\\
&=&(y_2-y_1)P_B[b_2]\nonumber\\
&\leq&y_2-y_1\nonumber
\end{eqnarray}
For $\forall \epsilon>0$, as long as $y_2\in(y_1-\epsilon, y_1+\epsilon)$, we have $|u_B(y_2)-u_B(y_1)|<\epsilon$.
Thus, $u_B(y)$ is a continuous function.
$\Box$

Proof of \textbf{Lemma~\ref{lemma3.25}}.

First, notice that $p^w(t)a+p^p(t)$ strictly increases, so solution $t$ exists.

Second, $t(a,b)$ is continuous. Suppose not, let $t(a,b)$ jumps at point $(a,b)$.
Then $ap^w(t)+p^p(t)+b$ jumps at $(a,b)$, it could not always be zero, contradiction.

Third, $t(a,b)$ is differentiable. Suppose otherwise, $t(a,b)$ do not have partial derivatives with respect to $a$, at point $(a_0,b_0)$.
Then
\begin{eqnarray}
&&\exists k_1>k_2, \{\tilde{a}_i\},\{a_i\}\rightarrow a_0\nonumber\\
s.t. && t(\tilde{a}_i,b_0)\geq t(a_0,b_0)+k_1(\tilde{a}_i-a_0)\nonumber\\
&& t(a_i,b_0)\leq t(a_0,b_0)+k_2(a_i-a_0)\nonumber
\end{eqnarray}
\begin{eqnarray}
0&\geq&p^w(t(a,b_0)+k_1(\tilde{a}_i-a))\tilde{a}_i+p^p(t(a)+k_1(\tilde{a}_i-a_0))+b_0\label{neweq1}\\
0&=&p^w(t(a_0,b_0))a_0+p^p(t(a_0,b_0))+b_0\label{neweq2}\\
0&\geq&p^w(t(x)+k_1(x_i-x))F_1[x_i]+p^p(t(x)+k_1(x_i-x))\qquad (\ref{neweq1})-(\ref{neweq2})\nonumber\\
0&\geq& p^w(t(a,b_0)+k_1(\tilde{a}_i-a))\tilde{a}_i-p^w(t(a_0,b_0))a_0\nonumber\\
&&p^p(t(a)+k_1(\tilde{a}_i-a_0))-p^p(t(a_0,b_0))\nonumber
\end{eqnarray}
Divide $a_i-a_0$ on both sides and consider the limitation when $i$ approaches infinity, we have
\begin{eqnarray}
0\geq (p^w)'(t(a_0,b_0))k_1a_0+p^w(t(a_0,b_0))+(p^p)'(t(a_0))k_1\nonumber
\end{eqnarray}
Similarly, we have
\begin{eqnarray}
0\leq (p^w)'(t(a_0,b_0))k_2a_0+p^w(t(a_0,b_0))+(p^p)'(t(a_0))k_2\nonumber
\end{eqnarray}
These two equations together contradicts to the fact $k_1>k_2$ and $(p^w)'+(p^p)'>0$. So $t(a,b)$ is differentiable.

We differentiate $b+ap^w(t)+p^p(t)=0$ on $a$ and $b$, we have
\begin{eqnarray}
(p^p)'(t)t'_a+(p^w)'(t)t'_aa+p^w(t)=0\nonumber\\
1+(p^w)'(t)t'_ba+p^w(t)t_b'=0\nonumber
\end{eqnarray}
Since $t$,$(p^w)'$ and $(p^p)'$ is continuous, it's easy to see these two derivatives are continuous.
$\Box$

Proof of \textbf{Lemma~\ref{lemma3.5}}.

(1)Let $a=F_1[x]$, $b=u_B(y)-F_1[x]y$, $t=eu_B(a,b)$, then $a,b$ is differentiable in $x$, and $b$ is continuous in $y$.
Fix $y$ and let $t(x)=eu_B(y,x)$, then by Lemma~\ref{lemma3.25},
\begin{eqnarray}
\frac{\partial eu}{\partial x}(y,x)&=&\frac{\partial t}{\partial a}(a,b)\cdot \frac{\partial a}{\partial x}(x,y)+\frac{\partial t}{\partial b}(a,b)\cdot \frac{\partial b}{\partial x}(x,y) \nonumber\\
&=&=\frac{1}{a(p^w)'(eu)+(p^p)'(eu)}[-p^w(eu)f_1(x)+(-1)(-f_1(x))y]\nonumber\\
&=&\frac{f_1(x)y-p^w(eu_B(y,x))f_1(x)}{(p^w)'(eu_B(y,x))F_1[x]+(p^p)'(eu_B(y,x))}\nonumber
\end{eqnarray}

$\frac{\partial eu}{\partial x}(y,x)$ is continuous for both two arguments $y$ and $x$ when $f_1,(p^w)',(p^p)'$ are continuous.

If $eu_B(y,x)$ has a breaking point $y_0$, then consider the definition of $eu_B(y,x)$
$u_B(y)-F_1[x]y=-p^w(eu_B(y,x))F_1[x]-p^p(eu_B(y,x))$.
The right hand side has a breaking point $y_0$, while the left hand side is continuous, contradiction.
So $eu_B(y,x)$ is continuous in $y$.
(2)If $\exists x,y\ s.t.\ eu_B(x,y)>s_A(x)$, then consider the follower with type $y$ bids $s_A(x)$:
\begin{eqnarray}
u_B(y)&\geq& (y-p^w(s_A(x)))F_1[x]-p^p(s_A(x))\nonumber\\
&>&yF_1[x]-p^p(eu_B(y,x))-p^w(eu_B(y,x))F_1[x]\nonumber
\end{eqnarray}
which contradicts to the definition of $eu_B(x,y)$.
$\Box$

Proof of \textbf{Lemma~\ref{lemma4}}.

(1)First we prove $(x,s^*_A(x))$ must lie on some EU line. Suppose not, there exists a series number $\{y_n\}\rightarrow y_0,\ s.t.\ s^*_A(x)=\lim_{n\rightarrow \infty} eu_B(y_n,x)$. Then
$$u_B(y_n)=F_1[x](y_n-p^w(eu_B(y_n,x)))-p^p(eu_B(y_n,x))$$
We choose the limitation and get
$$u_B(y_0)=F_1[x](y_0-p^w(s^*_A(x)))-p^p(s^*_A(x))$$
So $(x,s^*_A(x))$ is on EU line $eu_B(y_0,\cdot)$.

(2) Because $eu_B(y,\cdot)$ weakly increases and $s^*_A(x)=\sup_{y\in[0,b_2]}eu_B(y,x)$. So $s^*_A(x)$ weakly increases.

For any $x_1<x_2$, s.t. $s^*_A(x_1)=s^*_A(x_2)$. Suppose $(x_1,s^*_A(x_1))$ lies on
$eu_B(y_1,\cdot)$.
Since $eu_B(y,\cdot)$ weakly increases, $eu_B(y_1,x_1)=eu_B(y_1,x_2)=s^*_A(x_1)$.
Substitute in Equation~(\ref{eqa5}), we get  $p^w(s^*_A(x_1))=y_1$, $p^p(s^*_A(x_1))=0$ and $u_B(y_1)=0$.

Furthermore, we get $eu_B(y_1,\cdot)=s^*_A(x_1)=s^*_A(x_2)$.
Because $s^*_A$ weakly increases, then $s^*_A(x)=s^*_A(x_1) \forall x\in(a_1,x_2]$,
i.e., $s^*_A(x_1)=lim_{t\rightarrow a_1}s^*_A(t)$.

Hence if $s^*_A(x)> \lim_{t\rightarrow a_1}s^*_A(t)$, $s^*_A(x)$ strictly increases.

In particular, $u_B(s^*_A(x))=0$.
Moreover $s^*_A(x_1)$ is always the solution of Equation~(\ref{eqa5}) for $y_1$ and $x\in(a_1,x_1]$.
Because $s^*_A(x)=\sup_y ue(y,x)\geq eu_B(y_1,x)=s^*_A(x_2), x\in(a_1,x_2]$
so $s^*_A(x_2)=\lim_{t\rightarrow a_1}s^*_A(t)$. That is to say when $s^*_A(x)>\lim_{t\rightarrow a_1}s^*_A(t)$, $s^*_A(x)$ strictly increases.

(3)When $s^*_A(x)=\lim_{t\rightarrow a_1}s^*_A(t)$,
$\exists x_1<x_2=x$ such that $s^*_A(x_1)=s^*_A(x_2)$.
Then from (2.1), we know $s^*_A(x)$ lies on $eu_B(y_1,\cdot)$, where $y_1=p^w(s^*_A(x_1))$.
Moreover, $u_B(y_1)=0$ and $p^p(y_1)=0$.

(4)
By Lemma~\ref{lemma3.5}, we know $eu_B(y,x)\leq s_A(x)$. We get $s^*_A(x)=\sup_y eu_B(y,x)\leq s_A(x)$ directly.

(5)At last we prove $s^*_A(x)$ is continuous. Suppose not, there is a breaking point $x$ and a difference $d>0$, s.t. $\forall \epsilon>0$, $s^*_A(x+\epsilon)-s^*_A(x_2-\epsilon)>d$. Say $(x+\epsilon,s^*_A(x+\epsilon))$ lies on curve $eu_B(y(\epsilon),x)$. Then
\begin{eqnarray}
u_B(y(\epsilon))&=&y(\epsilon)F_1[x+\epsilon]-p^p(s^*_A(x+\epsilon))-p^w(s^*_A(x+\epsilon))F_1[x+\epsilon]\label{eqa6}\\
u_B(y(\epsilon))&=&y(\epsilon)F_1[x-\epsilon]-p^p(eu_B(y(\epsilon),x-\epsilon))-p^w(eu_B(y(\epsilon),x-\epsilon))F_1[x-\epsilon]\nonumber\\
&>&y(\epsilon)F_1[x-\epsilon]-p^p(s^*_A(x+\epsilon)-d)-p^w(s^*_A(x+\epsilon)-d)F_1[x-\epsilon]\label{eqa7}
\end{eqnarray}
(\ref{eqa6})-(\ref{eqa7}), we get
\begin{eqnarray}
y(\epsilon)(F_1[x+\epsilon]-F_1[x-\epsilon])&>&p^p(s^*_A(x+\epsilon))+p^p(s^*_A(x+\epsilon)-d)\nonumber\\
&&+[p^w(s^*_A(x+\epsilon))+p^w(s^*_A(x+\epsilon)-d)]F_1[x-\epsilon]\label{equ8}
\end{eqnarray}
When $\epsilon\rightarrow 0$, lhs of (\ref{eqa7}) approaches zero, but rhs is strictly larger than zero.
So $s^*_A(x)$ is continuous.
$\Box$

Proof of \textbf{Lemma~\ref{lemma6}}.

(1)On one side, because $s^*_A(x)\leq s_A(x)$, the follower's utility does not decrease no matter what the follower's value is.
On the other side, because $s^*_A(x)\geq eu_B(y,x)$, the follower's utility does not increase when the follower's value is $y$. Since the inequality holds for any $y$, the follower's utility does not increase no matter what the follower's value is. So the follower's utility is the same when the leader's strategy is $s_A(x)$ or $s^*_A(x)$.

(2)Pick $\forall t\in S_B(y)$, we have
\begin{displaymath}
u_B(y)=(y-p^w(t))P_B[t]-p^p(t)\leq(y-p^w(t))P^*_B[t]-p^p(t)\leq u_B(y)
\end{displaymath}
The first inequality is because $s\geq s^*_A$. The second inequality is because the follower's utility is still $u_B(y)$ when the leader adopts $s^*_A$.
Then these two inequalities are actually equalities. Hence $t\in S_B^*(y)$ and $S_B(y)\subseteq S^*_B(y)$.

Moreover, we have $(y-p^w(t))P_B[t]=(y-p^w(t))P_B^*[t]$. If $P_B[t]\neq P_B^*[t]$, it must be $y=p^w(t)$. Since $u_B(y)\geq 0$, we have $p^p(t)=0$ and thus $u_B(y)=0$.

If $t>\lim_{x\rightarrow a_1}s^*_A(x)$, we can bid $\tilde{t}$ which is a little smaller than $t$,
s.t. $P_B^*[\tilde{t}]>0$, $p^w[\tilde{t}]<y$,$p^p(\tilde{t})=0$, then
the follower can achieve positive utility by bidding $\tilde{t}$, i.e.$u_B(y)>0$, contradiction.
Hence $t=\lim_{x\rightarrow a_1}s^*_A(x)$.
$\Box$

Proof of \textbf{Lemma~\ref{lemma5}}.

(1)$P^*_B[s^*_A(x_0)]\geq F_1[x_0]$, the follower with value $y_0$ can achieve $u_B(y_0)$ by bidding $s^*_A(x_0)$. So $s^*_A(x_0)\in S^*_B(y_0)$.

(2) \begin{eqnarray}
u_B(y_0)&=&(y_0-p^w(s^*_A(x_0)))P^*_B[s^*_A(x_0)]-p^p(s^*_A(x_0))\nonumber\\
&=&(y_0-p^w(s^*_A(x_0)))F_1[x_0]-p^p(s^*_A(x_0))\nonumber
\end{eqnarray}
So $eu_B(y_0,x_0)=s^*_A(x_0)$ by definition.
$\Box$

Proof of \textbf{Lemma~\ref{sconstant}}.

Suppose otherwise,
$\exists x_1<\hat{x}$, such that $s_A(x_1)\neq s^*_A(\hat{x})$.
By Lemma~\ref{lemma4}, we know it must be $s_A(x_1)>s^*_A(\hat{x})$.

By Lemma~\ref{lemma6}, we have
$eu_B(y_0,\cdot)=s^*_A(\hat{x})$,
$u_B(y_0)=0$, where $y_0=p^w(s^*_A(\hat{x}))$. For any $\epsilon>0$, we have
$u_B(y_0+\epsilon)>0$.

Let $\eta=s_A(x_1)-s^*_A(\hat{x})$.
By Lemma~\ref{lemma3.5}, $eu_B(y,a_2)$ is continuous in $y$.
So as long as $\epsilon$ is small enough, we have
$eu_B(y_0+\epsilon,a_2)<eu_B(y_0,a_2)+\eta=s_A(x_1)$

Pick any $t\in S_B[y_0+\epsilon]$, we have
\begin{eqnarray}
u_B(y_0+\epsilon)&=&(y_0+\epsilon-p^w(t))P_B[t]-p^p(t)\nonumber\\
&\leq&y_0+\epsilon-p^w(t)-p^p(t)\nonumber
\end{eqnarray}
On the other side, by definition of $eu_B(y_0+\epsilon,\cdot)$, we have
$y_0+\epsilon-p^w(eu_B(y_0+\epsilon,a_2))-p^p(eu_B(y_0+\epsilon,a_2))=u_B(y_0+\epsilon)$
Then we get $t\leq eu_B(y_0+\epsilon,a_2)<s_A(x_1)$.

By Theorem~\ref{lemma2}, since $s^*_A(\hat{x})\in S^*_B(y_0)$, $t\in S_B(y_0+\epsilon)\subseteq S^*_B(y_0+\epsilon)$,
we have $s^*_A(\hat{x})\leq t$.

Hence we have $P_B[t]\leq F_1[\hat{x}]$, then reconsider the utility when bidding $t$:
\begin{eqnarray}
u_B(y_0+\epsilon)&=&(y_0+\epsilon-p^w(t))P_B[t]-p^p(t)\nonumber\\
&\leq&(y_0+\epsilon-p^w(t))F_1[x_1]-p^p(t)\nonumber\\
&<&(y_0+\epsilon-p^w(s^*_A(\hat{x})))F_1[\hat{x}]-p^p(s^*_A(\hat{x}))\nonumber\\
&\leq&u_B(y_0+\epsilon)\nonumber
\end{eqnarray}
Contradiction. So the supposition is wrong.
$\Box$

Proof of \textbf{Lemma~\ref{lemma7}}.

For $A$'s any value $x_0>0$, we consider the utility change between strategy $s$ and strategy $s^*_A$. We prove the number of types, at which A's utility decreases, is countable, so the loss on these values is negligible and the total expected utility does not decrease.

(1)When $s^*_A(x_0)=lim_{t\rightarrow a_1}s^*_A(t)$ and $x_0\neq \hat{x}$.

By Lemma~\ref{lemma4}, $(x_0,s^*_A(x_0))$ lies on $eu_B(y_0,\cdot)$,
where $y_0=p^w(s^*_A(x_0))$ and $u_B(y_0)=0$.

By Lemma~\ref{sconstant}, we have
\begin{displaymath}
\left\{ \begin{array}{ll}
s_A(x)=s^*_A(x)=s^*_A(\hat{x}) & x<\hat{x}\\
s_A(x)\geq s^*_A(x)>0 & x>\hat{x}
\end{array} \right.
\end{displaymath}

(1.1)When the follower's type $y\leq y_0$.

We have $u_B(y)=0$, $p^w(s^*_A(\hat{x}))>y$, and $P_B(s^*_A(\hat{x}))=P_B^*(s_A(\hat{x}))>0$.
If the follower bids $s^*_A(\hat{x})$, he will get negative utility.
So $s^*_A(\hat{x})\notin S_B(y)$. Since $P_B[s^*_A(\hat{x})]=P^*_B[s^*_A(\hat{x})]=F_1[x]>0$
by Theorem~\ref{lemma2}, $s^*_A(\hat{x})>S^*_B(y)$, and $s^*_A(\hat{x})>S_B(y)$.

So the leader wins the good with value $x_0$ when the follower has value $y< y_0$ in both $s$ and $s^*_A$ strategies.

(1.2)When the follower's type $y>y_0$.

When using $s^*_A$, $\forall y>y_0$, the follower can just bid $s^*_A(x_0)$ then achieve
a positive utility, so $u_B(y)>0$.

By Theorem~\ref{lemma2}, we have
$S^*_B(y)\geq s^*_A(\hat{x})$
Then $S^*_B(y)\geq s^*_A(\hat{x})$.
So the follower with value $y>y_0$ wins the good when the leader has value $x_0$ in both $s$ and $s^*_A$ strategies.

Combined (1.1) and (1.2), the leader's winning probability is always $F_2[y_0]$.

(2) When $s^*_A(x_0)\neq \lim_{t\rightarrow a_1}s^*_A(t)$

Let $eu_B(y_0,\cdot)$ be an eu curve that contain point
$(x_0,s^*_A(x_0))$.
Then $s^*_A(x_0)\in S^*_B(y_0)$, by Lemma~\ref{lemma5}
Let $eu_B(y_1,\cdot)$ be a eu line that does not contain point
$(x_0,s^*_A(x_0))$.

(2.1) When $y_1<y_0$,

(2.1.1)We want to prove when $A$ adopts $s^*_A$, $B=y_1$ always loses against $A=x_0$.
Otherwise, $\exists t\in S^*_B(y_1)\ s.t.\ s^*_A(x_0)\leq t$.

If $s^*_A(x_0)<t$.
$P^*_B[s^*_A(x_0)]=P^*_B[t]$ by Equation~(\ref{eqa8}) in Theorem~\ref{lemma2}.
Because $s^*_A(x_0)$ strictly increases and $s^*_A(x_0)<t$, contradiction.

If $s^*_A(x_0)=t$.
Then $s^*_A(x_0)\in S^*_B(y_1)$.
By Lemma~\ref{lemma5}, we have
$(x_0,s^*_A(x_0))$ lies on $eu_B(y_1,\cdot)$, contradiction.

(2.1.2)We want to prove when $A$ adopts $s$, $B=y_1$ always loses against $A=x_0$.
Otherwise, $\exists t\in S_B(y_1)\ s.t.\ t\geq s_A(x_0)$.

By Lemma~\ref{lemma5}, $s^*_A(x_0)\notin S^*_B(y_1)$(o.w. $(x_0,s^*_A(x_0))$ lies on $eu_B(y_1,\cdot)$).

Then $s^*_A(x_0)\notin S_B(y_1)$ by Lemma~\ref{lemma6}. Thus $t\neq s^*_A(x_0)$.
Since $t\geq s_A(x_0)\geq s^*_A(x_0)$, we have $t>s^*_A(x_0)$.

Notice that $t\in S^*_B(y_1)$ and $s^*_A(x_0)\in S^*_B(y_0)$, we have
$P^*_B[t]=P^*_B[s^*_A(x_0)]$ by Theorem~\ref{lemma2}.
That contradicts to $t>s^*_A(x_0)>\lim_{t\rightarrow a_1}s^*_A(t)$.

Combined (2.1.1) and (2.1.2), the leader with $x_0$ beats follower with $y\leq y_0$ in both $s^*_A$ and $s$.

(2.2) When $y_1>y_0$.

(2.2.1)
We want to prove when $A$ adopts $s^*_A$, $B=y_1$ always beats $A=x_0$.
By Lemma~\ref{lemma5}, we have
$s^*_A(x_0)\in S^*_B(y_0)$ and $s^*_A(x_0)\notin S_B^*(y_1)$.
By Lemma~\ref{lemma6}, we have
$s^*_A(x_0)\notin S_B(y_1)$.

Since $y_1>y_0$ and $S^*_B(y_0)\nsubseteq S^*_B(y_1)$,
we have $S^*_B(y_1)\geq s^*_A(x_0)$ \footnote{Here we mean any element in set $S^*_B(y_1)$ is larger than $s^*_A(x_0)$.}by Theorem~\ref{lemma2}.
So the follower with value $y_1$ beats the leader with value $x_0$ in strategy $s^*_A$.

(2.2.2)
We want to prove when $A$ adopts $s$, $B=y_1$ always beats $A=x_0$.
Otherwise, $\exists t\in S_B(y_1)\ s.t.\ t<s_A(x_0)$.
Continue the proof above, since $S_B(y_1)\subseteq S^*_B(y_1)$, we have
$S_B(y_1)>s^*_A(x_0)$.  We then get
$$F_1[x_0]<P^*_B[t]=P_B[t]\leq F_1[x_0]$$
A contradition. The first inequality is for $t>s^*_A(x_0)$. The second equality is for Lemma~\ref{lemma6}.
The third inequality is for $t<s_A(x_0)$.

Combined (2.2.1) and (2.2.2), when $y_1>y_0$,
the leader with $x_0$ loses against the follower with $y_1$ in both $s^*_A$ and $s$.

(3)When $s^*_A(x_0)\neq \lim_{t\rightarrow a_1}s^*_A(t)$ and lies on a unique $eu_B(y_0,\cdot)$, then the leader's winning probability is same in both $s^*_A$ and $s$ strategies.

When $s^*_A(x_0)= \lim_{t\rightarrow a_1}s^*_A(t)$ and
$x_0\neq \sup\{x|s^*_A(x)= \lim_{t\rightarrow a_1}s^*_A(t)\}$
then the leader's winning probability is same in both $s^*_A$ and $s$ strategies.

In these two cases, because of $s^*_A\leq s$ and the same winning probability,
the expected utility of the leader weakly increases when changing from $s$ to $s^*_A$.

In other cases, the leader's expected utility might decrease, but however the loss on all these points is negligible.

Define $V_1$ as follows, we only need to prove $\#V_1$, the size of set, is countable.
\begin{eqnarray}
V_1=\{(x,s^*_A(x))|\textrm{point p} (x,s^*_A(x)) \textrm{lies on at least two eu lines, and\ } s^*_A(x)> \lim_{t\rightarrow a_1}s^*_A(t)\}\nonumber
\end{eqnarray}

Pick arbitrary $x_0\in V_1$, let $eu_B(y_1,\cdot)$ and $eu_B(y_2,\cdot)$ where $y_1<y_2$, be
two eu lines that pass point $(x_0,s^*_A(x_0))$. By Lemma~\ref{lemma5},
$s^*_A(x_0)\in S^*_B(y_1), S^*_B(y_2)$.

For any $y_3>y_1$,
\begin{eqnarray}
u_B(y_3)&\geq&(y_3-p^w(s^*_A(x_0)))F_1[x_0]-p^p(s^*_A(x_0))\nonumber\\
&>&(y_1-p^w(s^*_A(x_0)))F_1[x_0]-p^p(s^*_A(x_0))=u_B(y_1)\nonumber
\end{eqnarray}
So $u_B(y_3)>0$. Then consider $\forall y_3\in Q s.t. y_1<y_3<y_2$, we have
$u_B(y_2),u_B(y_3)>0$.

By Theorem~{\ref{lemma2}}, we have
$$S_B(y_1)\leq S_B(y_3)\leq S_B(y_2)$$
So $S_B(y_3)=\{s^*_A(x_0)\}$, i.e., $s^*_A(x_0)$ is the unique best response of $B=y_3$.
Now we can map any element in $V_1$ to a rational number. Since
$s^*_A(x_0)$ is the unique best response of $B=y_3$, this mapping is injective. Thus
$\#V_1$ is countable. Because the leader has continuous and no mass point distribution,
the loss on countable point is negligible.
$\Box$

Proof of \textbf{Lemma~\ref{lemma8}}.

(1)If $Y(x)$ is not closed, then $\exists \{y_n\}\rightarrow y$ such that
\begin{displaymath}
u_B(y_n)=(y_n-p^w(s^*_A(x)))F_1[x]-p^p(s^*_A(x))
\end{displaymath}
When $n$ approaches infinity, we get
\begin{displaymath}
u_B(y)=(y-p^w(s^*_A(x)))F_1[x]-p^p(s^*_A(x))
\end{displaymath}
Then $eu_B(y,\cdot)$ passes point $(x,s^(x))$, so $y\in Y(x)$.

(2)If $\exists y_1<\hat{y}$ such that $y_1\in Y(x)$, then
\begin{eqnarray}
0&=&u_B(y_1)=(y_1-p^w(s^*_A(x)))F_1[x]-p^p(s^*_A(x))\nonumber\\
&<&(\hat{y}-p^w(s^*_A(x)))F_1[x]-p^p(s^*_A(x))\leq u_B(\hat{y})\nonumber
\end{eqnarray}
Which contradicts to $u_B(\hat{y})=0$.

(3)Suppose not, then $\exists y_1\in Y(x_1), y_2\in Y(x_2)$, such that $y_1>y_2$.
Now, we have
\begin{eqnarray}
s^*_A(x_1)&\in& S_B^*(y_1)\nonumber\\
s^*_A(x_2)&\in& S_B^*(y_2)\nonumber\\
s^*_A(x_1)&<& s^*_A(x_2)\nonumber
\end{eqnarray}
By equation (\ref{eqa8}) in Theorem~\ref{lemma2}, we have
$P_B^*[s^*_A(x_1)]=P_B^*[s^*_A(x_2)]=0$, which contradicts to
$P_B^*[s^*_A(x_1)]\geq F_1[x_1]$.

(4)Pick any $y>\hat{y}$, let $t\in S_B^*(y)$.
Obviously, $t\leq s^*_A(a_2)$.
Since $u_B(y)>0$, we have $P^*_B(t)>0$, which leads $t\geq \lim_x s^*_A(x)$.

(4.1)When $t>\lim_x s^*_A(x)$.

Because $s^*_A$ is continuous and $t\leq s^*_A(a_2)$,
there exists $x_0$ such that $s^*_A(x_0)=t\in S^*_B(y)$.
By Lemma~\ref{lemma5}, we have $(x_0,s^*_A(x_0))$ lies on $eu_B(y,\cdot)$.

(4.2)When $t=\lim_x s^*_A(x)$.

Let $x_0=F_1^{-1}[P_B^*[t]]$, then $s^*_A(x_0)=t$.
\begin{eqnarray}
u_B(y)&=&(y-p^w(t))P^*_B[t]-p^p(t)\nonumber\\
&=&(y-p^w(s^*_A(x_0)))F_1[x_0]-p^p(s^*_A(x_0))\nonumber
\end{eqnarray}
So $(x_0,s^*_A(x_0))$ lies on $eu_B(y,\cdot)$.

Combined (4.1) and (4.2), we know $eu_B(y,\cdot)$ has common point with $s^*_A(x)$.
So $\forall y>\hat{y}$, there exists $x$ such that $s^*_A(x)\in S^*_B(y)$.
Hence,  $\cup_x Y(x)$ covers $(\hat{y},b_2]$.

(5)If $Y(x)$ is not a unique number.
Then there exists $y_1<y_2\in Y(x)$. For any $y\in(y_1,y_2)$.
$$Y(x_1)<y<Y(x_2)\ \forall x_1<x<x_2$$
Because conclusion in (4), it must be $y\in Y(x)$.
So $Y(x)$ is an interval.

Since there is countable non-overlap interval,
so for almost all $x$, $Y(x)$ contains only one element.
$\Box$

Proof of \textbf{Lemma~\ref{lemma8.5}}.

We consider two cases.

(1)$s^*_A(x)=\lim_{t\rightarrow a_1}s^*_A(t)$

By Lemma~\ref{lemma4}, $(x,s^*_A(x))$ lies on $eu_B(y_0,\cdot)$,
where $y_0=p^w(s^*_A(x))$, $u_B(y_0)=0$.
Furthermore, $\forall y>y_0$, we have $u_B(y)>0$(by just bidding $s^*_A(x)$).
So $\hat{y}=y_0$.

Because $\hat{y}\in Y(x)$ and $Y(x)\geq \hat{y}$, we have
$g(x)=\min Y(x)=\hat{y}$. Then for $y\leq g(x)$, the follower bids zero.
For $y>g(x)$, the follower bids larger than $s^*_A(x)$(o.w. the winning probability is zero,
which leads to zero utilty.)
So the winning probability of the leader with value $x$ is $F_2[g(x)]$.

(2)$s^*_A(x)>\lim_{t\rightarrow a_1}s^*_A(t)$
By Lemma~\ref{lemma5}, if $(x,s^*_A(x))$ does not lie on $eu_B(y,\cdot)$,
then $s^*_A(x)\notin S^*_B[y]$.

(2.1)When $g(x)>\hat{y}$

Since $u_B(g(x))>0$, then by Theorem~\ref{lemma2}, we have
$$S^*_B(y_1)\leq S^*_B(g(x))\leq S^*_B(y_2)\ y_1<g(x)<y_2$$
Then $S^*_B(y_1)<s^*_A(x)\leq S^*_B(y_2)$.
Hence, bidding $s^*_A(x)$, the leader's winning probability is $F_2[g(x)]$.

(2.2)When $g(x)=\hat{y}$

By definition of $\hat{y}$, we have $u_B(y)>0$ for all $y>\hat{y}$.
By Theorem~\ref{lemma2}, we have $g(x)\leq S^*_B(y)$.
For $y>g(x)$, the follower bids larger than $s^*_A(x)$.
For $y\leq g(x)$, the follower bids zero.
So the winning probability of the leader with value $x$ is $F_2[g(x)]$.
$\Box$

Proof of \textbf{Lemma~\ref{lemma12}}.

(1) By Lemma~\ref{lemma8}(3), we know $g(x)$ weakly increases.

(2)Suppose not, then there exists $x_0$ and $d$ such that $\forall x<x_0$, $g(x)<g(x_0)-d$.
By Lemma~\ref{lemma8}(4), pick any $y\in(g(x_0)-d,g(x_0))$, there exists $x_1$ such that $y\in Y(x_1)$.
Then $x_1<x_0$, we also have:
\begin{displaymath}
g(\frac{x_1+x_0}{2})\geq \sup Y(x_1)\geq y>g(x_0)-d
\end{displaymath}
Which contradicts to the supposition.
$\Box$

Proof of \textbf{Theorem~\ref{theorem8}}.

Look at Fig~\ref{fig:2}, $M_2$ is a mapping from $g\in O_2$ to $s\in O_1$, define $s=M_2(g)$ be the solution of
\begin{eqnarray}
\int^{x}_{a_1}f_1(t)g(t)dt=p^w(s_A(x))F_1[x]+p^p(s_A(x))\label{eqa10}
\end{eqnarray}

(1) We prove that $s_A(x)$ does not change after sorting and smoothing, i.e.  $M_2(g)\in O_1$.

Let $a=F_1[x]$, $b=-\int^{x}_{a_1}f_1(t)g(t)dt$, $s_A(x)=t(a(x),b(x))$.
By Lemma~\ref{lemma3.25}, $s_A(x)$ is unique and differentiable.
Since $g$ weakly increases, by Equation~(\ref{eqa10}), $g(x)F_1[x]\geq p^w(s_A(x))F_1[x]$.
Furthermore, $g(x)\geq p^w(s_A(x))$.
\begin{eqnarray}
s'(x)&=&\frac{\partial t}{\partial a}\cdot \frac{\partial a}{\partial x}+\frac{\partial t}{\partial b}\cdot \frac{\partial b}{\partial x}\nonumber\\
&=&\frac{f_1(x)g(x)-p^w(s_A(x))f_1(x)}{(p^w)'(s)F_1[x]+(p^p)'(s)}\geq 0\nonumber
\end{eqnarray}
So s weakly increases.

First, we define $u_B(y,t)$ and $\tilde{u}(y,t)$ as follows:
\begin{eqnarray}
u_B(y,t)&=&[y-p^w(s_A(t))]P_B[t]-p^p(s_A(t))\nonumber\\
\tilde{u}(y,t)&=&[y-p^w(s_A(t))]F_1[t]-p^p(s_A(t))\nonumber
\end{eqnarray}
So $u_B(y,t)$ is the follower's utility with type $y$ and bidding $s_A(t)$,
Since $s$ weakly increases, we have $P_B[t]>F_1[t]$, then $u_B(y,t)\geq \tilde{u}(y,t)$.
Furthermore $\max_tu_B(y,t)\geq \max_t\tilde{u}(y,t)$.
For any $t$, $u_B(y,t)=\tilde{u}(y, F^{-1}_1[P^*_B[s_A(t)]])$, i.e. $\tilde{u}(y,t)$ can achieve any value that $u_B(y,t)$ achieves.
Thus $u_B(y)=\max_tu_B(y,t)=\max_t\tilde{u}(y,t)$, to compute $u_B(y)$, we only need to focus on $\tilde{u}(y,t)$ instead.

Next, we prove that $(x,s_A(x))$ lies on $eu_B(g(x),\cdot)$.
Consider utility of the follower with value $g(x)$.
Since, leader's lowest bid is zero($s_A(a_1)=0$).
so the follower bids between the highest and the lowest of the leader's bid.
Then the derivative $\frac{\partial \tilde{u}}{\partial t}$ should be zero.
\begin{eqnarray}
\frac{\partial \tilde{u}}{\partial t}(g(x),t)&=&[g(x)-p^w(s^*_A(t))]f_1(t)-(p^w)'(s^*_A(t))\cdot(s^*_A)'(t)\cdot F_1[t]\nonumber\\
&&-(p^p)'(s^*_A(t))\cdot(s^*_A)'(t)\nonumber\\
&=&g(x)f_1(t)-g(t)f_1(t)\nonumber
\end{eqnarray}
It's easy to see that $\max_t \tilde{u}(g(x),t)=\tilde{u}(g(x),x)$.
Then $u_B(g(x))=\max_t\tilde{u}(y,t)=\tilde{u}(g(x),x)=[g(x)-p^w(s_A(x))]F_1[t]-p^p(s_A(x))$.
By definition, $(x,s_A(x))$ lies on $eu_B(g(x),\cdot)$.

Let $s^*_A$ defined as before based on $s$. Then $s_A(x)=eu_B(g(x),x)\leq s^*_A(x)\leq s_A(x)$, it must be
$s_A(x)=s^*_A(x)$. That means $s_A(x)$ does not change after sorting and smoothing, so $M_2(s)\in O_1$.

(2) We prove that $M_1\circ M_2=I$.
Suppose otherwise,
then there exists $g$ such that $M_1(M_2(g))=\tilde{g}\neq g$.
Let $s=M_2(g)$.
By Lemma~\ref{lemma9},
we have $\int^{x}_{a_1}f_1(t)\tilde{g}(t)dt=p^w(s^*_A(x))F_1[x]+p^p(s^*_A(x))$
On the other side, by the method used in $M_2$, we have $\int^{x}_{a_1}f_1(t)g(t)dt=p^w(s^*_A(x))F_1[x]+p^p(s^*_A(x))$
So $\int^{x}_{a_1}f_1(t)\tilde{g}(t)dt=\int^{x}_{a_1}f_1(t)g(t)dt$ for any $x$.
Since $g$ and $\tilde{g}$ are left continuous, so
if there $g\neq \tilde{g}$ for some number $x_0$ then $g\neq \tilde{g}$ for some
interval on the left side of $x_0$. Hence, the equation above will not always hold on that interval. Contradiction!
$\Box$

Proof of \textbf{Lemma~\ref{lemmasa1=0}}.

The leader's expected utility is
\begin{eqnarray}
&&\int_{a_1}^{a_2}\{[x-p^w(s^*_A(x))]F_2[g(x)]-p^p(s^*_A(x))\}f_1(x)dx\nonumber\\
&=&\int_{a_1}^{a_2}\{xF_2[g(x)]-p^w(s^*_A(x))F_2[g(x)]-p^p(s^*_A(x))\}f_1(x)dx\nonumber
\end{eqnarray}
When we function $g$ is fixed, $s^*_A(a_1)$ becomes smaller, $s^*_A(x)$ becomes smaller, then we get higher expected utility.
So in the optimal strategy, we must have $s^*_A(a_1)=0$, and $s^*_A(x)$ is the solution of
$\int^{x}_{a_1}f_1(t)g(t)dt=p^w(s^*_A(x))F_1[x]+p^p(s^*_A(x))$
$\Box$

Proof of \textbf{Theorem~\ref{theoremshg}}.

(1)Consider the first case, the other case is similar.
Otherwise we could always increase function $g(x), x\in L$ a little in the first case
and increase the utility. That contradicts to $g$ is optimal.

(2)Create function $\tilde{g}$ that has no image on $(0, b_1)$.
\begin{displaymath}
\tilde{g}(x)=\left\{
\begin{array}{ll}
0 & g(x)\in (0,b_1)\\
g(x) & o.w.
\end{array}
\right.
\end{displaymath}
$\tilde{g}\leq g(x)$, then $s^*_A$ based on $\tilde{g}$ is smaller than $s^*_A$ based on $g$.
While they keep the same winning probability as long as the winning probability $F_2[g]$ is non-zero.
So the expected utility weakly increases when using $\tilde{g}$ instead of $g$.
$\Box$

Proof of \textbf{Lemma~\ref{theorem1st}}.

By Theorem~\ref{theorem7}, function
$$t_0=1\int^{1}_{t_0}\frac{1}{x}dx$$
has a solution $t_0\approx 0.567$.
When $x>t_0$, we have $s^*_A(x)=\frac{1}{F_1[x]}\int_{t_0}^xf_1(t)g(t)dt=1-\frac{t_0}x$
The expected utility is
$$
\int^1_{t_0}(x-1+\frac{t_0}{x}dx\approx0.228
$$
$\Box$

\section{The assumptions are not necessary}
The proof idea is that we can achieve the same largest utility with or without assumptions
and we prove that the optimal strategy in different settings are similar.
\subsection{Assumption~\ref{ass1} is not necessary}

Without Assumption~\ref{ass1}, the follower can give any best response.
We prove that the maximal expected utility of the leader under both Assumptions is same as under only Assumption~\ref{ass2}.

First, the smooth method does not depend on Assumption~\ref{ass1}.
The proof of Theorem~\ref{lemma7} does not depend on Assumption~\ref{ass1}.
From same $s$, we create same $s^*_A$ no matter whether Assumption~\ref{ass1} works.

Second, from the proof of Theorem~\ref{lemma7}, we know there are only countable points $(x,s^*_A(x))$ that lies on multiple $eu$ curves.
For the most value $x$, $(x,s^*_A(x))$ lies on a unique $eu$ curve,
and the winning probability is $F_2[g(x)]$
no matter how the follower chooses between his best responses.
So for same $s^*_A$, the total expected utility of the leader  under both Assumptions is same as under only Assumption~\ref{ass1}.

In conclusion, the optimal strategy of the leader under both Assumptions is same as under only Assumption~\ref{ass2}.
The purpose of Assumption~\ref{ass1} is to define winning probability of all the values, not just on most of the values.

\subsection{Assumption~\ref{ass2} is not necessary}

When considering some other tie-breaking rule rather than always assigning the good to $B$,
our method still works and the optimal strategy is the same.

We restate our situation, without Assumption~\ref{ass1}, the follower may choose any best strategy now.
We would like to prove that for arbitrary tie-breaking rule,
the optimal strategy is same as the optimal strategy under Assumption~\ref{ass2}.

The prove idea is the following:
\begin{enumerate}
\renewcommand{\labelenumi}{(\theenumi)}
\item If tie-breaking rule is assigning the good to $A$, assume $A$'s best strategy is $s$.
\item If tie-breaking rule is assigning the good to $B$, $A$ adopts strategy $s$.
\item If tie-breaking rule is assigning the good to $B$, assume $A$'s best strategy is $\tilde{s}$.
\item If tie-breaking rule is assigning the good to $A$, $A$ adopts strategy $\tilde{s}$.
\end{enumerate}
We will prove that the leader's utility in these four settings has the following order $(4)=(3)\geq (2)=(1)$.
Hence, $s$ is optimal under arbitrary tie-breaking rule is equivalent to $s$ optimal under Assumption~\ref{ass2}.

First we should notice that without Assumption~\ref{ass2}, the follower may not have best response yet.
To ensure the follower choose best response when tie-breaking rule is assign good to $A$, we create $t^+$ with the property that
\begin{eqnarray}
p^w(t^+)&=&p^w(t)\nonumber\\
p^p(t^+)&=&p^p(t)\nonumber\\
P_B(t^+)&=&\lim_{x\rightarrow t^+}P_B(x)\qquad x\textrm{ approaches $t$ from the right side, i.e. } x>t \nonumber\\
x_1<&t^+&<x_2\forall x_1<t<x_2\nonumber\\
t&<&t^+\nonumber
\end{eqnarray}
Now when we say follower's best response, we also include the best bid $t^+$.
The follower can bid $t^+$ to represent a bid that arbitrarily approximate the $t$, but with a weakly higher winning probability than $t$.
Then in $(1)$, $B$ always has best response.
Suppose otherwise $B$ does not have best response with value $y$,
then there is a series of bid $\{t_n\}\rightarrow t$, approaches the largest utility $u_B(y)$.
\begin{eqnarray}
u_B(y)&=&\lim(y-p^w(t_n))P_B[t_n]-p^p(t_n)\nonumber\\
&=&(y-p^w(\lim t_n))\lim P_B[t_n]-p^p(\lim t_n)\nonumber\\
&=&(y-p^w(t))P_B[t]-p^p(t)\textrm{ or }(y-p^w(t^+))P_B[t^+]-p^p(t^+)\nonumber
\end{eqnarray}
So either $t$ or $t^+$ will become the best response.

Now we start to prove the leader's utility in $(2)$ and $(1)$ is equal.
Wlog, we can assume $s$ is sorted in weakly increasing order.

Compared to $(1)$, $B$ has advantage in $(2)$, so $u_B^{(2)}(y)\geq u_B^{(1)}(y)$.
(By $u_B^{(1)}$, we mean the follower's utility function in setting (1), others are similar).
Here, $u_B^{(2)}$ and $u_B^{(1)}$ denote the follower's utility respectively in $(2)$ and $(1)$.
On the other side, the follower could bid $t^+$ in $(2)$ to guarantee the same utility as bidding $t$ in $(1)$.
So we have $u_B^{(2)}(y)=u_B^{(1)}(y)$ and
\begin{eqnarray}
t\in S^{(2)}_B(y)\Rightarrow t^+\in S^{(1)}_B(y)\label{eqa11}
\end{eqnarray}
Here, $S^{(2)}_B(y)$,$S^{(1)}_B(y)$ denote the follower's best response set respectively in $(2)$ and $(1)$.

Now, we prove that the number of the leader's value that the winning probability change in $(1)$ and $(2)$, is countable.
Then the expected the leader's utility between $(1)$ and $(2)$ are same.

Since we introduce $t^+$, we need new version of Theorem~\ref{lemma2}.
\begin{theorem}
$y_1<y_2$, if $\exists a\in S^{(1)}_B(y_1), b\in S^{(1)}_B(y_2)$ but $a> b$, then
either $a=b^+$, $P^{(1)}_B[b^+]=P^{(1)}_B[b]$
or $S^{(1)}_B(y_1)\subseteq S^{(1)}_B(y_2)$, $u^{(1)}_B(y_1)=u^{(1)}_B(y_2)=0$, $P^{(1)}_B[b]=P^{(1)}_B[a]=0$.
\label{newlemma2}
\end{theorem}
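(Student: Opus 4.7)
The plan is to mirror the proof of Theorem~\ref{lemma2} and graft on a single new case that handles the infinitesimal bid $b^+$. I will first record the two best-response inequalities, namely
\begin{eqnarray}
u^{(1)}_B(y_1) = (y_1-p^w(a))P^{(1)}_B[a]-p^p(a) &\geq& (y_1-p^w(b))P^{(1)}_B[b]-p^p(b) \nonumber\\
u^{(1)}_B(y_2) = (y_2-p^w(b))P^{(1)}_B[b]-p^p(b) &\geq& (y_2-p^w(a))P^{(1)}_B[a]-p^p(a) \nonumber
\end{eqnarray}
and sum them to obtain $(y_2-y_1)(P^{(1)}_B[b]-P^{(1)}_B[a])\geq 0$, hence $P^{(1)}_B[b]\geq P^{(1)}_B[a]$. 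Since $P^{(1)}_B$ is weakly monotone on the extended bid space (original bids together with their $t^+$ versions) and $a>b$ in that total order, we also have $P^{(1)}_B[a]\geq P^{(1)}_B[b]$, so $P^{(1)}_B[a]=P^{(1)}_B[b]$ and both inequalities above are tight.

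Setting $P:=P^{(1)}_B[a]=P^{(1)}_B[b]$ and subtracting the two tight equalities gives the single relation
\begin{displaymath}
p^p(a)-p^p(b)+P\bigl(p^w(a)-p^w(b)\bigr)=0.
\end{displaymath}
Both summands on the left are nonnegative (since $p^w$ and $p^p$ are weakly increasing and $a>b$), so both vanish. Hence $p^p(a)=p^p(b)$ and either $P=0$ or $p^w(a)=p^w(b)$. I then split into two cases according to whether $a=b^+$. If $a=b^+$, the payments agree automatically and the relation above carries no extra information; the already-derived equality $P^{(1)}_B[b^+]=P^{(1)}_B[b]$ is exactly the first alternative. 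If $a\neq b^+$, I will verify using the total order on the extended bid space that $p^w(a)+p^p(a)>p^w(b)+p^p(b)$ strictly, which together with $p^p(a)=p^p(b)$ forces $p^w(a)>p^w(b)$ and hence $P=0$. Then $u^{(1)}_B(y_1)=-p^p(a)\geq 0$ forces $p^p(a)=p^p(b)=0$, giving $u^{(1)}_B(y_1)=u^{(1)}_B(y_2)=0$ and $P^{(1)}_B[a]=P^{(1)}_B[b]=0$.

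The inclusion $S^{(1)}_B(y_1)\subseteq S^{(1)}_B(y_2)$ in the second alternative will then follow by picking any $c\in S^{(1)}_B(y_1)$, using the identity $(y_1-p^w(c))P^{(1)}_B[c]-p^p(c)=0$ to rewrite the utility at $y_2$ as $(y_2-y_1)P^{(1)}_B[c]$, and sandwiching it between $0$ (from $P^{(1)}_B[c]\geq 0$) and $u^{(1)}_B(y_2)=0$, which yields $P^{(1)}_B[c]=0$ and $c\in S^{(1)}_B(y_2)$. The main obstacle will be bookkeeping the case split cleanly when $a$ or $b$ themselves are $t^+$-type bids: I need the strict increment $p^w(a)+p^p(a)>p^w(b)+p^p(b)$ to hold whenever $a>b$ and $a\neq b^+$, which requires tracing through the order definition of the $t^+$ bids (so that $a>b$ with $a\neq b^+$ witnesses some underlying original bid strictly between $b$ and $a$) rather than simply reusing the strict monotonicity of $p^w+p^p$ on the real line.
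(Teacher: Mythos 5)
Your proof is correct and follows the same route as the paper's proof of Theorem~\ref{lemma2}, which the paper asserts (without writing it out) carries over to this extended-bid setting. You correctly isolate the one step that does not carry over verbatim — strict monotonicity of $p^w+p^p$ fails on the extended bid order exactly when $a=b^+$ — and that failure is precisely the source of the first alternative in the statement; the remaining case analysis (forcing $P^{(1)}_B[a]=P^{(1)}_B[b]=0$, $p^p(a)=p^p(b)=0$, and $u^{(1)}_B(y_1)=u^{(1)}_B(y_2)=0$ when $a\neq b^+$) and the sandwich argument for $S^{(1)}_B(y_1)\subseteq S^{(1)}_B(y_2)$ are sound.
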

We ommitted the proof of this theorem, because the proof is exact the same.

When $y>\hat{y}$(Recall that $\hat{y}=\sup\{y|u_B(y)=0\}$), by this new theorem, $S_B(y)$ is still in order.
If $\hat{y}\leq y_1<y_2$ then $S_B(y_1)\cap S_B(y_2)$ contains at most two elements, something like $t$ and $t^+$.
When $y<\hat{y}$, the follower's best response is smaller than $s_A(x),\forall x>a_1$.
So it does not matter how the follower chooses best response when $y<\hat{y}$, and we let the follower bid zero.
Now we can assume the follower's bids is weakly increasing.

Suppose the leader's winning probability changes with value $x$.
In $(1)$, $A$ wins against $B$ with value $y<y_1$, and loses against $B$ with value $y>y_1$.
In $(2)$, $A$ wins against $B$ with value $y<y_2$, and loses against $B$ with value $y>y_2$.
If we cannot make this supposition, there must exist an interval $L$ and $t$ such that
$\{t,t^+\}\subseteq S_B^{(1)}(y),\ y\in L$,
then $(x,s_A(x))$ must lie on multiple eu lines, which has zero affect to the final expected utility.

Wlog, the supposition is still feasible.
There are two cases to consider: $y_1<y_2$ and $y_1>y_2$.

Case 1:$y_1<y_2$

In (2), we have $\exists t_2\in S_B^{(2)}(y_2-\epsilon)$ such that $t_2<s_A(x)$. Here $\epsilon$ denotes some small enough positive number.
By Equation~(\ref{eqa11}), we have $t_2^+\in S_B^{(1)}(y_2-\epsilon)$.
In (1), we have $t_1\in S^{(1)}_B(y_1+\epsilon)$ such that $t_1\geq s_A(x)$.

So we have $t_2^+<s_A(x)\leq t_1$, while $t_2^+\in S_B^{(1)}(y_2-\epsilon)$ and $t_1\in S^{(1)}_B(y_1+\epsilon)$.
By Theorem~\ref{lemma2}, as long as $\epsilon$ is small enough such that $y_1+\epsilon<y_2-\epsilon$, we get contradiction.
So $y_1$ will be never smaller than $y_2$.

Case 2:$y_1>y_2$

In (2), we have $\exists t_2\in S_B^{(2)}(y_2+\epsilon)$ such that $t_2\geq s_A(x)$. Here $\epsilon$ denotes some small enough positive number.
By Equation~(\ref{eqa11}), we have $s_A(x)\leq t_2^+\in S_B^{(1)}(y_2+\epsilon)$.
In (1), we have $t_1\in S^{(1)}_B(y_1-\epsilon)$ such that $t_1\leq s_A(x)$.

Then we have $y_1<y_2$, $t_2\geq t_1$, $t_2^+\in S_B^{(1)}(y_1-\epsilon)$ and $t_1\in S_B^{(1)}(y_2+\epsilon)$.
Using same argument as Theorem~\ref{lemma2}, we have $P_B^{(1)}[t_2^+]=P_B^{(1)}[t_1]$.

Notice that $t_2\geq s_A(x)\geq t_1$. If $t_2>t_1$, by Theorem~\ref{lemma2}, we get $P_B^{(1)}[t_2^+]=P_B^{(1)}[t_1]=0$.
However, since $s$ is weakly increasing, we have $P_B^{(1)}[t_2^+]\geq P^{(1)}_B[s_A(t)^+]\geq F_1[t]>0$.
Thus $t_2=t_1=s_A(x)$.
We also have $P_B^{(1)}[t_2^+]=P_B^{(1)}[t_2]$, so there is no mass point of $A$'s bid on $t_2$, i.e. $s_A(x)$.
Furthermore, we can conclude among the leader's value, only $A=x$ bids $s_A(x)$.
Then $P_B^{(2)}[t_2]=P_B^{(2)}[t_2]$, $t_2\in S_B^{(2)}(y_1-\epsilon)$.
By assumption, we have $t_2=t_1\in S_B^{(2)}(y_2+\epsilon)$.
Using Theorem~\ref{lemma2}, we can prove that
\begin{displaymath}
S_B^{(2)}(y)=\{t_2\}=\{s_A(x)\}\ \forall y\in(y_2+\epsilon, y_1-\epsilon)
\end{displaymath}
Because the number of interval is countable,
then the number $s_A(x)$ corresponds to the interval is countable,
furthermore the number of $x$ with which the leader has different winning probabilities in $(1)$ and $(2)$ is countable.

$(3)\geq (2)$ is obvious.

At we talk about $(4)=(3)$.
By Lemma~\ref{lemmasa1=0}, we have $s^*_A(a_1)=0$.
If there is no constant interval in $s^*_A$, i.e. $\nexists x$ such that $s^*_A(x)=\lim_{t\rightarrow a_1}s^*_A(t)$,
then there is no tie problem to concern. So it's same optimal strategy whether under Assumption~\ref{ass2} or not.

If there is constant interval in $s^*_A$, i.e. $\hat{x}$ exists.
We consider using the same $s^*_A$ strategy and find the winning probability of the most leader value, does not change.

When $x<\hat{x}$, the leader with value $x$ never wins. In (3) the follower bids zero. In (4), the follower bids $0^+$.
When $x>\hat{x}$, for almost all value $x$, $(x,s^*_A(x))$ lies on a unique eu curve.

When $t>0$, we have $P_B^{(3)}(t)=P_B^{(4)}(t)=P_B^{(4)}(t^+)$, so
$$t>0, t\in S^{(3)}_B(y)\Leftrightarrow t\in S^{(4)}_B(y)$$
Moreover, we have
\begin{eqnarray}
\forall y<g(x) & S^{(3)}_B(y)=S^{(4)}_B(y)&<s^*_A(x)\nonumber\\
\forall y>g(x) & S^{(3)}_B(y)=S^{(4)}_B(y)&>s^*_A(x)\nonumber
\end{eqnarray}
So the winning probability of the leader with value $x$ does not change.
So we have $(4)=(3)$.
\end{appendix}
\end{document}